\DeclareMathOperator{\RE}{Re} 
\DeclareMathOperator{\IM}{Im} 
\DeclareMathOperator{\sn}{sn} 
\DeclareMathOperator{\cn}{cn} 
\DeclareMathOperator{\dn}{dn} 
\newcommand{\cc}{^*}
\newcommand{\pd}[2]{\frac{\partial #1}{\partial #2}}
\newcommand{\DD}[3][]{\frac{\d^{#1} #2}{\d #3^{#1}}}
\newcommand{\D}[2][]{\frac{\d^{#1}}{\d #2^{#1}}}
\newcommand{\abs}[1]{\left| #1 \right|}
\newcommand{\norm}[1]{\left| \left| #1 \right| \right|}
\newcommand{\R}{\mathbb{R}} 
\newcommand{\N}{\mathbb{N}} 
\newcommand{\Z}{\mathbb{Z}} 
\newcommand{\C}{\mathbb{C}} 
\newcommand{\cB}{\mathcal{B}} 
\newcommand{\cL}{\mathcal{L}} %
\newcommand{\cK}{\mathcal{K}} %
\newcommand{\lt}{\left}
\newcommand{\rt}{\right}
\newcommand{\Langle}{\lt\langle}
\newcommand{\Rangle}{\rt\rangle}
\newcommand{\ip}[1]{\Langle #1 \Rangle}
\newcommand{\eps}{\epsilon}
\newcommand{\lam}{\lambda}
\newcommand{\0}[1]{\mathcal{O}\lt(#1\rt)}
\newcommand{\bigo}[1]{\mathcal{O}} 
\renewcommand{\L}{\mathcal{L}}
\newcommand{\eqnref}{\eqref} 
\let\oldchi\chi
\renewcommand{\chi}{\raisebox{1pt}{$\oldchi$}}
\newcommand{\ie}{\textit{i.e.},~}
\newcommand{\eg}{\textit{e.g.},~}
\newcommand{\vs}{\textit{vs.}~}
\newenvironment{definition}[1][Definition]{\begin{trivlist}
\item[\hskip \labelsep {\bfseries #1}]}{\end{trivlist}}
\newtheorem{theorem}{Theorem}
\newtheorem{lemma}{Lemma}[theorem]
\theoremstyle{definition}
\newtheorem{remark}{Remark}[theorem]
\newcommand{\te}[1]{#1}
\newcommand{\tee}[1]{#1}
\newcommand{\teee}[1]{#1}
\title{The orbital stability of elliptic solutions of the Focusing Nonlinear
Schr\"odinger Equation}
\author{Bernard Deconinck and Jeremy Upsal
  \\
Department of Applied Mathematics,\\
University of Washington,\\
Seattle, WA 98195, USA}
\begin{document}

\maketitle
\begin{abstract}
      We examine the stability of the elliptic solutions of the focusing
      nonlinear Schr\"{o}dinger equation (NLS) with respect to subharmonic
      perturbations.  Using the integrability of NLS, we discuss the spectral
      stability of the elliptic solutions, establishing that solutions of
      smaller amplitude are stable with respect to larger classes of
      perturbations. We show that spectrally stable solutions are orbitally
      stable by constructing a Lyapunov functional using higher-order conserved
      quantities of NLS.
\end{abstract}

\section{Introduction}
The focusing, one-dimensional, cubic Nonlinear Schr\"{o}dinger equation (NLS),
\begin{align}
    i\Psi_t + \frac12 \Psi_{xx} + \Psi\abs{\Psi}^2 = 0,\label{eqn:NLS}
\end{align}
is a universal model for a variety of physical phenomena
\cite{chen2012introduction, gross1961structure, kivshar2003optical,
pitaevskii1961vortex, sulem2007nonlinear, zakharov1968stability}. In 1972,
Zakharov and Shabat \cite{zakharov_shabat} found its Lax Pair and the explicit
expression for the one-soliton solution. The \te{orbital} stability of the soliton was first
proved in 1982 by Cazenave and Lions \cite{cazenave-Lions} and later by
Weinstein \cite{MR820338} using Lyapunov techniques, as used here. Even with
such a rich history, a full stability analysis in the periodic setting has not
been completed. The simplest periodic solutions are the genus-one or elliptic
solutions \teee{(Section~\ref{section:ellipticSolns})}. Rowlands
\cite{rowlands1974stability} was the first to study their stability using
perturbation methods. Since then, Gallay and H\v{a}r\v{a}g\c{u}s have examined
the stability of small-amplitude elliptic solutions
\cite{gallay_haragus_smallAmplitude} and proven \te{orbital} stability with
respect to perturbations of the same period as the underlying solution
\cite{gallay_haragus_OrbitalStability} (\ie coperiodic perturbations).
Gustafson, Le Coz, and Tsai \cite{gustafson2017stability} establish instability
for the elliptic solutions with respect to sufficiently large perturbations. The
analysis of spectral instability with respect to perturbations of an integer
multiple of the period (\ie subharmonic perturbations) was completed in
\cite{deconinck2017stability}.

In this work we build upon the results in \cite{deconinck2017stability} to
examine the stability of elliptic solutions of arbitrary amplitude. Only
classical solutions of \eqref{eqn:NLS} and classical perturbations of those
solutions are considered in this paper. \teee{An outline of the steps followed and the conclusions obtained is given below.

\begin{enumerate}

\item Spectral stability is considered in Section \ref{section:linearStability}.
  This is motivated by considering the simpler case of the well-known Stokes
  waves in Section \ref{section:Stokes}. For these solutions, all operators
  involved have constant coefficients, and all calculations are explicit. We get
  to the spectrum of the operator obtained by linearizing about a solution
  through its connection with the \textit{Lax spectrum}. To this end, we
  introduce the Lax pair and its spectrum in Section
  \ref{section:stability_setup}. The results in Section
  \ref{section:LaxSpectrumDef} are from \cite{deconinck2017stability} while the
  results in Section \ref{section:LaxSpectrum} and all subsequent sections are
  new. Section \ref{section:EllipticSpectralStability} contains our main
  spectral stability result: solutions are spectrally stable with respect to
  subharmonic perturbations if the solution parameters meet a given sufficiency
  condition (Theorem \ref{thm:mainNTPStabilityResult}). This condition is shown
  to be necessary in most cases and is discussed in Appendix \ref{Appendix}.  In
  essence, Theorem \ref{thm:mainNTPStabilityResult} establishes that solutions
  of ``smaller amplitude'' are spectrally stable with respect to a larger class
  of subharmonic perturbations, \ie subharmonic perturbations of larger period.
  The notion of ``smaller amplitude'' is made more precise in
  Section~\ref{section:EllipticSpectralStability}.

\item In Section \ref{section:krein}, we examine how instabilities depend on the
  parameters of the solution. The orbital stability results of
  Section~\ref{section:orbitalStability} rely crucialy on understanding the
  spectrum for stable compared to unstable solutions. Thus we carefully examine
  the transition from stable to unstable dynamics as solution parameters are
  changed.

\item Finally, in Section \ref{section:orbitalStability} we use a Lyapunov
  method \cite{grillakis1987stability,kapitula2007stability, maddocksSachs} to
  prove (nonlinear) orbital stability in the cases where spectral stability
  holds. Our main result is found at the end of the section: we establish the
  orbital stability of almost all solutions that are spectrally stable. The only
  solutions for which such a result eludes us are those whose solution
  parameters are on the boundary of the parameter regions specifying with
  respect to which subharmonic perturbations the solutions are spectrally
  stable.

\end{enumerate}
}

This paper is part of an ongoing research program of analyzing the stability of
periodic solutions of integrable equations (\cite{bottman2009kdv,bottman2011elliptic,
  deconinck2010orbital, SGstability,
mkdvOrbitalStability, deconinck2017stability, kdvFGOrbitalStability}). The present work
is the first in the program to establish a nonlinear stability result for periodic
solutions for which the underlying Lax pair is not self adjoint.

\section{Elliptic solutions of focusing NLS}\label{section:ellipticSolns}
In this paper we study solutions of \eqref{eqn:NLS} whose only change in time is
through a constant phase-change. Such solutions are stationary solutions of
\begin{align}
  i \psi_t + \omega \psi + \frac12 \psi_{xx} + \psi\abs{\psi}^2 = 0,
  \label{eqn:stationaryPDE}
\end{align}
found by defining $\Psi(x,t) = e^{-i\omega t}\psi(x,t)$. Time-independent
solutions to \eqref{eqn:stationaryPDE} satisfy
\begin{align}
  \omega \phi + \frac12 \phi_{xx} + \phi\abs{\phi}^2 =
  0\label{eqn:stationaryODE},
\end{align}
and are expressed in terms of elliptic functions as
\begin{align}
    \Psi = e^{-i\omega t}\phi(x) =  R(x)e^{i\theta(x)}e^{-i\omega t}
    \label{eqn:ellipticSolns},
\end{align}
with
\begin{subequations}
\begin{align}
 R^2(x) &= b-k^2\sn^2(x,k) \label{eqn:Rsquared}, &
 \omega &= \frac12(1+k^2-3b),\\
 \theta(x) &= c\int_0^x \frac{1}{R^2(y)}~\d y, &
 c^2 &= b(1-b)(b-k^2) \label{eqn:c},
\end{align}
\end{subequations}
where $\sn(x,k)$ is the Jacobi elliptic sn function with elliptic modulus $k$
\cite[Chapter 22]{NIST:DLMF}. The parameters $b$ and $k$ are constrained by
\begin{align}
 0\leq k < 1, \qquad
 k^2 \leq b \leq 1 \label{eqn:parameterSpace},
\end{align}
see Figure \ref{fig:parameterSpace}. The solutions formally limit to the soliton
as $k\to 1$, which is omitted from our studies.
When $k=0$ and $b\neq 0$, \eqref{eqn:ellipticSolns} reduces to a so-called
Stokes wave (Section \ref{section:Stokes}). The boundary values, $b=k^2$ and
$b=1$, are special cases. In both cases $c= 0$ so $\theta=0$ and the solutions
are said to have trivial phase. When $c\neq 0$, the solutions have non-trivial
phase (NTP).  We call $\phi(x) = k\cn(x,k)$ and $\phi(x) = \dn(x,k)$ the cn and
dn solutions corresponding to $b=k^2$ and $b=1$, respectively. Here $\cn(x,k)$
and $\dn(x,k)$ are the Jacobi elliptic cn and dn functions with elliptic modulus
$k$ \cite[Chapter 22]{NIST:DLMF}. The trivial-phase solutions are periodic, with
periods $4K(k)$ and $2K(k)$ for the cn and dn solutions respectively, where
\begin{align}
    K(k) &:= \int_0^{\pi/2} \frac{\d y}{\sqrt{1-k^2\sin^2(y)}}\label{eqn:Kk},
\end{align}
the complete elliptic integral of the first kind \cite[Chapter 19]{NIST:DLMF}.

\begin{remark}
  The nontrivial-phase solutions are typically quasi-periodic but only the
  $x$-periodic amplitude $R^2(x)$ appears in our analysis. Therefore, unless
  otherwise stated, any mention of the periodicity of the solutions is in
  reference to the period of the amplitude which is $T(k) = 2K(k)$ for all
  solutions.
\end{remark}

\begin{figure}
  \centering
  \begin{tikzpicture}[scale=5]
    \node[below] (origin) at (-0.05,0) {0};
    \draw[->] (-0.1,0) -- (1.2,0) node[right] {$k$};
    \draw[->] (0,-0.1) -- (0,1.2) node[above] {$b$};
    \filldraw [color=blue, fill=gray!30, line width=0.7mm, domain=0:1, scale=1.0, smooth,
        variable=\x ]
          (0,0)
          -- plot ({\x},{(\x)^2})
          -- (0,1)
          -- cycle;
    \node[draw, circle, black, thick, fill=white, inner sep=0.7mm] (11) at (1,1) {};
    \draw (1, -0.03) -- (1, 0.03) node[below, yshift=-3mm] {$1$};
    \draw (-0.03, 1) -- (0.03, 1) node[left, xshift=-3mm] {$1$};
    \node[align=center, inner sep=0.8mm, draw] at (0.38, 0.7) {Nontrivial-phase
            \\ (NTP) solutions};
    \node[draw] (solitons) at (1.5,1.2) {soliton solution};
    \draw[decoration={markings,mark=at position 1 with
      {\arrow[scale=3,>=stealth]{>}}}, postaction={decorate}] (solitons) --
      (11);
    \node[draw] (dn) at (0.6, 1.2) {dn solutions};
    \node (dnanchor) at (0.5, 1.0) {};
    \draw[decoration={markings,mark=at position 1 with
      {\arrow[scale=3,>=stealth]{>}}}, postaction={decorate}] (dn) --
      (dnanchor);
    \node (cnanchor) at (0.6, 0.36) {};
    \node[draw] (cn) at (1.1, 0.36) {cn solutions};
    \draw[decoration={markings,mark=at position 1 with
      {\arrow[scale=3,>=stealth]{>}}}, postaction={decorate}] (cn) --
      (cnanchor);
    \node (stokesanchor) at (0, 0.66) {};
    \node[draw] (stokes) at (-0.5, 0.66) {Stokes wave solutions};
    \draw[decoration={markings,mark=at position 1 with
      {\arrow[scale=3,>=stealth]{>}}}, postaction={decorate}] (stokes) --
      (stokesanchor);
    \node[circle, fill, inner sep=1.2pt, color=green] (one-quarter) at (0,0.25)
                                                                       {};
    \node[circle, fill, inner sep=1.2pt, color=green] (one-ninth) at (0,0.1111)
                                                                       {};
    \node[circle, fill, inner sep=1.2pt, color=green] (one-sixteenth) at
                                                            (0,0.0625) {};
    \node[circle, fill, inner sep=1.2pt, color=green] (one-thirty-second) at
                                                            (0,0.04) {};
    \node[align=center, draw] (critical) at (-0.54, 0.15625) {Stokes wave
    critical\\ stability values};
    \draw[decoration={markings,mark=at position 1 with
      {\arrow[scale=2,>=stealth]{>}}}, postaction={decorate}] (critical) --
      (one-quarter);
    \draw[decoration={markings,mark=at position 1 with
      {\arrow[scale=2,>=stealth]{>}}}, postaction={decorate}] (critical) --
      (one-ninth);
    \draw[decoration={markings,mark=at position 1 with
      {\arrow[scale=2,>=stealth]{>}}}, postaction={decorate}] (critical) --
      (one-sixteenth);
    \node (moreCritical) at (-0.15,0.05) {\Large$\cdots$};
  \end{tikzpicture}
 \caption{\label{fig:parameterSpace}The parameter space for the elliptic
 solutions \eqref{eqn:ellipticSolns} with solution regions labeled. The first 4
 Stokes wave stability bounds are plotted in green dots on the line $k=0$, at
 which $b = 1/P^2$ for $P\in\{1,2,3,4\}$
 \eqref{eqn:TightStokesSubharmonicStabilityCriteria}.}

\end{figure}
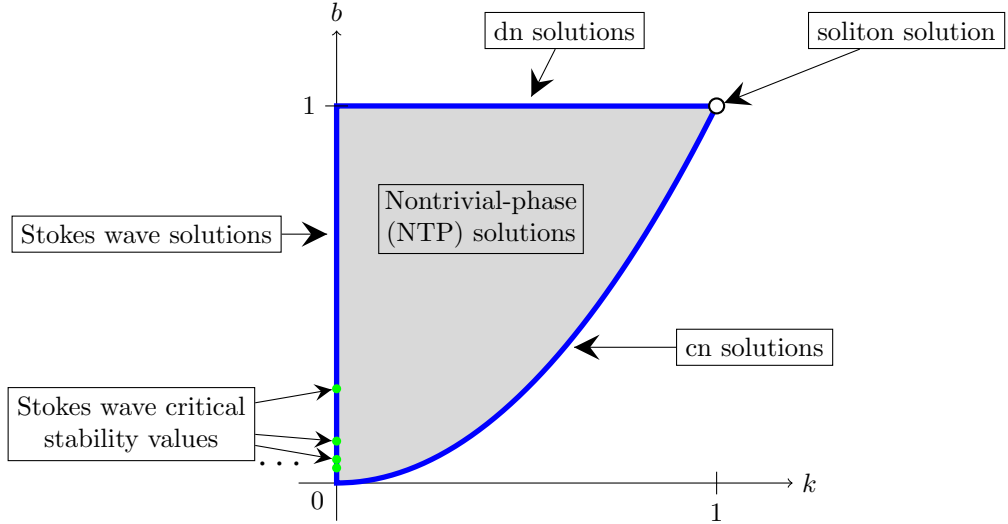

The elliptic solutions can be written in terms of Weierstrass elliptic functions
via
\begin{align}
 \wp( z + \omega_3; g_2, g_3) - e_3 = \lt(\frac{K(k) k}{\omega_1}\rt)^2\sn^2
 \lt(\frac{K(k) z}{\omega_1}\rt) \label{eqn:JacobiToWeierstrass},
\end{align}
where $\wp(z; g_2,g_3)$ is the Weierstrass elliptic $\wp$ function \cite[Chapter
23]{NIST:DLMF} with lattice invariants $g_2,~g_3$ and $\omega_1$ and $\omega_3$
are the half-periods of the Weierstrass lattice. Lastly, $e_1,$ $e_2$, and $e_3$
are the zeros of the polynomial $4t^3 - g_2 t - g_3$, and
\begin{subequations}
\begin{align}
     e_1 &= \frac13(2-k^2),
     \hspace{0.1\textwidth} e_2 = \frac13(2k^2-1),
     \hspace{0.1\textwidth} e_3 = -\frac13(1+k^2)\label{eqn:e0e1e2e3},\\
      g_2 & = \frac43(1-k^2+k^4),
    \hspace{0.2\textwidth} g_3 =
\frac4{27}(2-3k^2-3k^4+2k^6)\label{eqn:g2g3},\\
    \omega_1 &= \int_{e_1}^\infty \frac{\d z}{\sqrt{4z^3 - g_2 z - g_3}} = K(k)
    \label{eqn:omega1omega3},\qquad
    \omega_3 = \int_{-e_3}^\infty \frac{\d z}{\sqrt{4z^3 - g_2z-g_3}} =
    iK\lt(\sqrt{1-k^2}\rt).
\end{align}
\end{subequations}
The Weierstrass form of the elliptic solutions is explained in more detail in
\cite[Section 3.1.3]{painleveHandbook}.

\section{Spectral stability}\label{section:linearStability}

\noindent Spectral stability of elliptic solutions is examined by considering
\begin{align}
 \Psi(x,t) = e^{-i\omega t}e^{i\theta(x)}\lt(R(x) + \eps u(x,t) + \eps i
 v(x,t)\rt) + \0{\eps^2},\label{eqn:perturbation}
\end{align}
where $\eps$ is a small parameter and $u$ and $v$ are real-valued functions of
$x$ and $t$. Substituting this into \eqref{eqn:NLS} and keeping only first-order
in $\eps$ terms gives an autonomous ODE in $t$. Separating variables $(u(x,t),
v(x,t))=e^{\lam t}(U(x),V(x))$ results in the spectral problem
\begin{align}
 \lam \begin{pmatrix} U \\ V \end{pmatrix} =
    \begin{pmatrix} -S & \cL_- \\ -\cL_+ & -S
      \end{pmatrix}\begin{pmatrix} U \\ V \end{pmatrix}
        = J\begin{pmatrix} \cL_+ & S \\ -S &\cL_-
          \end{pmatrix}\begin{pmatrix} U \\ V \end{pmatrix}
          = J\mathscr{L} \begin{pmatrix}  U \\ V \end{pmatrix}
            = \cL \begin{pmatrix} U \\ V
     \end{pmatrix}\label{eqn:spectralProblem},
\end{align}
where
\begin{align}
  \cL = J \mathscr{L},
\end{align}
and
\begin{align}
  \begin{split}
 \cL_- &= -\frac12 \partial_x^2 - R^2(x) - \omega + \frac{c^2}{2R^4(x)}
 ,\\
 \cL_+ &= -\frac12 \partial_x^2 - 3R^2(x) - \omega + \frac{c^2}{2R^4(x)}
 ,\\
 S &= \frac{c}{R^2(x)} \partial_x - \frac{c R'(x)}{R^3(x)}
 .
\end{split}\label{eqn:Lcoeffs}
\end{align}
The stability spectrum is defined as
\begin{align}
  \sigma_{\cL} = \lt\{ \lam \in \C : U, V \in
  C_b^0(\R)\rt\}\label{eqn:stabilitySpectrumDef},
\end{align}
where $C_b^0(\R)$ is the space of real-valued continuous functions, bounded on
the closed real line. Due to the Hamiltonian symmetry of the spectrum
\cite{haragus_kapitula-SpectraPeriodicWaves}, an elliptic solution is spectrally
  stable to perturbations in $C_b^0(\R)$ if $\sigma_{\cL}\subset i\R$.

\subsection{Stability with respect to subharmonic perturbations}
\label{section:floquet}
The elliptic solutions are not stable with respect to general bounded
perturbations \cite{deconinck2017stability}. Therefore, we restrict to
\textit{subharmonic perturbations}. Subharmonic perturbations are those periodic
perturbations whose period is an integer multiple of the fundamental period of a
given elliptic solution. Since the operator $\cL$ has periodic coefficients
\eqref{eqn:Lcoeffs}, the eigenfunctions of the spectral problem
\eqref{eqn:spectralProblem} may be decomposed using a Floquet-Bloch
decomposition \cite{FFHM},
\begin{align}
 \begin{pmatrix} U(x) \\ V(x) \end{pmatrix} = e^{i\mu x}\begin{pmatrix}
 \hat U_\mu(x) \\ \hat V_\mu(x) \end{pmatrix}, \label{eqn:floquet-bloch}
\end{align}
where $\hat U_\mu, \hat V_\mu$ are $T(k)$ periodic and $\mu \in [0, 2\pi/T(k))$.

\begin{definition}
  A \textit{$P$-subharmonic perturbation} of a solution is a perturbation of
  integer multiple $P$ times the period of the solution. A $1$-subharmonic
  perturbation is called a \textit{coperiodic perturbation}.
\end{definition}

\noindent For $P$-subharmonic perturbations,
\begin{align}
  \mu = m\frac{2\pi}{PT(k)}, \qquad m = 0,\ldots, P-1.\label{eqn:muDefn}
\end{align}
Note that $\mu$ may be defined in any interval of length $2\pi/T(k)$ so the
$m=1$ and $m=P-1$ cases are connected via
\begin{align}
  \mu = -\frac{2\pi}{PT(k)} = (P-1)\frac{2\pi}{PT(k)} \quad \mod 2\pi/T(k).
\end{align}
Using the Floquet-Bloch decomposition, $\cL \mapsto \cL_\mu$ with $\partial_x
\mapsto \partial_x + i\mu$ in \eqref{eqn:spectralProblem}. We define the
subharmonic stability spectrum with parameter $\mu$,
\begin{align}
    \sigma_{\mu} = \lt\{\lam \in \C: \hat U_\mu, \hat V_\mu \in
      L^2_{\text{per}}\lt([-T(k)/2,
      T(k)/2]\rt)\rt\},\label{eqn:LmustabilitySpectrum}
\end{align}
where $L^2_{\text{per}}\lt([-L/2,L/2] \rt)$ is the space of square-integrable
functions with period $L$. The spectrum $\sigma_\mu$ consists of isolated
eigenvalues of finite multiplicity.

\subsection{Spectral stability of Stokes Waves}\label{section:Stokes}
\noindent We begin with the simplest case of \eqref{eqn:ellipticSolns}. When $k=0$, the
solution is a Stokes wave solution of \eqref{eqn:NLS}. The \te{spectral} stability of these
solutions is straightforward to analyze, but the analysis is informative for
understanding the general features of the stability of other solutions. We
choose to work with the Stokes waves in this form to link them with the general
elliptic solutions \eqref{eqn:ellipticSolns}. The Stokes waves are given by
\begin{align}
  \Psi(x,t) &=\sqrt{b} ~e^{\displaystyle ix\sqrt{1-b}}~
  e^{\displaystyle - i(1-3b) t/2},
  \label{eqn:Stokes}
\end{align}
with parameter $b \in (0,1]$.
%
The spectral problem \eqref{eqn:spectralProblem} becomes
\begin{align}
  \lam
  \begin{pmatrix} U \\
    V\end{pmatrix} = \begin{pmatrix}  -\sqrt{1-b} ~\partial_x & -\frac12
    \partial_x^2\\\frac12\partial_x^2 +2 b &
    -\sqrt{1-b}~\partial_x\end{pmatrix}\begin{pmatrix} U \\V\end{pmatrix} =
      \cL_S\begin{pmatrix} U \\ V \end{pmatrix}.\label{eqn:stokesStability}
\end{align}
We consider the constant coefficients of $\cL_S$ as $\pi$-periodic to match
results below for the more general solutions of Section
\ref{section:ellipticSolns}, but the results for the Stokes waves are
independent of this choice of period. Thus the eigenfunctions $(U,V)^T$ of
\eqref{eqn:stokesStability} may be decomposed via a Floquet-Bloch decomposition
\eqref{eqn:floquet-bloch}
\begin{align}
  \begin{pmatrix}
    U(x) \\ V(x)
  \end{pmatrix}  = e^{i\mu x}
  \begin{pmatrix} \hat
    U(x) \\ \hat V(x)
  \end{pmatrix}, \label{eqn:eigenfunctionFloquetDecomp}
\end{align}
where $\hat U,~ \hat V$ have period $\pi$ and $\mu \in [0,2)$. \te{Since
\eqref{eqn:stokesStability} has constant coefficients, it suffices to consider
each Fourier mode $(\hat U_n, \hat V_n)^T$ individually:}
\begin{align}
 \lam \begin{pmatrix} \hat U_n\\\hat V_n\end{pmatrix} =
 \begin{pmatrix} -i\sqrt{1-b}~(\mu + 2n) & \frac12 (\mu +2n)^2\\ 2b-\frac12(\mu +
   2n)^2 & -i\sqrt{1-b}~(\mu + 2n)\end{pmatrix}\begin{pmatrix} \hat U_n \\
 \hat V_n\end{pmatrix} = \hat \L_S^{(n,\mu)} \begin{pmatrix} \hat U_n \\ \hat
   V_n
 \end{pmatrix} \label{eqn:stokesStabilityFS},
\end{align}
where $n\in \Z$.  The eigenvalues of $\hat \L_S^{(n,\mu)}$ are
\begin{align}
  \begin{split}
    \lam_{\pm}^{(n,\mu)} &= \frac{\mu + 2n}{2} \lt(-2i \sqrt{1-b}  \pm \sqrt{4b -
    (\mu + 2n)^2}\rt)
\end{split} .
 \label{eqn:StokesEvals}
\end{align}
These eigenvalues are imaginary if
\begin{align}
 \mu + 2n=0 \quad \text{or} \quad  b\leq (\mu +2n)^2/4. \label{eqn:stokesStabilityCriterion1}
\end{align}
\te{The Stokes wave with amplitude $b$ is spectrally stable with respect to
  bounded perturbations if \eqref{eqn:stokesStabilityCriterion1} holds for all
$n \in \Z$ and $\mu\in[0,2)$.} \te{For a given $b$, there exist $\mu$ and $n$
  such that \eqref{eqn:stokesStabilityCriterion1} is not satisfied.
Consequently, the Stokes waves are not spectrally stable with respect to general
bounded perturbations.} To examine stability with respect to special classes of
perturbations, we consider special values of $\mu$.

Equating $\mu = 0$ corresponds to perturbations with the same period as the
  solution. The \te{spectral} stability criterion \eqref{eqn:stokesStabilityCriterion1} becomes
$n=0$ or $b\leq n^2$ which is satisfied for all $n$, independent of $b$,
consistent with \cite{deconinck2017stability, gallay_haragus_OrbitalStability}.
For $\mu \neq 0$, the tightest bound on $b$ from
\eqref{eqn:stokesStabilityCriterion1} is given by
\begin{align}
  b \leq
  \begin{cases} \mu^2/4, & \mu \in (0,1],\\ (\mu-2)^2/4, & \mu \in [1,2).
    \end{cases}\label{eqn:StokesStabilityCriteria}
\end{align}
With
\begin{align}
  \mu = \frac{2m}{P}, \quad  P\in
  \Z^{+}\label{eqn:muDefn}, \quad m \in \{0,\ldots,P-1\},
\end{align}
the perturbation \eqref{eqn:perturbation} has
$P$ times the period of the Stokes wave. The \te{spectral} stability criterion
\eqref{eqn:StokesStabilityCriteria} becomes
\begin{align}
  b \leq \begin{cases} m^2/P^2, & m \in \Z \cap (0, P/2],\\ (m/P-1)^2, & m \in
    \Z \cap [P/2, P).  \end{cases}\label{eqn:StokesSubharmonicStabilityCriteria}
\end{align}
When $P=1, ~\mu = 0$ for which the spectral stability criterion is always satisfied. When
$P>1$, the bounds on $b$ are tightest when $m = 1$ and when $m=P-1$
respectively. We call the eigenvalues with $\mu(m=1) = \mu_1$ and $\mu(m=P-1) =
\mu_{P-1}$ the critical eigenvalues. In either case we must have
\begin{align}
  b \leq 1/P^2 \label{eqn:TightStokesSubharmonicStabilityCriteria}
\end{align}
for spectral stability of Stokes waves with respect to \teee{$P$-subharmonic
perturbations} (see Figure \ref{fig:parameterSpace}). This result agrees with
\cite[Theorem 9.1]{deconinck2017stability} but is found in a more direct
manner.

Next we examine the process by which solutions transition from a spectrally
stable state to a spectrally unstable state with respect to a fixed $\mu$ as $b$
increases (see Figure \ref{fig:stokes_collision}). For a fixed $P=P_c$, consider
a value of $b$ such that \eqref{eqn:TightStokesSubharmonicStabilityCriteria} is
satisfied with $b<1/P_c^2$, \ie the solution is spectrally stable with respect
to \teee{$P_c$-subharmonic perturbations}. We know from the above that the
instability with respect to \teee{$P_c$-subharmonic perturbations} first arises
when $ b_c = 1/P_c^2$ from the critical eigenvalues with $ \mu_1 = 2/P_c$ and
with $\mu_{P_c-1}= 2(1-1/P_c)$. Defining
\begin{align}
  \lam_c(b) &:= i\frac{2}{P_c}\sqrt{1-b} = 2i \sqrt{
  b_c(1-b)}\label{eqn:Stokes-barlam},
\end{align}
we find
\begin{align}
\IM(\lam_+^{(0,\mu_1)}(b)) <  \IM(\lam_c(b)) < \IM(\lam_-^{(0, \mu_1)}(b)),
\quad \IM(\lam_+^{(-1,\mu_{P_c-1})}(b)) > \IM(\lam_c\cc(b)) >
\IM(\lam_-^{(-1,\mu_{P_c-1})}(b)) \label{eqn:StokesEvalOrdering}:
\end{align}
the critical eigenvalues for $n=0$ and for $n=-1$ are ordered on the imaginary
axis and straddle $\lam_c(b)$ or $\lam_c\cc(b)$.  Increasing $b$ leads to $b=b_c
= 1/P_c^2$ where
\begin{align}
  \lam_+^{(0,\mu_1)} = \lam_-^{(0,\mu_1)} = \lam_c = -\lam_+^{(-1,\mu_{
  P_c-1})} = -\lam_-^{(-1,\mu_{P_c-1})}  \in i \R,
\end{align}
and the critical eigenvalues collide at $\lam_c$ and $\lam_c\cc = -\lam_c$ in
the upper and lower half planes respectively. At the collision,
\begin{align}
  \lam_c(b_c) = 2i \sqrt{b_c(1-b_c)}.
\end{align}
This is the intersection of the top of the figure 8 spectrum and the imaginary
axis in the complex $\lam$ plane \cite[equation (92)]{deconinck2017stability}.
Instability occurs when two critical imaginary eigenvalues collide along the
imaginary axis in a Hamiltonian Hopf bifurcation and enter the right and left
half planes along the figure 8, see Figure~\ref{fig:stokes_collision}.
\begin{figure}
 \includegraphics[width=0.3\textwidth]{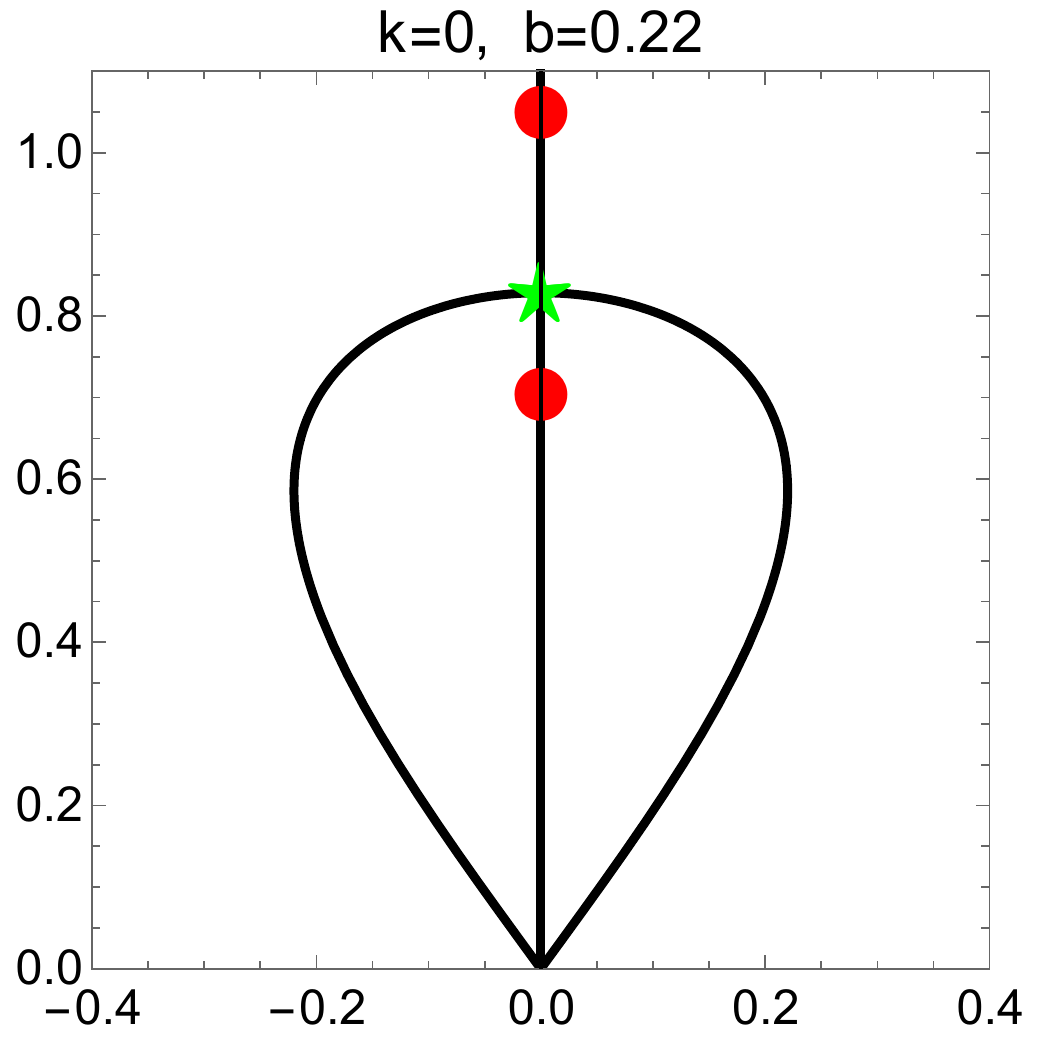}\qquad
 \includegraphics[width=0.3\textwidth]{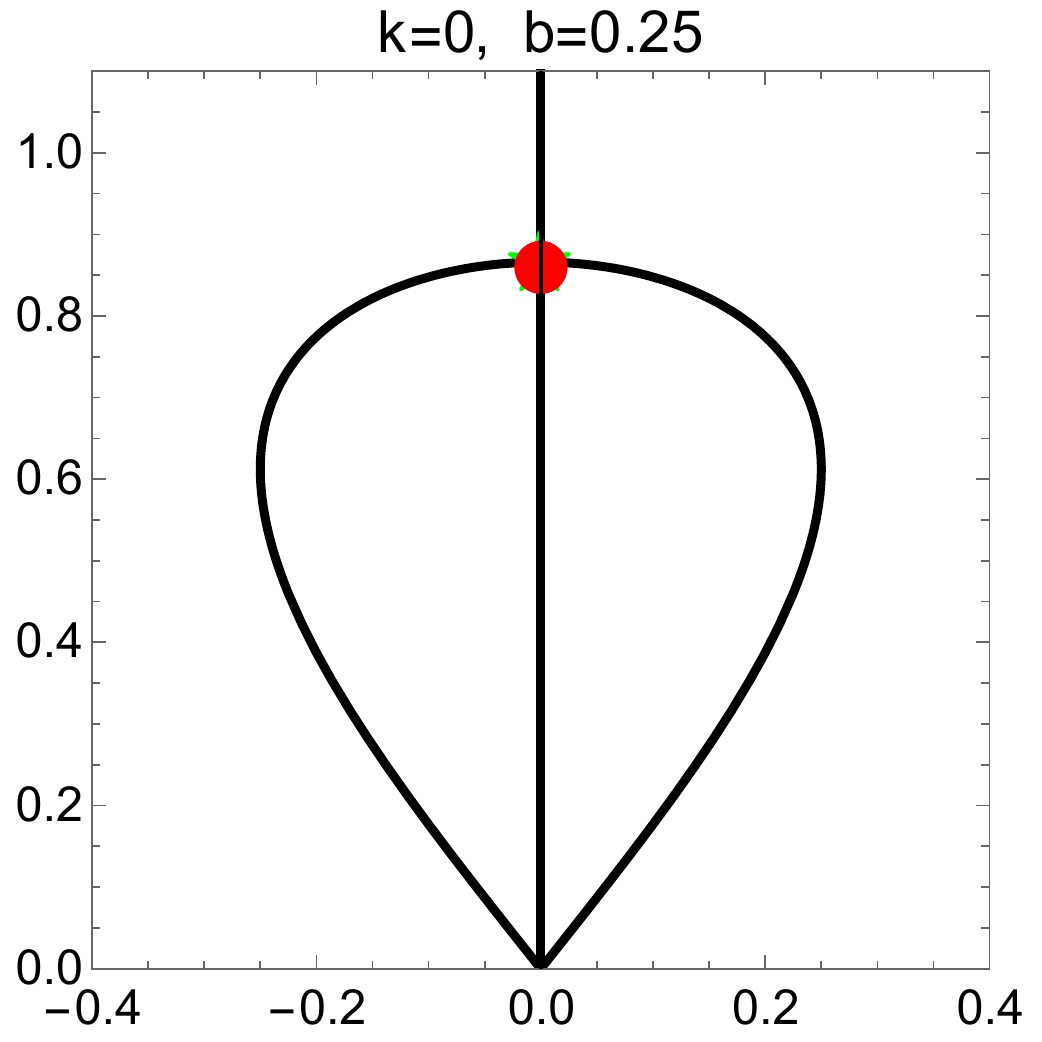}\qquad
 \includegraphics[width=0.3\textwidth]{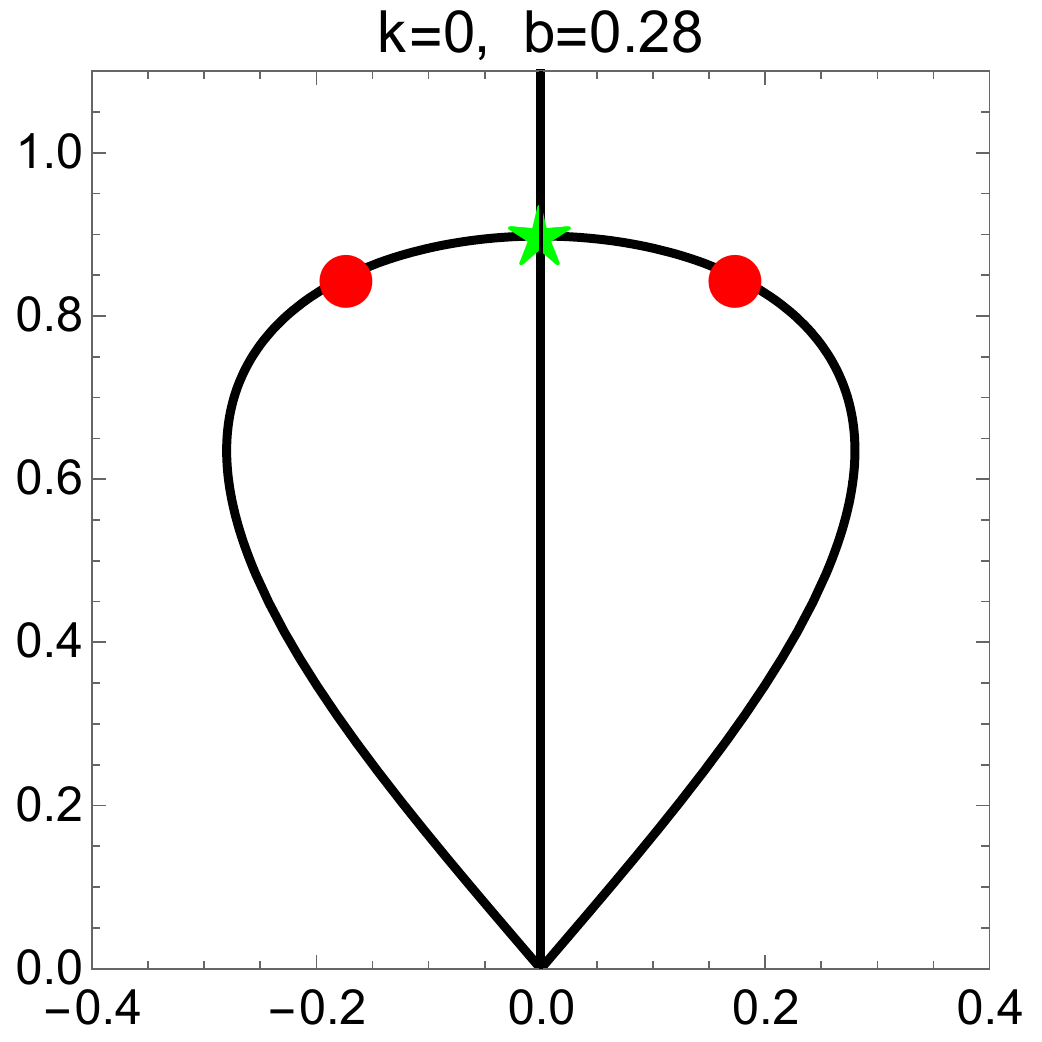}\qquad
  \caption{
  \label{fig:stokes_collision}
  The upper half complex $\lam$ plane, depicting part of the spectrum for Stokes
  waves using \eqref{eqn:StokesEvals} with $b=0.22,~ b=0.25, ~b=0.28$ from left
  to right. Red dots represent eigenvalues with $P=2$ and $n=0$ (using
    \eqref{eqn:muDefn}). The green star at the intersection of the curve and the
  imaginary axis represents $\lam_c$ \eqref{eqn:Stokes-barlam} where the
eigenvalues collide.}
\end{figure}
Other such collisions of eigenvalues occur at the top and bottom of the figure 8
leading to unstable modes as $b$ varies, but the classification of spectral
stability \vs instability is governed by the first unstable modes.

\teee{
  In the rest of Section \ref{section:linearStability}, we generalize these
  Stokes waves results to the elliptic solutions of \eqref{eqn:NLS}. Doing so is
  far more technical, but the main idea remains the same: solutions that are
  spectrally stable with respect to a given subharmonic perturbation become
  unstable with respect to that subharmonic perturbation when two
imaginary eigenvalues collide at the top of the figure 8 spectrum.}


\subsection{The Lax spectrum and the squared-eigenfunction
connection}\label{section:stability_setup}

\teee{The stability of the elliptic solutions is more difficult to analyze than
that of the Stokes waves since $\cL$ \eqref{eqn:spectralProblem} does not have
constant coefficients. To determine the spectrum $\sigma_{\cL}$, we use the
integrability of NLS (see Appendix \ref{section:integrability}). In particular,
we use that \eqref{eqn:stationaryPDE} is obtained by requiring that $\chi_{xt}
= \chi_{tx}$ hold, where
\begin{subequations}
\begin{align}
  &\chi_x  = \begin{pmatrix} -i\zeta & \psi \\ -\psi\cc & i\zeta
    \end{pmatrix}\chi, & &&
  &\chi_t =  \begin{pmatrix} A & B \\ C & -A\end{pmatrix}\chi \label{eqn:chi_x},&\\
 &A = -i\zeta^2 + \frac i2 \abs{\psi}^2 + \frac i2 \omega, \label{eqn:A}&
 &B = \zeta \psi + \frac i2 \psi_x, &
 &C = -\zeta \psi\cc + \frac i2 \psi_x\cc.&
\end{align}\label{eqn:LaxPairRepr}\end{subequations}
Equations \ref{eqn:LaxPairRepr} are known as the Lax pair of the focusing NLS equation.  }

\teee{\subsubsection{Finding the Lax spectrum and the squared eigenfunction
connection}\label{section:LaxSpectrumDef}}
We say that $\zeta \in \sigma_L$ (the Lax spectrum) if $\zeta$ gives rise to a
bounded (for $x\in\R$) eigenfunction of \eqref{eqn:LaxPairRepr}. To determine these
eigenfunctions, we restrict the Lax pair \eqref{eqn:LaxPairRepr} to the elliptic
solutions \eqref{eqn:ellipticSolns} by letting $\psi(x,t) = \phi(x)$.
\teee{Since now \eqref{eqn:LaxPairRepr} are autonomous in $t$, let $\chi(x,t) =
e^{\Omega t}\varphi(x)$. }In order for $\varphi$ to be nontrivial,
\begin{align}
  \Omega^2 = A^2 + BC = -\zeta^4 + \omega\zeta^2 + c\zeta
  -\frac1{16}\lt(4\omega b + 3b^2 + (1-k^2)^2\rt)
  \label{eqn:Omega}.
\end{align}
\te{For $\chi(x,t)$ to be a simultaneous solution of \eqref{eqn:LaxPairRepr}, we require
\begin{align}
  \begin{split}
    \chi(x,t) &= \begin{pmatrix} \chi_1 \\ \chi_2 \end{pmatrix}  =
    e^{\Omega t} \gamma(x) \begin{pmatrix} - B(x;\zeta) \\ A(x;\zeta) - \Omega\end{pmatrix},\\
    \gamma(x) &= \gamma_0 \exp \lt( -\int \mathcal{I} ~\d x \rt),
  \end{split}\label{eqn:gamma(x)}
\end{align}
whenever $\lt\langle \RE(\mathcal{I})\rt\rangle = 0$, \ie
$\RE(\mathcal{I})$ has zero average over one spatial period $T(k)$, and
$\gamma_0$ is a constant. The integrand $\mathcal{I}$ is defined by
\begin{align}
  \begin{split}
    \mathcal{I} &=  \frac{i \zeta B(x;\zeta) + (A(x;\zeta)- \Omega)\phi(x) +
    B_x(x;\zeta)}{B(x;\zeta)} \\
    &= \frac{A_x(x;\zeta) - \phi(x)\cc B(x;\zeta) -
    i\zeta(A(x;\zeta)-\Omega)}{A(x;\zeta)-\Omega}.
  \end{split}\label{eqn:IntegrandI}
\end{align}
}
Two seemingly different definitions for $\mathcal{I}$ are given in
\eqref{eqn:IntegrandI}. The two definitions arise from the fact that
\eqref{eqn:LaxPairRepr} defines two linearly dependent differential equations for
$\gamma(x)$. The two equivalent definitions for $\mathcal{I}$ follow from
$\chi_1$ and $\chi_2$ respectively. The average of $\mathcal{I}$ is computed in
\cite{deconinck2017stability} using the second representation:
\begin{align}
  I(\zeta) = -\int_0^{T(k)} \mathcal{I} ~\d x =
  -2i \zeta \omega_1 + \frac{4i (-c + 4\zeta^3 -2 \zeta \omega -
  4i\zeta \Omega(\zeta))}{\wp'(\alpha)} \lt(\zeta_w(\alpha)\omega_1 -
  \zeta_w(\omega_1)\alpha\rt)\label{eqn:Izeta},
\end{align}
where $\zeta_w$ is the Weierstrass-Zeta function \cite[Chapter 23]{NIST:DLMF},
and $\alpha$ is any solution of
\begin{align}
  \wp(\alpha) = 2i (\Omega(\zeta) + i \zeta^2 -
  i\omega/6)\label{eqn:alpha}.
\end{align}
Note that \eqref{eqn:Izeta} has the opposite sign of \cite[equation
(69)]{deconinck2017stability} \te{in which it is defined inconsistently.} Using
\begin{align}
  \lt(\wp'(\alpha)\rt)^2 = -4\lt( - c + 4\zeta^3 - 2\zeta\omega -
  4i\zeta\Omega(\zeta)\rt)^2,\label{eqn:wpalphasquared}
\end{align}
\eqref{eqn:Izeta} is given by the simpler form
\begin{align}
  I(\zeta) &= -2i \zeta \omega_1 + 2(\zeta_w(\alpha)\omega_1 -
  \zeta_w(\omega_1)\alpha)\Gamma\label{eqn:IzetaGamma},
\end{align}
where
\te{
\begin{align}
  \Gamma = \frac{ 2i\lt(-c + 4\zeta^3 - 2\zeta \omega - 4i \zeta
  \Omega(\zeta)\rt)}{\wp'(\alpha)}.
\end{align}
From \eqref{eqn:wpalphasquared}, $\abs{\Gamma}=1$.}  The condition for $\zeta
\in \sigma_L$ is
\begin{align}
  \zeta \in \sigma_L \Leftrightarrow \RE(I(\zeta)) = 0 \label{eqn:Icond}.
\end{align}
The derivative
\begin{align}
  \DD{I}{\zeta} &= \frac{2E(k) - (1+b-k^2 +
  4\zeta^2)K(k)}{2\Omega(\zeta)}\label{eqn:IDerivative},
\end{align}
\teee{
where
\begin{align}
  E(k) := \int_0^{\pi/2} \sqrt{1-k^2\sin^2(y)}~\d y,
\end{align}
the complete elliptic integral of the second kind \cite[Chapter 19]{NIST:DLMF},
is used for examining $\sigma_L$.} Tangent vectors to the curves constituting
$\sigma_L$ are given by the vector
\begin{align}
  \left(\IM\lt(\DD{I}{\zeta}\rt),
  \RE\lt(\DD{I}{\zeta}\rt)\right)^T,\label{eqn:LaxSpectrumTangents}
\end{align}
in the complex $\zeta$ plane.

When $\zeta \in \sigma_L$, the squared-eigenfunction connection \cite{AKNS74,
deconinck2017stability} gives the spectrum $\lam = 2\Omega(\zeta)$ and the
corresponding eigenfunctions of $\cL$ \eqref{eqn:spectralProblem},
\begin{align}
  \begin{pmatrix}
    U \\ V
  \end{pmatrix}
  &=  \begin{pmatrix} e^{-i\theta(x)}\varphi_1^2 - e^{i\theta(x)} \varphi_2^2\\
    -ie^{-i\theta(x)}\varphi_1^2 - i e^{i\theta(x)}\varphi_2^2
  \end{pmatrix},\label{eqn:squaredEF}
\end{align}
where $(\varphi_1,\varphi_2)^T = e^{-\Omega t}\chi$. \teee{The following theorem
  establishes that the squared-eigenfunction connection can be used to obtain
almost every eigenvalue of $\cL$.}

\begin{theorem}\label{thm:squaredEigCompleteness}
  All but six solutions of \eqref{eqn:spectralProblem} are obtained through the
  squared-eigenfunction connection \te{\eqref{eqn:squaredEF}}. Specifically, all solutions of
  \eqref{eqn:spectralProblem} bounded on the whole real line are obtained
  through the squared-eigenfunction connection, except at $\lam=0$.
\end{theorem}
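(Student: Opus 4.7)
The plan is to prove this by a dimension count and then isolate the degeneracy at $\lambda = 0$. For any fixed $\lambda$ the spectral problem \eqref{eqn:spectralProblem} is a system of two coupled second-order linear ODEs in $x$, hence its formal solution space (without any boundedness restriction) is four-dimensional. For $\lambda \neq 0$ I would show the squared-eigenfunction map is surjective onto this four-dimensional space as follows. Since \eqref{eqn:Omega} shows $\Omega^2(\zeta)$ is a quartic in $\zeta$, the equation $\Omega^2(\zeta) = \lambda^2/4$ has four roots $\zeta_1,\dots,\zeta_4$. For each root, the construction \eqref{eqn:gamma(x)} produces a Lax eigenfunction $\chi_j$ with time factor $e^{\Omega(\zeta_j) t} = e^{\lambda t/2}$, and \eqref{eqn:squaredEF} gives one solution $(U_j,V_j)$ of the spectral problem. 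To establish linear independence, I would compare Floquet multipliers: by \eqref{eqn:IzetaGamma} the multiplier of $\chi_j$ under translation by $T(k)$ is $e^{I(\zeta_j)}$, so the multiplier of $(U_j, V_j)$ is $e^{2I(\zeta_j)}$. For generic $\lambda$ these four multipliers are distinct, forcing linear independence and surjectivity. Since a bounded linear combination of solutions with distinct Floquet multipliers must use only bounded summands, restricting to $\zeta_j \in \sigma_L$ picks out exactly the bounded subspace, and every bounded solution at $\lambda \neq 0$ lies in the image of the squared-eigenfunction map.

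At $\lambda = 0$ the condition $\Omega(\zeta) = 0$ selects the four branch points of the hyperelliptic curve \eqref{eqn:Omega}. At each branch point the time part of the Lax pair $\bigl(\begin{smallmatrix} A & B \\ C & -A \end{smallmatrix}\bigr)$ has both eigenvalues equal to $\pm\Omega = 0$ but is generically nonzero, so it is defective and carries a nontrivial Jordan block. Consequently the ansatz \eqref{eqn:gamma(x)} recovers only the proper eigenvector and its associated squared eigenfunction; the second, generalized, Lax solution has explicit linear $t$-growth and cannot be interpreted as a time-separated eigenfunction of $\mathcal{L}$ through \eqref{eqn:squaredEF}. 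The six missing solutions are then identified with the bounded elements of the generalized null space of $\mathcal{L}$: the three proper null vectors arising from the continuous symmetries of NLS (phase rotation $i\phi$, spatial translation $\phi_x$, and Galilean invariance) together with the three conjugate generalized eigenvectors obtained by differentiating the three-parameter family of elliptic solutions \eqref{eqn:ellipticSolns} with respect to $\omega$, $b$, and $k$.

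The main obstacle will be verifying that exactly six solutions are missing, no more and no fewer. Concretely, one must (i) exhibit the six symmetry and parameter-derivative solutions explicitly using the Weierstrass parameterization of Section~\ref{section:ellipticSolns} and check that each is bounded and lies in $\ker\mathcal{L}$ or its generalized null space; (ii) verify directly from the branch-point expansion of \eqref{eqn:gamma(x)} and \eqref{eqn:IntegrandI} that none of these six can be produced as a squared eigenfunction at any of the four branch points; and (iii) show that every other bounded solution at $\lambda = 0$ does arise as a (degenerate) squared eigenfunction. Steps (i) and (iii) should be routine from the explicit elliptic parameterization combined with the Floquet-multiplier argument used for $\lambda \neq 0$ (taken in the limit $\lambda \to 0$), while step (ii) — pinpointing exactly what is lost when the time part of the Lax pair develops its Jordan block — is the crux of the argument.
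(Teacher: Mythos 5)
Your overall skeleton for generic $\lam$ (four roots $\zeta_j$ of $\Omega^2(\zeta)=\lam^2/4$, one squared-eigenfunction solution per root, then a four-dimensional count) matches the paper's strategy, but there are two genuine gaps. The first and most serious is your identification of the ``six missing solutions.'' In the paper the six exceptional solutions have nothing to do with symmetries or parameter derivatives at $\lam=0$: they are attached to the six values of $\lam$ for which the discriminant of $Q_4(\Omega,\zeta)=\Omega^2-\tilde Q_4(\zeta)$ with respect to $\zeta$ vanishes (a cubic in $\lam^2$, hence six roots). At such a $\lam$ the quartic has a repeated root, the connection \eqref{eqn:squaredEF} produces only three independent solutions, and the fourth must be built by reduction of order; it grows algebraically and is therefore unbounded. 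That analysis — enumerating the double/triple/quadruple-root configurations and discarding the impossible ones — is the crux of the proof and is entirely absent from your proposal; without it you cannot conclude that no \emph{bounded} solution is missed away from $\lam=0$. Your alternative accounting also fails on its own terms: generalized eigenvectors of $\cL$ satisfy $\cL w = v$ with $v\in\ker\cL$, so they are not solutions of \eqref{eqn:spectralProblem} at any $\lam$ and cannot be the missing solutions; and at $\lam=0$ the paper (citing Gallay--H\v{a}r\v{a}g\c{u}\c{s}, Proposition 3.2) uses that there are exactly \emph{two} bounded solutions, of which the construction \eqref{eqn:gamma(x)} yields one — so only one bounded solution is lost there, not six.

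The second gap is in your linear-independence step. You argue via Floquet multipliers $e^{2I(\zeta_j)}$ being distinct ``for generic $\lam$,'' but the theorem is not a generic statement: you would still owe an argument at the non-generic $\lam$ where two distinct roots $\zeta_i\neq\zeta_j$ share a multiplier (such coincidences do occur — e.g.\ portions of the imaginary $\lam$ axis are multiply covered by $\Omega(\zeta)$ for real $\zeta$, and double eigenvalues of $\cL_\mu$ are not excluded), and these $\lam$ are not among the six exceptional ones. The paper avoids this by a different mechanism: writing the eigenfunction as in the proof (the factor $\gamma_0 B^{-1/2}\exp(-\int \phi\Omega/B\,\d x)$), the four solutions have singularities in the complex $x$ plane located at the zeros of $B(x;\zeta)$, which move with $\zeta$; distinct singularity locations force linear independence whenever the four $\zeta_j$ are distinct, i.e.\ for every $\lam\notin\cB$. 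Also note a small bookkeeping slip in your multiplier computation: because of the factors $e^{\pm i\theta(x)}$ in \eqref{eqn:squaredEF}, the multiplier of $(U_j,V_j)$ is $e^{2I(\zeta_j)+i\theta(T(k))}$ as in \eqref{eqn:muExponential}, not $e^{2I(\zeta_j)}$; this does not affect distinctness comparisons but should be stated correctly if you pursue that route.
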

\begin{proof}
  The proof is similar to the proof of \cite[Theorem~2]{bottman2011elliptic}.
  \te{For a complete proof, see Appendix \ref{Appendix:SQEFCompleteness}}
\end{proof}

Therefore the condition for spectral stability is that $\Omega(\sigma_L)\subset
i\R$.

\begin{remark}
  The explicit eigenfunction representation \eqref{eqn:squaredEF} can be used to
  construct an explicit representation for the Floquet discriminant which is a
  commonly used tool for computing $\sigma_L$ \cite{ablowitz1996computational,
calini2011squared, MR1123280, leeThesis}. The Floquet discriminant for NLS and
other integrable equations is constructed and analyzed in \cite{upsalInPrep}.
\end{remark}

To examine the stability with respect to subharmonic perturbations, we need
$\lam$ in terms of $\mu$. Except for the Stokes waves (Section
\ref{section:Stokes}), \te{we cannot express $\lam$ in terms of $\mu$
explicitly.} \te{Instead, we use an explicit expression for $\mu = \mu(\zeta)$
  and the connection between $\zeta$ and $\lam$  to say something about
  $\lam(\mu)$}.  Equation (112) in
  \cite{deconinck2017stability} gives
\begin{align}
\begin{split}
    e^{iT(k)\mu(\zeta)} &= \exp\lt( -2 \int_0^{T(k)} \frac{(A(x) -
    \Omega)\phi(x) +
  B_x(x) + i\zeta B(x)}{B(x)} \d x\rt) e^{i\theta(T(k))}\\
  &= e^{ 2I(\zeta) + i\theta(T(k))}.
\end{split}\label{eqn:muExponential}
\end{align}
It follows that
\te{
\begin{align}
  M(\zeta) := T(k)\mu(\zeta) = -2i I(\zeta) + \theta(T(k)) + 2\pi n, \quad n\in \Z.
    \label{eqn:muTk}
\end{align}
}
Here, $\theta(T(k))$ is defined to be continuous at $b=k^2$
by
\te{
\begin{equation}
  \theta(T(k)) :=
  \begin{cases}
    \int_0^{T(k)} \frac{c}{R^2(x)}~ \d x, & b>k^2, \\
    \pi, & b=k^2.
  \end{cases}\label{eqn:thetaTk}
\end{equation}
}
For nontrivial-phase solutions, the Weierstrass integral formula
\cite[equation 1037.06]{byrdFriedman} gives
\begin{align}
\begin{split}
  \theta(T(k)) &= \int_0^{2\omega_1} \frac{c}{e_0 - \wp(x; g_2, g_3)} ~\d x =
  \frac{4c}{\wp'(\alpha_0)} \lt( \alpha_0 \zeta_w(\omega_1)- \omega_1
  \zeta_w(\alpha_0)\rt)\\
&= -2i\lt(\alpha_0\zeta_w(\omega_1) - \omega_1 \zeta_w(\alpha_0)\rt),
\end{split}\label{eqn:thetaW}
\end{align}
where
\begin{equation}
  \wp(\alpha_0) = e_0\label{eqn:alpha0} = -\frac{2\omega}{3} = b+e_3,
\end{equation}
and $\wp'(\wp^{-1}(e_0)) = 2i c$ is obtained from
\cite[equation (3.51)]{painleveHandbook}.

\subsubsection{A description of the Lax spectrum}\label{section:LaxSpectrum}
\teee{Since the Lax spectrum is used to determine the stability spectrum, a
complete description of the Lax spectrum is required for our stability analysis.} In
what follows, we use the notation
\begin{subequations}
\label{eqn:zetaRoots}
\begin{align}
    \zeta_1 &= \frac12\lt(\sqrt{1-b} + i(\sqrt{b} - \sqrt{b-k^2})\rt),
    & \zeta_2 &= \frac12\lt(-\sqrt{1-b} + i(\sqrt{b} + \sqrt{b-k^2})\rt), \\
    \zeta_3 &= \frac12\lt(-\sqrt{1-b} - i(\sqrt{b} + \sqrt{b-k^2})\rt),
    & \zeta_4 &= \frac12\lt(\sqrt{1-b} - i(\sqrt{b} - \sqrt{b-k^2})\rt),
\end{align}
\end{subequations}
for the roots of $\Omega^2$ in the first, second, third, and fourth quadrants of
the complex $\zeta$ plane, respectively (for cn and NTP solutions). We refer to
the roots collectively as $\zeta_j$. We rely heavily on
\cite[Lemma~9.2]{deconinck2017stability} \te{which states that $M(\zeta)$
  \eqref{eqn:muTk}} must increase in absolute value along $\sigma_L$ until a
turning point is reached, where $\d I/ \d\zeta=0$. The only turning points occur
at $\zeta = \pm \zeta_c$ \te{where
\begin{align}
      \zeta_c^2 &:= \frac{2E(k) - (1+b-k^2)K(k)}{4K(k)} \label{eqn:zetac}.
\end{align}
}
Since $\zeta_c^2 \in \R, ~\zeta_c$ is real or imaginary depending on the solution
parameters $(k,b)$. We refer to $\zeta_c$ as the solution to $\eqref{eqn:zetac}$
with $\RE(\zeta_c)\geq 0$ and $\IM(\zeta_c)\geq 0$. We primarily use $-\zeta_c$
in the analysis to follow since the branch of spectrum in the left half plane
maps to the outer figure 8 (see Figure \ref{fig:stabilityToInstabilityFFHM})
which corresponds to the dominant instabilities. Further, $\zeta_c=0$ when
$b=B(k)$ where
\begin{align}
  B(k):=\frac{2 E(k) - (1-k^2)K(k)}{K(k)} \label{eqn:Bk}.
\end{align}
For $b>B(k),~ \zeta_c\in i\R\setminus\{0\}$ and for $b<B(k),~\zeta_c \in
\R\setminus\{0\}$. The following lemmas concern the shape of the Lax spectrum and
are important in our analysis of the stability of solutions.
\begin{lemma}\label{lem:muSymmetry}
  The Lax spectrum $\sigma_L$ is symmetric about $\IM\zeta=0$. Further,
  if $\mu(\zeta)$ increases (decreases) in the upper half plane, then
  $\mu(\zeta)$ decreases (increases) at the same rate in the lower half plane
  along $\sigma_L$.
\end{lemma}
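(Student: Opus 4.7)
The plan is to exploit a discrete conjugation symmetry of the Lax pair. With $J = \bigl(\begin{smallmatrix} 0 & 1 \\ -1 & 0 \end{smallmatrix}\bigr)$, a direct algebraic check shows that $J\,\overline{U(x;\zeta)}\,J^{-1} = U(x;\zeta^*)$, where $U$ denotes the spatial matrix of \eqref{eqn:LaxPairRepr}; the analogous identity holds for the temporal matrix under the branch choice $\Omega(\zeta^*) = \overline{\Omega(\zeta)}$, permitted because the polynomial in \eqref{eqn:Omega} has real coefficients in $\zeta$. Consequently, whenever $\chi(x,t;\zeta)$ solves \eqref{eqn:LaxPairRepr}, the vector $\tilde\chi(x,t) := J\,\overline{\chi(x,t;\zeta)}$ solves it at spectral parameter $\zeta^*$. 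Since $\tilde\chi$ is bounded on $\R$ if and only if $\chi$ is, this establishes $\zeta \in \sigma_L \Leftrightarrow \zeta^* \in \sigma_L$.

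For the rate statement, I push this symmetry through the Floquet representation. Applying $J\,\overline{(\cdot)}$ to the monodromy relation behind \eqref{eqn:muExponential} produces a Floquet solution at $\zeta^*$ with Floquet exponent $-\overline{\mu(\zeta)}$. Moreover, $\mu$ is real on $\sigma_L$: the defining condition $\RE I(\zeta) = 0$ forces $-2iI(\zeta) \in \R$, and $\theta(T(k))$ is real by \eqref{eqn:thetaTk}, so \eqref{eqn:muTk} gives $T(k)\mu(\zeta) \in \R$. Combining these observations,
\begin{equation*}
  \mu(\zeta^*) = -\mu(\zeta) \quad \text{for every } \zeta \in \sigma_L.
\end{equation*}
Parametrizing any smooth arc of $\sigma_L$ in the upper half plane by $s \mapsto \zeta(s)$, its mirror image $s \mapsto \zeta(s)^*$ is a smooth arc of $\sigma_L$ in the lower half plane, and differentiating the displayed identity in $s$ yields
\begin{equation*}
  \frac{d}{ds}\,\mu(\zeta(s)^*) \;=\; -\,\frac{d}{ds}\,\mu(\zeta(s)),
\end{equation*}
which is exactly the opposite-sign, equal-magnitude rate claim.

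The main obstacle is the quasi-periodic phase $e^{i\theta(x)}$: the individual components of $\chi$ do not each satisfy a simple Floquet relation under $x\mapsto x + T(k)$, which is precisely why \eqref{eqn:muExponential} combines $I(\zeta)$ and $\theta(T(k))$. Passing the conjugation $J\,\overline{(\cdot)}$ through \eqref{eqn:muExponential} just replaces $e^{iT(k)\mu(\zeta)}$ by $e^{-iT(k)\overline{\mu(\zeta)}}$, and the branch bookkeeping for $\Omega$ has already been absorbed into the choice made above. A direct alternative, should the symmetry argument feel indirect, is to verify $I(\zeta^*) = -\overline{I(\zeta)}$ straight from \eqref{eqn:IzetaGamma}, using that $\omega_1 \in \R$ and that $\alpha$, $\zeta_w(\alpha)$, and $\Gamma$ satisfy analogous conjugation relations; then \eqref{eqn:muTk} immediately gives $\mu(\zeta^*) = -\mu(\zeta)$ on $\sigma_L$, and the rest of the argument is unchanged.
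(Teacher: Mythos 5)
Your main argument is correct, and it takes a genuinely different route from the paper. The paper proves the reflection symmetry of $\sigma_L$ indirectly: it reads off from \eqref{eqn:ReImOmega} that $\Omega_r$ is even and $\Omega_i$ odd under $\zeta_i\to-\zeta_i$, deduces the reflection property \eqref{eqn:vfsymmetry} of the tangent field \eqref{eqn:LaxSpectrumTangents} to $\{\RE I=0\}$, and then gets the rate statement by computing the directional derivative of $\mu$ along that field, \eqref{eqn:muDirectionalDeriv}, and observing its symmetry. You instead use the discrete conjugation symmetry $\chi\mapsto J\overline{\chi}$ of the Lax pair (the route the paper itself acknowledges as ``more direct from the spectral problem'' for the first claim) and push it through the squared-eigenfunction/monodromy relation: under $\chi\mapsto J\overline{\chi}$ the pair \eqref{eqn:squaredEF} maps to $-(\overline{U},\overline{V})$, so the Floquet multiplier conjugates and you obtain the global antisymmetry $\mu(\zeta^*)\equiv-\mu(\zeta)$, from which the equal-and-opposite rate claim follows by differentiating along mirror arcs. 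This is in fact slightly stronger than the lemma as stated (it is exactly what is used implicitly in Remark~\ref{remark:muSymmetry}) and avoids the explicit formulas \eqref{eqn:ReImDI}; what the paper's computation buys instead is that the directional derivative \eqref{eqn:muDirectionalDeriv} is a sum of squares, i.e.\ positivity/monotonicity of $\mu$ along the tangent field, which the authors use elsewhere but which the lemma does not claim. One bookkeeping point: the conjugation sends the separated solution with growth rate $\Omega(\zeta)$ to one with rate $\overline{\Omega(\zeta)}$, so your displayed identity should be read as $\mu(\zeta^*)\equiv-\mu(\zeta)\bmod 2\pi/T(k)$ with that branch pairing; the integer shift is locally constant along a smooth arc, so the differentiated statement is unaffected.

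The ``direct alternative'' in your last paragraph, however, is wrong as stated. If $I(\zeta^*)=-\overline{I(\zeta)}$ held, then since $I$ is purely imaginary on $\sigma_L$ and $\theta(T(k))$ is real, \eqref{eqn:muTk} would give $M(\zeta^*)\equiv M(\zeta)$, i.e.\ the symmetric rather than the antisymmetric relation, contradicting your own main computation. The correct identity carries precisely the quasi-periodic phase you flagged as the obstacle: with the branch $\Omega(\zeta^*)=\overline{\Omega(\zeta)}$, consistency with \eqref{eqn:muExponential} forces $2I(\zeta^*)\equiv 2\overline{I(\zeta)}-2i\,\theta(T(k)) \bmod 2\pi i$. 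Relatedly, $\alpha$ does not conjugate simply under this branch choice: from \eqref{eqn:alpha} one gets $\wp(\alpha(\zeta^*))=\overline{\wp(\alpha(\zeta))}+4i\,\overline{\Omega(\zeta)}$, so $\alpha(\zeta^*)$ is not $\pm\overline{\alpha(\zeta)}$ modulo periods, and the claimed conjugation relations for $\alpha$, $\zeta_w(\alpha)$, $\Gamma$ in \eqref{eqn:IzetaGamma} do not hold as stated. Since that paragraph was offered only as a fallback, it does not affect your proof, but it should be deleted or repaired.
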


\begin{proof}
  Though the proof for the symmetry of $\sigma_L$ comes more directly from the
  spectral problem, we prove it by other means here to setup the proof for the
  second part of the lemma.

  The tangent line to the curve $\RE(I)=0$ is given by
  \eqref{eqn:LaxSpectrumTangents}, where
  \begin{subequations}
  \begin{align}
    \RE\lt(\DD{I}{\zeta}\rt) &= \frac{2 E(k)\Omega_r - K(k)\lt( 8 \zeta_i
    \zeta_r \Omega_i + (1+b-k^2 + 4(\zeta_r^2 -
    \zeta_i^2))\Omega_r\rt)}{2(\Omega_i^2 + \Omega_r^2)}\label{eqn:ReDI},\\
    \IM\lt(\DD{I}{\zeta}\rt) &= \frac{-2E(k) \Omega_i + K(k)\lt(-8
    \zeta_i\zeta_r \Omega_r + (1+b-k^2 + 4(\zeta_r^2 -
    \zeta_i^2))\Omega_i\rt)}{2(\Omega_i^2 + \Omega_r^2)},\label{eqn:ImDI}
  \end{align}
    \label{eqn:ReImDI}
\end{subequations}
and $\Omega_r$ $(\Omega_i)$ and $\zeta_r$ $(\zeta_i)$ are the real
(imaginary) parts of $\Omega$ and $\zeta$ respectively. Since
  \begin{align}
  \begin{split}
    \RE \lt(\Omega^2\rt) &= -\frac{1}{16}\lt( 1 + 3b^2 -2k^2 + k^4 - 16 c \zeta_r + 4b
    \omega + 16(\zeta_i^4 + \zeta_r^4 - \zeta_r^2 \omega + \zeta_i^2(\omega -
    6\zeta_r^2))\rt),\\
    \IM\lt(\Omega^2\rt) &= \zeta_i\lt(-4\zeta_r^3 + 2\omega \zeta_r + c + 4\zeta_i^2
    \zeta_r\rt),
  \end{split}\label{eqn:ReImOmega}
  \end{align}
  only $\IM\lt(\Omega^2\rt)$ changes sign as $\zeta_i \to -\zeta_i$. It follows
  that $\Omega_i \to -\Omega_i$ and $\Omega_r\to\Omega_r$ as $\zeta_i \to
  -\zeta_i$. From \eqref{eqn:LaxSpectrumTangents} and
  \eqref{eqn:ReImDI},
  \begin{align}
    \lt( \IM\lt(\DD{I}{\zeta}\rt), \RE\lt(\DD{I}{\zeta}\rt)\rt) \to
    \lt( -\IM\lt(\DD{I}{\zeta}\rt), \RE\lt(\DD{I}{\zeta}\rt)\rt),
    \qquad \text{as}\qquad \zeta_i \to -\zeta_i.
    \label{eqn:vfsymmetry}
  \end{align}
  Therefore, $\sigma_L$ looks qualitatively the same from $\zeta_j$ to
  $-\zeta_c$ as it does from $\zeta_j\cc$ to $-\zeta_c$.

  We calculate the directional derivative of $\mu(\zeta)$ along $\sigma_L$:
  \begin{align}
    \begin{split}
    \lt(\DD{\mu(\zeta)}{\zeta_r}, \DD{\mu(\zeta)}{\zeta_i}\rt) \cdot
    \lt( \IM \DD{I}{\zeta}, \RE\DD{I}{\zeta}\rt)
    &=
    2\lt( \DD{\IM(I)}{\zeta_r}, \DD{\IM(I)}{\zeta_i}\rt) \cdot
    \lt( \IM \DD{I}{\zeta}, \RE\DD{I}{\zeta}\rt) \\
    &= 2 \lt( \lt(\IM\DD{I}{\zeta} \rt)^2 + \lt(\RE\DD{I}{\zeta}\rt)^2\rt),
    \end{split}
    \label{eqn:muDirectionalDeriv}
  \end{align}
  which is symmetric about $\IM \zeta = 0$.

\end{proof}
\begin{lemma}\label{lem:LaxSpectrumIntersection}
  When $b\leq B(k)$, given in \eqref{eqn:Bk}, the branch of the Lax
  spectrum in the left half plane (right half plane) intersects the real axis at
  $\zeta = -\zeta_c$ ($\zeta = \zeta_c$).
\end{lemma}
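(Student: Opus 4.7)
The plan is to combine three ingredients: the reality of $\zeta_c$ when $b \leq B(k)$, the symmetry of $\sigma_L$ about the real axis (Lemma \ref{lem:muSymmetry}), and the structural result of \cite[Lemma 9.2]{deconinck2017stability} that $M(\zeta)$ must increase in absolute value along every branch of $\sigma_L$ until a turning point of $I$ is reached. From \eqref{eqn:zetac} and \eqref{eqn:Bk}, the condition $b \leq B(k)$ is equivalent to $\zeta_c^2 \geq 0$, so $\pm \zeta_c \in \R$; these are the only turning points of $I(\zeta)$.

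I would then consider the connected branch of $\sigma_L$ emanating from $\zeta_2$, which by \eqref{eqn:zetaRoots} lies in the open second quadrant. By the monotonicity of $M$ along a branch, this branch must terminate at one of $\pm \zeta_c$, both of which are real under our hypothesis. By Lemma \ref{lem:muSymmetry}, the mirror branch from $\zeta_3 = \overline{\zeta_2}$ (in the third quadrant) is the complex conjugate curve, and since real turning points are self-conjugate, the two mirror branches terminate at the same real point. They therefore glue along the real axis into a single connected subset of $\sigma_L$ that crosses the real axis exactly at the common turning point; this crossing is precisely where the left-half-plane branch meets the real axis.

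The final step is to identify the endpoint as $-\zeta_c$ rather than $\zeta_c$. I would argue that the branch from $\zeta_2$ stays in the open left half plane throughout, so its endpoint must be the turning point with non-positive real part, namely $-\zeta_c$. Correspondingly, the branch from $\zeta_1$ in the open first quadrant stays in the right half plane and terminates at $\zeta_c$, yielding the symmetric statement about the right-half-plane branch intersecting the real axis at $+\zeta_c$.

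The main obstacle is this last step: rigorously showing that the branch emanating from $\zeta_2$ cannot cross the imaginary axis en route to its endpoint, which would otherwise allow it to terminate at $+\zeta_c$ instead. The plan to address this is either a deformation argument (tracking the branches continuously from the Stokes limit $k=0$, where the spectrum is computed explicitly in Section \ref{section:Stokes}, to general $(k,b)$) or a direct inspection of the tangent field \eqref{eqn:LaxSpectrumTangents} to exclude any intersection of $\sigma_L$ with the segment of the imaginary axis that would be traversed between $\zeta_2$ and $-\zeta_c$.
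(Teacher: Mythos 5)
Your proposal takes a genuinely different, global route from the paper, and its central step has a real gap. You claim that, by the monotonicity of $M(\zeta)$ along a band, the branch emanating from $\zeta_2$ ``must terminate at one of $\pm\zeta_c$.'' But \cite[Lemma~9.2]{deconinck2017stability} only says that $M$ increases in absolute value \emph{until} a turning point is reached; it does not assert that a turning point lies on the band at all. Indeed, when $b>B(k)$ the band joins $\zeta_2$ to $\zeta_1$ across the imaginary axis, $M$ sweeps out $(0,2\pi)$ monotonically, and the turning point $\zeta_c\in i\R$ need not lie on the band (cf.\ the proof of Theorem~\ref{thm:imaginaryNTPStabilityResult}), so monotonicity alone cannot force the band to pass through $\pm\zeta_c$. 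To rescue your inference you would need something like $M<2\pi$ on the band together with $M=0 \mod{2\pi}$ at both endpoints (Lemmas~\ref{lem:NoDoublePointsSpine} and \ref{lem:0atroots}); however, the paper's proof of Lemma~\ref{lem:NoDoublePointsSpine} already identifies the real-axis intersection of the band with $-\zeta_c$, i.e.\ it uses the present lemma, so that route risks circularity. Your second, admitted obstacle---excluding a crossing of the imaginary axis so as to distinguish $-\zeta_c$ from $+\zeta_c$---is only sketched (deformation from the Stokes limit, or tangent-field inspection) and never carried out; note, though, that this is actually the easy part: with the convention $\RE(\zeta_c)\geq 0$, once one knows that any intersection of $\sigma_L\setminus\R$ with the real axis can occur only at $\pm\zeta_c$, the intersection of the portion lying in the left half plane must be $-\zeta_c$. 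The substantive content of the lemma is precisely that location statement, and your proposal does not supply an argument for it.

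For contrast, the paper's proof is a short local computation that your plan lacks: at any point $\zeta_r\in\R$ where a non-real branch meets the real axis, the symmetry \eqref{eqn:vfsymmetry} forces the branch to be vertical there, and continuity of the tangent field \eqref{eqn:LaxSpectrumTangents} across the real axis then requires $\IM(\d I/\d\zeta)$ evaluated at $\zeta_r+i\eps$ and at $\zeta_r-i\eps$ to agree as $\eps\to 0$; expanding $\Omega(\zeta_r\pm i\eps)=\Omega(\zeta_r)\pm \frac{i\eps}{2\Omega(\zeta_r)}(c-4\zeta_r^3+2\zeta_r\omega)+\0{\eps^2}$ and inserting this into \eqref{eqn:ImDI} shows the matching is possible only for $\zeta_r=\pm\zeta_c$ \eqref{eqn:zetac}. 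Either adopt this local argument, or, if you keep the global route, you must independently prove (without Lemma~\ref{lem:NoDoublePointsSpine}) both that the band emanating from $\zeta_2$ actually meets the real axis when $b\leq B(k)$ and that any such meeting point is a zero of $\d I/\d\zeta$.
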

\begin{proof}
  Let $\zeta_r\in\R$ and $\eps>0$. Since the vector field
  \eqref{eqn:LaxSpectrumTangents} is continuous across the real $\zeta$ axis,
  and since $\sigma_L$ is vertical at the intersection with the real $\zeta$
  axis by virtue of \eqref{eqn:vfsymmetry}, we must have
 \begin{align}
    \IM\lt(\DD{I}{\zeta}\bigg|_{\zeta =\zeta_r+i\eps}\rt) =
    \IM\lt(\DD{I}{\zeta}\bigg|_{\zeta
    =\zeta_r-i\eps}\rt)\label{eqn:vectorFieldContinuity}, \qquad \text{as }
    \eps\to 0.
  \end{align}
  We notice that
  \begin{align}
    \Omega^2(\zeta_r \pm i\eps) &= \Omega^2(\zeta_r) \pm i\eps\lt(c -4
    \zeta_r^3 + 2\zeta_r\omega\rt) + \0{\eps^2},
  \end{align}
  so that
  \begin{align}
    \Omega(\zeta_r \pm i\eps) &=\Omega(\zeta_r) \pm
    \frac{i\eps}{2\Omega(\zeta_r)}(c-4\zeta_r^3 + 2\zeta_r\omega) + \0{\eps^2} = i\Omega_i +
    \Omega_r + \0{\eps^2},
  \end{align}
  where $\Omega_r = \0{\eps}$ since $\Omega(\zeta_r)\in i\R$. By
  \eqref{eqn:ImDI}, equation \eqref{eqn:vectorFieldContinuity} is only
  satisfied as $\eps\to 0$ if
  \begin{align}
    \zeta_r &= \pm \frac{\sqrt{2 E(k) -(1+b-k^2)K(k)}}{2\sqrt{K(k)}} = \pm
    \zeta_c.
  \end{align}
\end{proof}

\noindent The next lemma details the topology of the Lax spectrum. To our
surprise, there exist few rigorous results describing the Lax spectrum in the
literature even though it has been used in various contexts (see \eg
\cite{ablowitz1996computational, MR1123280, leeThesis, maAndAblowitz}). Some
representative plots of the Lax spectrum are shown in Figure
\ref{fig:LaxSpectrumPics}.

\begin{lemma}\label{lem:laxSpectrumTopology}
The Lax spectrum for the elliptic solutions consists only of the real line and
two bands, each connecting two of the roots of $\Omega$.
\end{lemma}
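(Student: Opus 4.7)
The plan is to establish three facts: (i) the real line satisfies $\R\subset\sigma_L$, (ii) the non-real part of $\sigma_L$ consists of two bands, each joining a pair of roots $\zeta_j$ of $\Omega$, and (iii) no other components exist. For (i), I would observe that when $\zeta\in\R$ the $x$-matrix in \eqref{eqn:chi_x} is skew-Hermitian (its diagonal entries $\mp i\zeta$ are purely imaginary and its off-diagonal entries $\phi,-\phi\cc$ form a skew-Hermitian pair), so $|\chi_1|^2+|\chi_2|^2$ is conserved in $x$ and every Lax solution is bounded on $\R$. Equivalently, one checks directly from \eqref{eqn:Izeta} that $\RE(I(\zeta))\equiv 0$ on $\R$. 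Either way, the entire real axis lies in $\sigma_L$.

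For (ii), each root $\zeta_j$ of $\Omega$ satisfies $\lam(\zeta_j)=2\Omega(\zeta_j)=0$; the corresponding $\cL$-eigenfunction is $\phi_x$ (arising from translation invariance of \eqref{eqn:stationaryODE}), which is bounded, so $\zeta_j\in\sigma_L$. Away from the branch points $\zeta_j$ and the zeros $\pm\zeta_c$ of $\d I/\d\zeta$ (see \eqref{eqn:IDerivative}), the level set $\RE(I)=0$ is a smooth curve with tangent \eqref{eqn:LaxSpectrumTangents} by the implicit function theorem. Starting at $\zeta_j$ and following this curve, \cite[Lemma~9.2]{deconinck2017stability} guarantees that $|M(\zeta)|$ increases monotonically until a turning point is reached; since the only turning points are $\pm\zeta_c$, the curve must arrive at one of them. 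When $b\leq B(k)$, Lemma \ref{lem:LaxSpectrumIntersection} places that meeting point on $\R$; when $b>B(k)$, both $\pm\zeta_c$ lie on $i\R$ and the curve crosses the imaginary axis there instead. In either case, Lemma \ref{lem:muSymmetry} forces the curve to continue symmetrically to another root, yielding one band in the upper half-plane joining $\zeta_1$ to $\zeta_2$ and its complex conjugate in the lower half-plane joining $\zeta_3$ to $\zeta_4$.

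The main obstacle is (iii): excluding additional components. I would first extract the asymptotic $I(\zeta)\sim -6i\omega_1\zeta$ as $|\zeta|\to\infty$ from $\alpha\to 0$, $\wp(\alpha)\sim 1/\alpha^2$, $\Gamma\to 1$, and \eqref{eqn:IzetaGamma}; this gives $\RE(I)\sim 6\omega_1\,\IM(\zeta)$, forcing every component of $\sigma_L$ to be bounded. Any hypothetical extra component would then be a compact smooth level curve of the harmonic function $\RE(I)$ on the plane punctured only at the finite set $\{\zeta_j,\pm\zeta_c\}$, and the maximum principle would demand that it enclose at least one singular point of $\RE(I)$. Since $\pm\zeta_c$ and each $\zeta_j$ already lie on $\R$ or on the two identified bands, and since a local expansion at $\zeta_j$ (where $\Omega$ has a simple square-root branch, giving $I(\zeta)-I(\zeta_j)\sim 2D\sqrt{\zeta-\zeta_j}$) exhibits exactly one band-ray issuing from $\zeta_j$, no room is left for an additional component. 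The most delicate step will be treating the degenerate boundary cases $\zeta_c=0$ (\ie $b=B(k)$) and the trivial-phase limits $b=k^2$ and $b=1$, where roots and turning points collide and one must verify by hand that the band structure described above persists by continuity.
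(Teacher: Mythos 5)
Your overall architecture matches the paper's proof: the real line belongs to $\sigma_L$; bands emanating from the roots of $\Omega$ can terminate only where $\d I/\d\zeta\to\infty$ (the roots) and cross only where $\d I/\d\zeta=0$ (at $\pm\zeta_c$); non-real components must be bounded; and closed components are excluded via harmonicity of $\RE I$ and the maximum principle. Your norm-conservation argument for $\R\subset\sigma_L$ (skew-Hermiticity of the $x$-matrix in \eqref{eqn:LaxPairRepr} for real $\zeta$) is in fact a clean alternative to the paper's computation that the real part of the relevant integrand is an exact derivative with zero average. However, your step placing the roots $\zeta_j$ in $\sigma_L$ fails as written: you infer $\zeta_j\in\sigma_L$ from the boundedness of the $\cL$-eigenfunction $\phi_x$ at $\lam=0$, but the squared-eigenfunction connection maps bounded solutions of \eqref{eqn:LaxPairRepr} to bounded eigenfunctions of $\cL$, not conversely --- and $\lam=0$ is precisely the exceptional value in Theorem \ref{thm:squaredEigCompleteness}, where bounded kernel elements of $\cL$ need not arise from bounded Lax solutions. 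What is actually needed, and what the paper invokes at this point, is that the Floquet exponent at the roots is real, i.e.\ $M(\zeta_j)=0\bmod 2\pi$ (Lemma \ref{lem:0atroots}), which directly produces a bounded Bloch solution of the Lax pair at $\zeta_j$.

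Two further weak points. First, your conclusion ``one band in the upper half-plane joining $\zeta_1$ to $\zeta_2$ and its conjugate joining $\zeta_3$ to $\zeta_4$'' is wrong when $b\leq B(k)$: there $\zeta_c\in\R$, the bands cross the real axis at $\mp\zeta_c$ (Lemma \ref{lem:LaxSpectrumIntersection}) and, by the reflection symmetry of Lemma \ref{lem:muSymmetry}, join the conjugate pairs $\zeta_2$ with $\zeta_3$ and $\zeta_1$ with $\zeta_4$; the upper/lower half-plane pairing occurs only for $b>B(k)$ (and for the dn solutions the bands lie on $i\R$). This does not endanger the statement of the lemma, which only asserts that each band connects two roots, but your ``in either case'' reasoning is internally inconsistent with the lemmas you cite. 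Second, the exclusion of extra closed components that meet or enclose a root of $\Omega$ is asserted (``no room is left'') rather than proved: the maximum-principle argument requires $\RE I$ to be harmonic throughout the enclosed region, so it says nothing once a branch point lies inside, and the one-ray local expansion at $\zeta_j$ does not by itself preclude a separate closed curve through or around $\zeta_j$. The paper closes this case by a different device: such a component would force the eigenvalue $\lam=0$ of $\sigma_\cL$ to have multiplicity greater than four, impossible because $\cL$ is a fourth-order differential operator. Note also that $\pm\zeta_c$ are critical points, not singularities, of $\RE I$, so listing them among the ``punctures'' plays no role in the maximum-principle step.
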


\begin{proof}
The fact that the entire real line is part of the Lax spectrum is proven in
\cite{deconinck2017stability} but we present a different, simpler proof that
does not rely on integrating $\mathcal{I}$ \eqref{eqn:Izeta}.
If $\zeta \in \R$, the only possibility
for a real contribution to the integral of $\mathcal{I}$ over a period $T(k)$
is through
\begin{align}
    \mathcal{E} := \frac{\phi \cc B}{A - \Omega},
\end{align}
since $A(x)$ is $T(k)$-periodic. Using the definitions for $A,~ B$, and $\phi$,
\begin{align}
    \RE \mathcal{E} &= \frac12 \D{x} \log(R^2  -2\zeta^2 + \omega + 2i\Omega),
\end{align}
which has zero average since $R^2$ is $T(k)$-periodic. It follows that $\R
\subset \sigma_L$. That the roots of $\Omega(\zeta)$ are in the Lax spectrum
follows from the fact that $M(\zeta_j) \in\R$ (Lemma~\ref{lem:0atroots}).
Because the coefficients of $\mathcal{L}$ are periodic, there can exist no
isolated eigenvalues of $\sigma_{\mathcal{L}}$.  It follows that the Lax
spectrum can be continued away from the roots of $\Omega$. In what follows, we
explain the shape of the spectrum emanating from the roots of $\Omega$ and show
that these branches and $\R$ constitute the Lax spectrum.

The operator \eqref{eqn:LaxPairRepr} is a second-order differential operator, so it
has two linearly independent solutions. The solutions obey
\begin{align}
  &\chi_1(x;\zeta) \sim \begin{pmatrix} e^{-i\zeta x} \\ 0\end{pmatrix}, &
  &\chi_2(x;\zeta) \sim \begin{pmatrix} 0 \\ e^{i\zeta x} \end{pmatrix},&
  &\text{as} \qquad \abs{\zeta}\to\infty.&
\end{align}
As $\abs{x}\to\infty$, the above two solutions are bounded if and only if $\zeta
\in \R$. Therefore, $\R$ is the only unbounded component of $\sigma_L$.  We
examine all possibilities for the finite components of $\sigma_L$ in the next
two paragraphs.

Finite components of the spectrum can only terminate when $\d I/\d \zeta \to
\infty$ by the implicit function theorem. This only occurs at the roots of
$\Omega$. A component of the spectrum can only cross another component when $\d
I/\d\zeta = 0$. This only occurs at $\zeta_c$ which is real if the conditions of
Lemma~\ref{lem:LaxSpectrumIntersection} are satisfied and imaginary otherwise.
It follows that the spectral bands emanating from the roots of $\Omega$ must
intersect either the real or imaginary axis. For the dn solutions, this
band lies entirely on the imaginary axis (see Section \ref{section:dn}). Since
there are no other points at which $\d I/\d \zeta = 0$, there can be no other
non-closed curves in the spectrum. However, we must still rule out closed curves
along which it is not necessary that $\d I/\d\zeta = 0$ anywhere.

Since $I$ is an analytic function away from the roots of $\Omega$ and $\zeta =
\infty$, $\RE I$ is a harmonic function of $\zeta$ away from the roots of
$\Omega$, which we will deal with next. Therefore, if the spectrum contained a
closed curve, we would have $\RE I = 0$ on the interior of that closed curve by
the maximum principle for harmonic functions. If this were true, then it must
also be that the directional derivative of $I(\zeta)$ vanishes on the interior
of the region bounded by the closed curve. However $\d I/\d\zeta = 0$ only at
two points which are either on the real or imaginary axis \eqref{eqn:zetac}. It
follows that there are no closed curves in $\sigma_L$ disjoint from the roots of
$\Omega$. If there were a closed curve which was tangent to the roots of
$\Omega$, the above argument would not hold since $\RE I$ is not analytic at the
root. However, such a curve would imply that the origin of $\sigma_\cL$ has
multiplicity greater than 4 (the origin has multiplicity $4$ since the $4$ roots
of $\Omega$ map to the origin). This is not possible since $\cL$ is a
fourth-order differential operator, and such a tangent curve can not exist.
\end{proof}

\begin{remark}
  The above result may also be proven by examining the large-period limit of
  \eqref{eqn:ellipticSolns} which is the soliton solution of \eqref{eqn:NLS}.
  The spectrum of the soliton is well known \cite{kaup1990perturbation}. Using
  the results of \cite{gardner-LongWavelength, sandstede-scheel-longPeriod,
  yang2018convergence}, the spectrum of the periodic solutions with large period
  can be understood. Once the spectrum for solutions with large period is
  understood, the analysis presented in this paper applies and can be extended
  to solutions with smaller period by continuity.

\end{remark}

\begin{figure}
    \begin{tikzpicture}
      \node[inner sep=0pt] (paramSpace) at (0,0)
      {\includegraphics[width=0.25\textwidth]{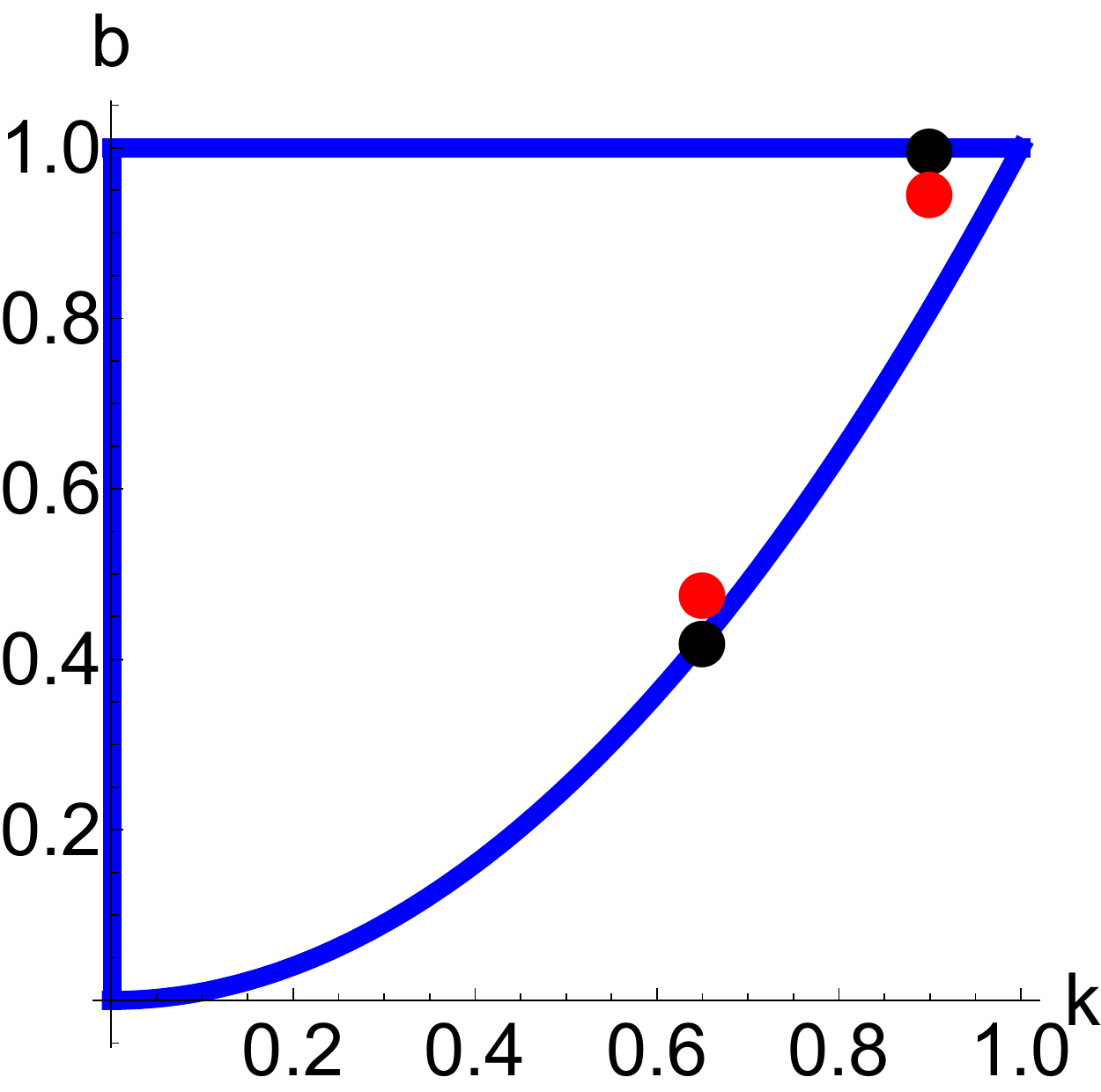}};
      \node[draw, inner sep=0pt] (dnSoln) at (6,2.2)
      {\includegraphics[width=0.25\textwidth]{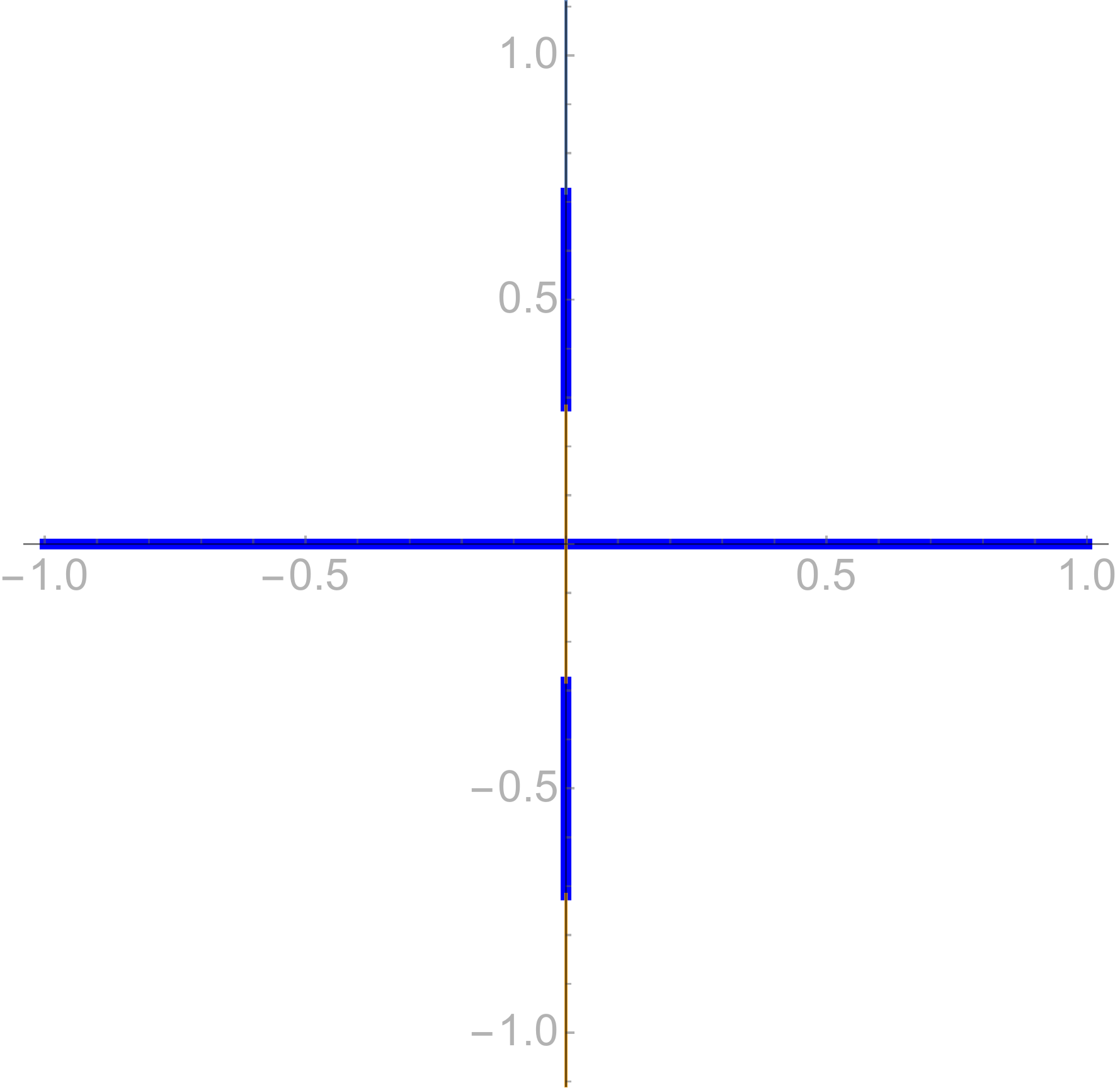}};
      \node[draw, inner sep=0pt] (dnSoln-perturb) at (6,-2.2)
      {\includegraphics[width=0.25\textwidth]{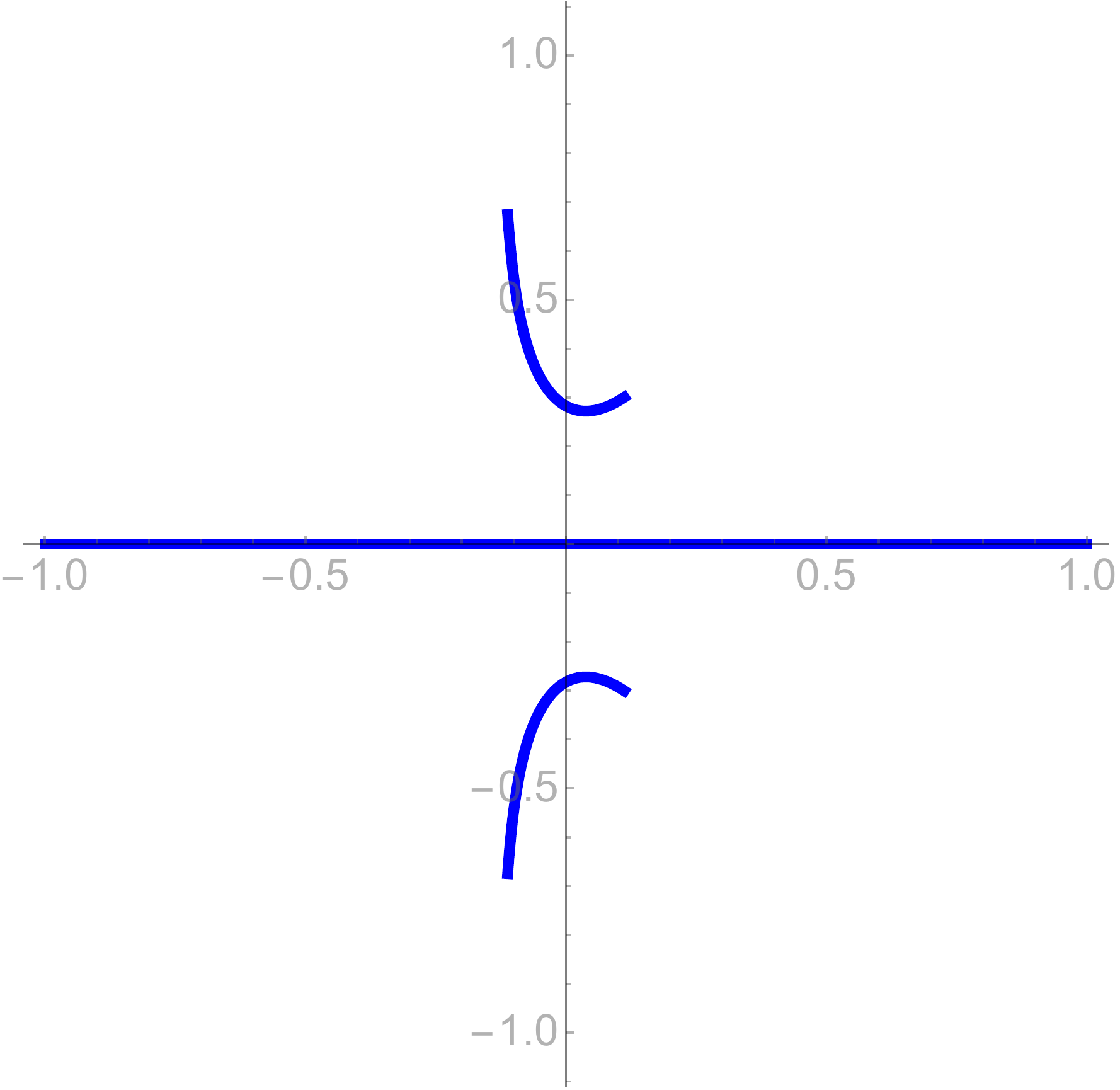}};
      \node[draw, inner sep=0pt] (cnSoln) at (-6,2.2)
      {\includegraphics[width=0.25\textwidth]{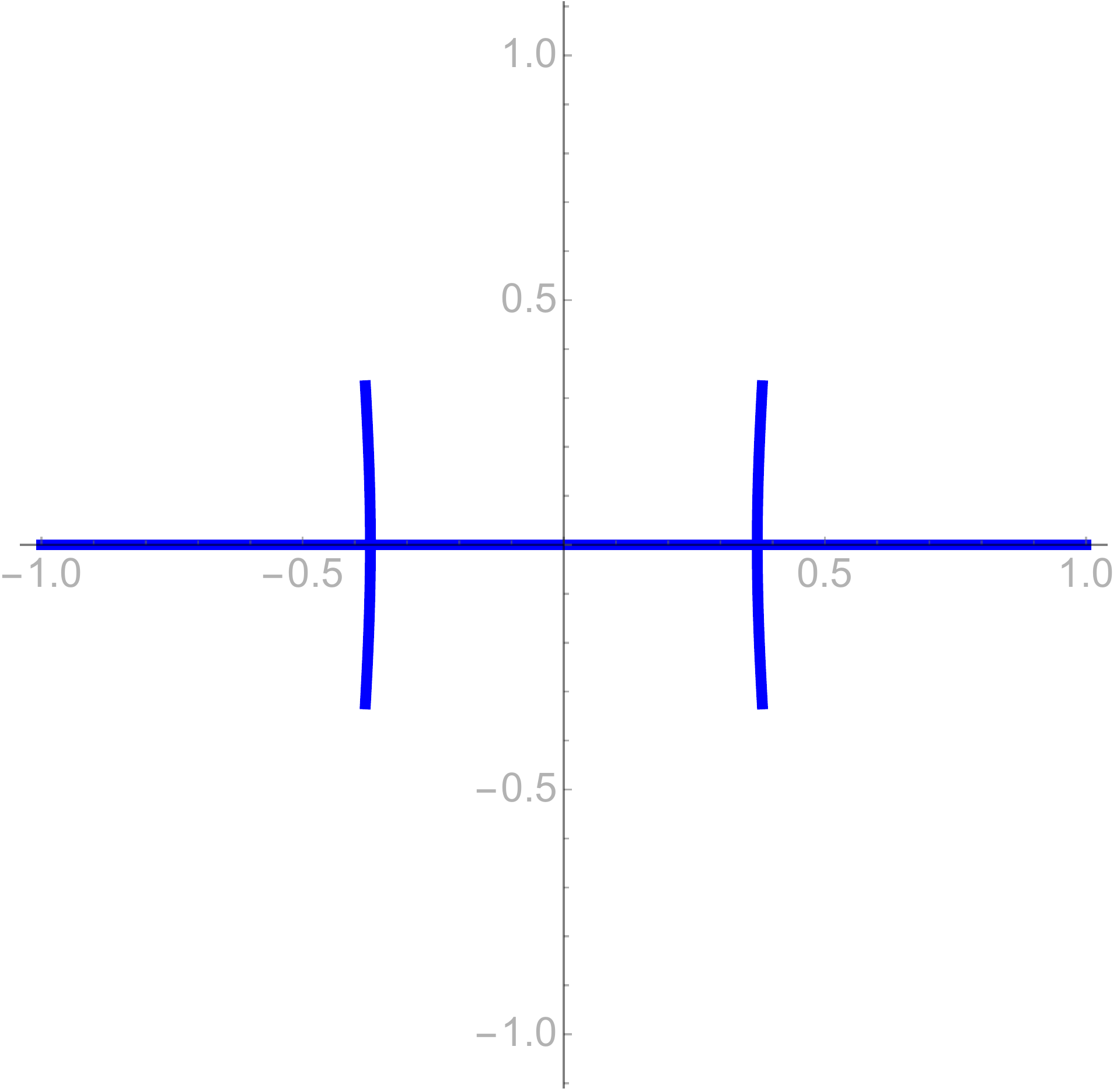}};
      \node[draw, inner sep=0pt] (cnSoln-perturb) at (-6,-2.2)
      {\includegraphics[width=0.25\textwidth]{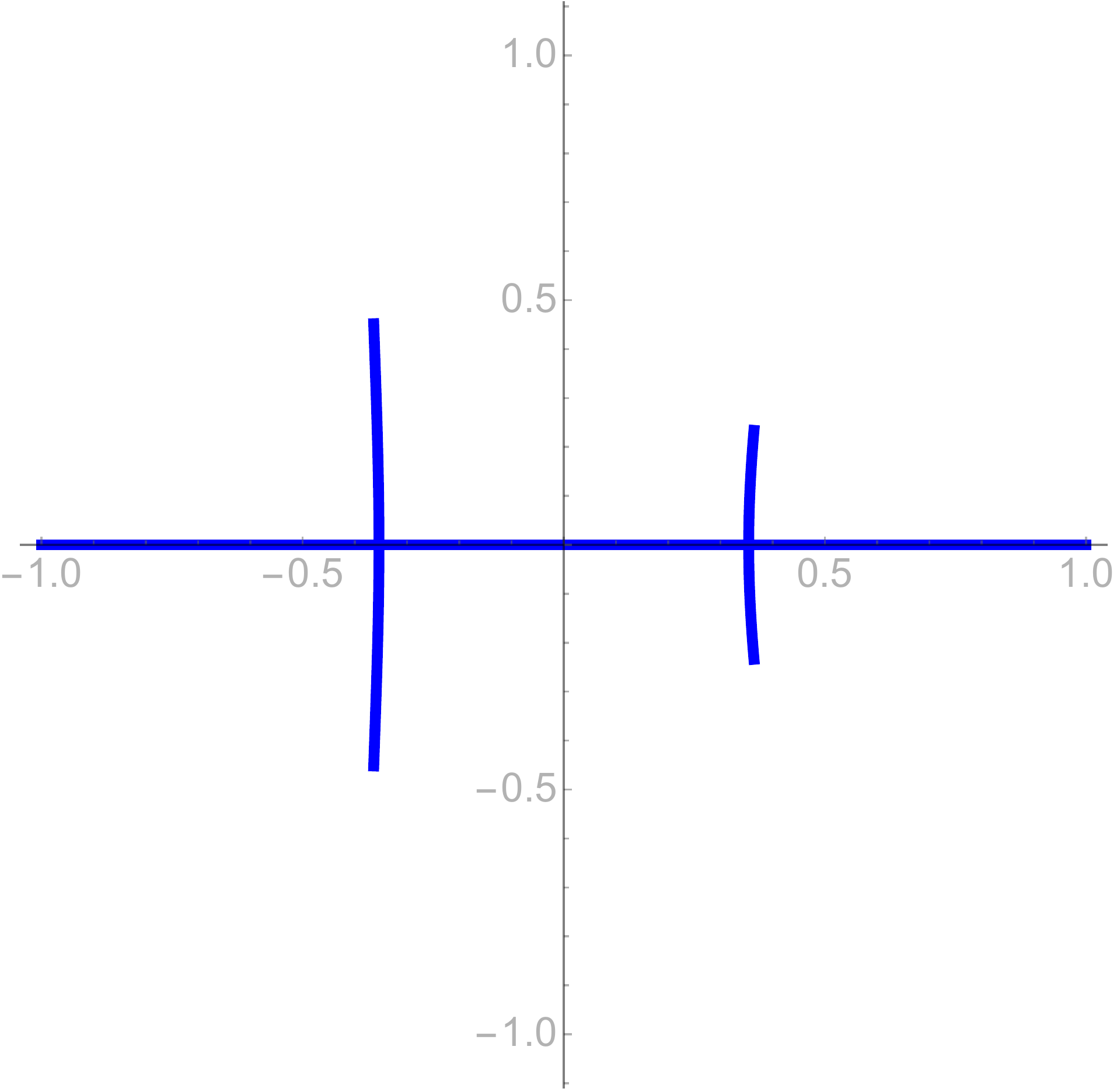}};
      \node (dnAnchor) at (1.4,1.52) {};
      \draw[decoration={markings,mark=at position 1 with {\arrow[scale=2,
      >=stealth]{>}}}, postaction={decorate}] (dnSoln) -- (dnAnchor) ;
      \node (dn-perturb-Anchor) at (1.4,1.35) {};
      \draw[decoration={markings,mark=at position 1 with {\arrow[scale=2,
      >=stealth]{>}}}, postaction={decorate}] (dnSoln-perturb) --
      (dn-perturb-Anchor) ;
      \node (cnAnchor) at (0.57,-0.45) {};
      \draw[decoration={markings,mark=at position 1 with {\arrow[scale=2,
      >=stealth]{>}}}, postaction={decorate}] (cnSoln) to [out=-10, in=180]
      (cnAnchor) ;
      \node (cn-perturb-Anchor) at (0.56,-0.1) {};
      \draw[decoration={markings,mark=at position 1 with {\arrow[scale=2,
      >=stealth]{>}}}, postaction={decorate}] (cnSoln-perturb) to
      [out=10,in=-10] (cn-perturb-Anchor) ;
      \node[draw] at (-3.5,3) {(i)};
      \node[draw] at (3.5,3) {(ii)};
      \node[draw] at (-3.4,-3) {(iii)};
      \node[draw] at (3.45,-3) {(iv)};
    \end{tikzpicture}
  \caption{\label{fig:LaxSpectrumPics} Plots of the Lax spectrum. $\RE\zeta$ \vs
  $\IM\zeta$ for $\zeta \in \sigma_L$. Plots (i) and (ii) are for the cn and dn
  solutions respectively. Plots (iii) and (iv) are for nontrivial-phase
  solutions, where the symmetry in all quadrants is broken. Red dots indicate
  nontrivial-phase solutions, which are plotted in the lower panels. Parameters
  are chosen close together to contrast nearby solutions of trivial and
  nontrivial phase.}
\end{figure}

\begin{lemma}\label{lem:NoDoublePointsSpine}
  $0\leq M(\zeta) <2\pi$ for $\zeta \in \sigma_L\setminus\R$ \te{with equality
  only at the end of the bands, when $\Omega(\zeta)=0$.}
\end{lemma}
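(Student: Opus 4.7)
The plan is to prove the lemma in three stages: determine $M$ at the band endpoints, use monotonicity of $|M|$ on each half of a band, and then rule out $M$ reaching $2\pi$ in the interior.

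First, I would pin down the value of $M$ at the four roots of $\Omega$. At each such root $\zeta_j$ the squared-eigenfunction connection gives $\lam = 2\Omega(\zeta_j) = 0$, which lies at the origin of $\sigma_\cL$. The origin is attained by eigenfunctions generated from the continuous symmetries of NLS (phase rotation, spatial translation), whose infinitesimal generators are periodic with fundamental period $T(k)$, so $\mu(\zeta_j) \equiv 0 \pmod{2\pi/T(k)}$. Combined with Lemma~\ref{lem:0atroots}, which places $M(\zeta_j) \in \R$, this forces $M(\zeta_j) \equiv 0 \pmod{2\pi}$. I then fix the free integer $n$ in \eqref{eqn:muTk} so that $M(\zeta_j) = 0$ at both endpoints of the chosen band.

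Second, I would invoke Lemma~9.2 of \cite{deconinck2017stability}: $|M(\zeta)|$ varies monotonically along each band, with its sole critical point being the turning point $\pm\zeta_c$ at which $\d I/\d\zeta = 0$. Together with continuity and $M(\zeta_j)=0$ at both endpoints, monotonicity forces $M$ to have constant sign on the interior of the band. Since the Floquet multipliers at a given $\zeta$ come as reciprocal pairs $e^{\pm i M(\zeta)}$, I can negate the branch if necessary to make $M > 0$ on the interior, with maximum $M_c := M(\pm\zeta_c)$ attained at the turning point.

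Third, I would show the strict inequality $M_c < 2\pi$. If instead $M_c \ge 2\pi$, then by continuity and strict monotonicity there is an interior point $\zeta^*$ on each half of the band at which $M(\zeta^*)=2\pi$, so the Floquet multiplier $e^{iM(\zeta^*)} = 1$; through the squared-eigenfunction connection \eqref{eqn:squaredEF} this would produce an additional coperiodic eigenfunction of $\cL$ at the nonzero eigenvalue $\lam^* = 2\Omega(\zeta^*)$. The intended contradiction is a preimage count: because the Lax system is $2\times 2$, the restriction of $\zeta \mapsto \mu(\zeta)$ to a single band must be at most two-to-one onto its image in the Brillouin zone $[0, 2\pi/T(k))$; but $M_c \ge 2\pi$ would force the band to cover some $\mu$-values more than twice, producing more linearly independent Bloch solutions from one band than the Lax pair permits.

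The principal obstacle is the third step, i.e.\ promoting the preimage heuristic into a rigorous exclusion of $M_c = 2\pi$. I would complete it by continuity in the parameters $(k,b)$: the Stokes limit $k=0$ treated in Section~\ref{section:Stokes} gives $M_c$ explicitly and shows $M_c < 2\pi$ throughout the admissible regime; extending across the parameter region of Figure~\ref{fig:parameterSpace}, $M_c$ depends continuously on $(k,b)$, and any crossing of $M_c = 2\pi$ would require a new preimage of $\mu=0$ to appear along the band, which is precisely what the two-to-one count forbids. Combining these three stages yields $0 \le M(\zeta) < 2\pi$ on $\sigma_L \setminus \R$ with equality exactly at the band endpoints $\zeta = \zeta_j$.
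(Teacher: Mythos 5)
Your first two steps (the endpoint values $M(\zeta_j)=0 \bmod 2\pi$ and the monotonicity of $M$ along each band up to the turning point $\pm\zeta_c$) match the paper's setup, but the entire content of the lemma sits in your third step, and there the argument has a genuine gap. The claimed ``at most two-to-one'' bound on $\zeta\mapsto\mu(\zeta)$ restricted to a band does not follow from the Lax pair being $2\times 2$: distinct points of a band sharing the same Floquet exponent $\mu$ correspond to different eigenvalues $\lam=2\Omega(\zeta)$ of the same operator $\cL_\mu$, and $\cL_\mu$ has infinitely many eigenvalues, so no count of Bloch solutions is violated by extra preimages (the band already covers each admissible $\mu$ value twice, once on either side of the turning point; see Figure \ref{fig:muSymmetry}). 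Your fallback continuity-in-$(k,b)$ argument invokes the same forbidden-preimage count to exclude a crossing of $M_c=2\pi$, so it inherits the gap; note also that at such a crossing the new preimage of $\mu\equiv 0$ would first appear at the real turning point $-\zeta_c$, where $\lam\in i\R$, so no spectral contradiction is available there either.

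What actually closes the argument in the paper is spectral rather than combinatorial: if $M(\zeta^*)=0\bmod 2\pi$ at an interior band point $\zeta^*$, then $\lam^*=2\Omega(\zeta^*)$ is a coperiodic eigenvalue, and Lemmas \ref{lem:noCrossingscn} and \ref{lem:noCrossings} give $\RE(\lam^*)\neq 0$, contradicting the coperiodic spectral stability established in \cite{gallay_haragus_OrbitalStability}. This settles the cn solutions and the NTP solutions with $b\leq F(k)$ or $b\geq G(k)$; for the remaining window $F(k)<b<G(k)$ (where Lemma \ref{lem:noCrossings} relies on a numerical check) the paper argues instead by continuity that a periodic eigenvalue could only enter a band through its intersection with the real axis, observes that such a point must lie on the curve \eqref{eqn:parameterizedOmegaLT0}, and shows that this curve meets the real axis only where $Q(\zeta_r)=0$ \eqref{eqn:Qzr}, namely at a double root with $\tilde\zeta_1^2=\omega/6$, which is bounded away from $\zeta_c^2$; the right half plane is then handled via $2\pi>M(-\zeta_c)>M(\zeta_c)$. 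Without importing the coperiodic stability result together with the noCrossings-type information (or some substitute for them), your proposal does not exclude $M=2\pi$ on the interiors of the bands, which is precisely what the lemma asserts.
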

\begin{proof}
  See Appendix \ref{Appendix:NoDoublePointsProof}.
\end{proof}
\begin{remark}
  We note that Lemma~\ref{lem:NoDoublePointsSpine} can be rephrased in the
  language of the Floquet discriminant approach
\cite{ablowitz1996computational, calini2011squared, MR1123280, leeThesis} as
the nonexistence of \textit{periodic eigenvalues} (those with
$M(\zeta)=0\mod{2\pi})$ on the
  interior of the complex bands of spectra for the elliptic solutions. Before
  this result, three things were known about the existence of periodic
  eigenvalues on the complex bands: (i) The number of periodic eigenvalues on
  the complex bands was known to have an explicit bound \cite{leeThesis}; (ii)
  for the symmetric solutions (our cn and dn solutions), the number of periodic
  eigenvalues is zero \cite{calini2011squared}; and (iii) the nonexistence of
  periodic eigenvalues on the complex band had been verified numerically
  \cite{calini2011squared, lafortunePersonal}.
  Lemma~\ref{lem:NoDoublePointsSpine} settles this question: there are no
  periodic eigenvalues on the complex bands of the Lax spectrum for the elliptic
  NLS solutions.
\end{remark}

\vspace{1em}
\teee{\subsection{Spectral stability of the elliptic
solutions}\label{section:EllipticSpectralStability}}

Results about spectral stability with respect to subharmonic perturbations are
found in \cite[Section 9]{deconinck2017stability}. There, sufficient conditions
for stability with respect to subharmonic perturbations are found in Theorems
9.1, 9.3, 9.4, 9.5 and 9.6 for spectra with different topology. In this section
we present these known sufficient conditions for spectral stability while
providing more detailed proofs. For some choices of parameters we show that the
sufficient condition is necessary and comment on progress made towards showing
that this condition is necessary for the entire parameter space in Appendix
\ref{Appendix}.

We begin by showing that $\Omega: \R\cap\sigma_L \mapsto \sigma_\cL \cap i\R$,
and therefore the real line of the Lax spectrum always maps to stable modes.
Showing that these (and the roots of $\Omega$) are the only parts of the Lax
spectrum mapping to stable modes is an important challenge (see Appendix
\ref{Appendix}).

\begin{lemma}\label{lem:zeta_real}
    If $\zeta \in \R$, then $\Omega(\zeta) \in i\R$.
\end{lemma}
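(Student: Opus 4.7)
The plan is to argue directly from the explicit form of $\Omega^2$ given in \eqref{eqn:Omega}. Since $\omega$, $c$, $b$, and $k$ are real under the parameter constraints \eqref{eqn:parameterSpace}, the quantity $\Omega^2(\zeta)$ is a polynomial in $\zeta$ with real coefficients. Hence if $\zeta\in\R$, then $\Omega^2(\zeta)\in\R$ automatically. The content of the lemma is therefore the inequality $\Omega^2(\zeta)\le 0$ for real $\zeta$.

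First I would factor $\Omega^2$ using its roots \eqref{eqn:zetaRoots},
\begin{align*}
  \Omega^2(\zeta) = -(\zeta-\zeta_1)(\zeta-\zeta_2)(\zeta-\zeta_3)(\zeta-\zeta_4),
\end{align*}
where the leading $-1$ matches the $-\zeta^4$ term of \eqref{eqn:Omega}. The next step is to observe from \eqref{eqn:zetaRoots} that, under $0\le k<1$ and $k^2\le b\le 1$, all the square roots appearing are real and non-negative, so $\zeta_4=\overline{\zeta_1}$ and $\zeta_3=\overline{\zeta_2}$ in every case (the NTP regime, the cn limit $b=k^2$, and the dn limit $b=1$, where the conjugation identity still holds because the corresponding roots become purely imaginary).

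Pairing the factors accordingly, for $\zeta\in\R$ we have
\begin{align*}
  (\zeta-\zeta_1)(\zeta-\zeta_4) = |\zeta-\zeta_1|^2\ge 0,\qquad
  (\zeta-\zeta_2)(\zeta-\zeta_3) = |\zeta-\zeta_2|^2\ge 0,
\end{align*}
so that $\Omega^2(\zeta)=-|\zeta-\zeta_1|^2\,|\zeta-\zeta_2|^2\le 0$. Taking the square root yields $\Omega(\zeta)\in i\R$.

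There is no real obstacle here; the only care needed is to confirm the conjugate pairing of the $\zeta_j$ uniformly across the parameter region, which is immediate from the explicit formulas. (An equally short alternative, should one prefer not to invoke the factorization, is to write $\Omega^2 = -\tfrac{1}{16}\bigl(4\zeta^2-2(1+k^2)+4b+8\sqrt{1-b}\,\zeta\bigr)\bigl(4\zeta^2-2(1+k^2)+4b-8\sqrt{1-b}\,\zeta\bigr)\cdot\tfrac{1}{\cdots}$-style real quadratic decomposition directly from \eqref{eqn:Omega}, but the root-factorization route is cleaner and reuses notation already in the paper.)
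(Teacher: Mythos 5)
Your proof is correct, but it follows a genuinely different route from the paper's. You work directly with the quartic $\Omega^2(\zeta)$ from \eqref{eqn:Omega}: real coefficients give $\Omega^2(\zeta)\in\R$ for real $\zeta$, and the factorization $\Omega^2(\zeta)=-(\zeta-\zeta_1)(\zeta-\zeta_2)(\zeta-\zeta_3)(\zeta-\zeta_4)$ together with the conjugate pairing $\zeta_4=\overline{\zeta_1}$, $\zeta_3=\overline{\zeta_2}$ from \eqref{eqn:zetaRoots} yields $\Omega^2(\zeta)=-\abs{\zeta-\zeta_1}^2\abs{\zeta-\zeta_2}^2\le 0$, hence $\Omega\in i\R$; the degenerate cases ($k=0$, where $\zeta_1=\zeta_4$ become real) cause no trouble since the paired factor is then a square, and although the explicit root list in \eqref{eqn:zetaRoots} corresponds to one sign convention for $c$, the conjugate symmetry of the root set holds for either sign because the coefficients are real. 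The paper instead argues structurally: for $\zeta\in\R$ the entry $A$ in \eqref{eqn:LaxPairRepr} is purely imaginary and $C=-B\cc$, so the matrix governing the $t$-evolution is skew-adjoint, and separation of variables forces its eigenvalues $\pm\Omega$ onto the imaginary axis. Your computation buys an elementary, self-contained verification tied to the explicit genus-one formulas (and in passing reproves that the $\zeta_j$ pair off under conjugation), whereas the paper's one-line argument needs no explicit roots at all and — as the paper's own remark emphasizes — extends verbatim to other stationary solutions of the AKNS hierarchy, where closed-form root expressions are not available. The parenthetical ``real quadratic decomposition'' you sketch at the end is not needed and, as written, is too vague to stand on its own, but your main argument does not rely on it.
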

\begin{proof}
  If $\zeta \in \R$, then the \teee{matrix defining the $t$-evolution in
    \eqref{eqn:LaxPairRepr}} is skew-adjoint, and separation of variables yields
    imaginary $\Omega$.
\end{proof}
\begin{remark}
  \teee{The above result} is proven in \cite{deconinck2017stability}. We
present the proof above because it is significantly simpler and is extendable
to other stationary solutions of the AKNS hierarchy. Work is currently in
progress to extend this and other arguments in this paper to other equations,
both in the AKNS hierarchy and not \cite{upsalInPrep}.
\end{remark}

\subsubsection{Trivial-phase solutions, $b=1$ (dn solutions) or $b=k^2$ (cn solutions)}
The trivial-phase solutions have $c=0$ so that
\begin{align}
  \Omega^2(\zeta) &= -\zeta^4 + \omega \zeta^2 - \frac1{16}\lt(4\omega b + 3b^2 +
  (1-k^2)^2\rt),
\end{align}
and $\Omega^2(\zeta) = \Omega^2(-\zeta)$. Since $\Omega^2(i\R)\subset \R$,
$\lam(\zeta)$ is real or imaginary for $\zeta \in i\R$. Along with Lemmas~
\ref{lem:muSymmetry} and \ref{lem:laxSpectrumTopology}, this implies that
trivial-phase solutions have symmetric Lax spectrum across both the real and
imaginary axes (see Figure \ref{fig:LaxSpectrumPics}).

\paragraph{Solutions of dn-type, $b=1$}\label{section:dn}

When $b=1$, $\zeta_j \in i\R$ \eqref{eqn:zetaRoots} and
\begin{align}
  \IM(\zeta_2) > \IM(\zeta_1) > 0 > \IM(\zeta_4) > \IM(\zeta_3),
\end{align}
with $\zeta_2 = -\zeta_3$ and $\zeta_1 = -\zeta_4$. The following lemmas are
needed. The proofs are found in Appendix \ref{Appendix:proofs}.

\begin{lemma}\label{lem:M0dn}
  $M(\zeta_j) = T(k)\mu(\zeta_j) = 0 \mod 2\pi$ for the dn solutions.
\end{lemma}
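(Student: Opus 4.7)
The plan is to compute $M(\zeta_j)$ directly from \eqref{eqn:muTk} and \eqref{eqn:IzetaGamma}, exploiting the algebraic simplifications that occur at roots of $\Omega$. For dn solutions, $b = 1$ and $c = 0$, so \eqref{eqn:thetaTk} gives $\theta(T(k)) = 0$, and \eqref{eqn:muTk} collapses to
\begin{equation*}
M(\zeta_j) \equiv -2i\,I(\zeta_j) \pmod{2\pi}.
\end{equation*}
The lemma is therefore equivalent to the claim $I(\zeta_j) \in i\pi\mathbb{Z}$, which I would establish by an explicit evaluation.

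First, since $\Omega(\zeta_j) = 0$, equation \eqref{eqn:wpalphasquared} forces $\Gamma_j^2 = 1$, so $\Gamma_j \in \{+1, -1\}$ with the sign tied to the branch of $\wp'(\alpha_j)$. Next I would pin down $\alpha_j$ from $\wp(\alpha_j) = -2\zeta_j^2 + \omega/3$ in \eqref{eqn:alpha}. Using the dn values $\zeta_j = i y_j$ with $y_j = \pm (1 \pm \sqrt{1-k^2})/2$ and $\omega = (k^2-2)/2$, a short algebraic computation gives $\wp(\alpha_j) - e_1 = \pm\sqrt{1-k^2}$. For the dn lattice one checks $(e_1 - e_2)(e_1 - e_3) = 1 - k^2$, so the quarter-period identity $\wp(\omega_1/2) = e_1 + \sqrt{(e_1-e_2)(e_1-e_3)}$ together with its shift by $\omega_3$ (verifiable via $\wp(z + \omega_3) - e_3 = k^2 \sn^2(z,k)$) identifies $\alpha_j$ modulo the lattice as either $\omega_1/2$ (for $\zeta_2, \zeta_3$) or $\omega_1/2 + \omega_3$ (for $\zeta_1, \zeta_4$).

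With $\alpha_j$ a quarter-period, I would substitute into
\begin{equation*}
I(\zeta_j) = -2i\zeta_j\omega_1 + 2\Gamma_j\bigl(\zeta_w(\alpha_j)\omega_1 - \eta_1\alpha_j\bigr),\qquad \eta_1 := \zeta_w(\omega_1).
\end{equation*}
Because $\omega_1 = K(k) \in \mathbb{R}$ and $\omega_3 = iK(\sqrt{1-k^2}) \in i\mathbb{R}$, the term $-2i\zeta_j\omega_1 = 2y_j K(k)$ is real, and the Weierstrass combination splits cleanly under complex conjugation. Using Legendre's relation $\eta_1\omega_3 - \eta_3\omega_1 = i\pi/2$ together with the evaluation of $\zeta_w$ at the quarter periods, the sign $\Gamma_j$ can be chosen so that the real parts cancel exactly, leaving an imaginary residue in $\pi\mathbb{Z}$. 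This gives $I(\zeta_j) \in i\pi\mathbb{Z}$ and the lemma follows.

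\textbf{Main obstacle.} The computationally nontrivial step is the explicit evaluation of $\zeta_w(\omega_1/2)$ and $\zeta_w(\omega_1/2 + \omega_3)$; these are not standard tabulated constants. I would derive them either from the duplication formula for $\zeta_w$, from the Weierstrass $\sigma$-function representation $\zeta_w = \sigma'/\sigma$ combined with the quasi-periodicity of $\sigma$, or by direct integration of $\wp$ along a lattice-aligned contour. Once these quarter-period zeta values are in hand, the remaining bookkeeping --- matching $\Gamma_j$ to the correct branch of $\wp'(\alpha_j)$ at each of the four roots and absorbing the $2\pi n$ ambiguity in \eqref{eqn:muTk} into the lattice ambiguity in $\alpha_j$ --- is straightforward.
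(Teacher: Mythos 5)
Your proposal follows essentially the same route as the paper's proof: identify $\alpha_j$ as the quarter-period points $\omega_1/2$ and $\omega_1/2\pm\omega_3$ modulo the lattice, evaluate $\wp'$ and $\zeta_w$ there, substitute into \eqref{eqn:IzetaGamma}, and invoke Legendre's relation; the quarter-period values of $\zeta_w$ that you flag as the main obstacle are exactly what the paper computes from the reflection $\zeta_w(\omega_1/2)=\zeta_w(\omega_1-\omega_1/2)$ and the addition formula, using $\wp'(\omega_1/2)=-2(1-k^2+\sqrt{1-k^2})$. One small caution: $\Gamma_j$ is not a freely choosable sign but is fixed by the branch of $\wp'(\alpha_j)$ once $\alpha_j$ is selected (for instance $\Gamma=-1$ when $\alpha=+\omega_1/2$ at $\zeta_2$), although, consistent with your closing remark, any coherent choice of branch and lattice representative gives $I(\zeta_j)\in i\pi\mathbb{Z}$ and hence $M(\zeta_j)=0 \bmod 2\pi$.
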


\begin{lemma}\label{lem:dnRoots}
  Let $\zeta \in i\R$ with either $\abs{\IM(\zeta)}\geq\IM(\zeta_2)$ or
  $\abs{\IM(\zeta)}\leq
 \IM(\zeta_1)$. Then $\Omega(\zeta)\in i\R$.
\end{lemma}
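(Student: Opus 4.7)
The plan is to exploit the trivial phase $c=0$ of the dn solutions, which forces $\Omega^2(\zeta)$ to be a polynomial in $\zeta^2$ alone, and then simply check the sign of $\Omega^2$ on the imaginary axis. Since $b=1$ gives $c^2 = b(1-b)(b-k^2)=0$ and $\omega=(k^2-2)/2$, the expression \eqref{eqn:Omega} reduces to a biquadratic in $\zeta$. Moreover, inspection of \eqref{eqn:zetaRoots} shows that at $b=1$ the four roots of $\Omega^2$ satisfy $\zeta_4=-\zeta_1$ and $\zeta_3=-\zeta_2$, and all four lie on the imaginary axis, consistent with the ordering $\IM(\zeta_2) > \IM(\zeta_1) > 0$ stated just above the lemma.

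Consequently, I would write
\begin{align*}
  \Omega^2(\zeta) = -(\zeta^2 - \zeta_1^2)(\zeta^2 - \zeta_2^2),
\end{align*}
where $\zeta_1^2 = -\IM(\zeta_1)^2$ and $\zeta_2^2 = -\IM(\zeta_2)^2$ are negative reals. Parametrizing the imaginary axis by $\zeta = iy$ with $y\in\R$ then gives
\begin{align*}
  \Omega^2(iy) = -\bigl(\IM(\zeta_1)^2 - y^2\bigr)\bigl(\IM(\zeta_2)^2 - y^2\bigr).
\end{align*}
The condition $\Omega(iy)\in i\R$ is equivalent to $\Omega^2(iy)\leq 0$, i.e.\ to the two factors above having the same sign. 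Since $\IM(\zeta_2) > \IM(\zeta_1) > 0$, both factors are positive exactly when $|y|\leq \IM(\zeta_1)$ and both are negative exactly when $|y|\geq \IM(\zeta_2)$, which is precisely the hypothesis of the lemma.

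There is essentially no obstacle here beyond the bookkeeping required to verify the factorization and the signs of $\zeta_1^2,\zeta_2^2$; all the substantive input is the vanishing of $c$ and the pairing of the roots under $\zeta\mapsto-\zeta$ at $b=1$. The only thing to watch out for is the accounting at the endpoints $|y|=\IM(\zeta_j)$, where $\Omega^2=0$ and the conclusion $\Omega\in i\R$ still holds (trivially).
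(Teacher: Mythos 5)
Your proposal is correct and is essentially the paper's argument: restrict to $\zeta=i\xi$, observe that with $c=0$ the quantity $\Omega^2(i\xi)$ is real, and conclude $\Omega\in i\R$ exactly when $\Omega^2\leq 0$, which happens for $\abs{\xi}\leq\IM(\zeta_1)$ or $\abs{\xi}\geq\IM(\zeta_2)$. The only cosmetic difference is that you read off the sign from the factorization $\Omega^2=-(\zeta^2-\zeta_1^2)(\zeta^2-\zeta_2^2)$ through the known roots, whereas the paper writes $\Omega^2(i\xi)=-\xi^4-\tfrac12(k^2-2)\xi^2-\tfrac{k^4}{16}$ and solves the resulting quadratic inequality in $\xi^2$.
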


\noindent The above lemmas allow us to find necessary and sufficient conditions
on the spectral stability of dn solutions.

\begin{theorem}
  The dn solutions ($b=1$) are \te{spectrally} stable \te{with respect to
  perturbations of the same period as the underlying solution and no other
subharmonic perturbations.}
\end{theorem}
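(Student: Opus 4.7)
The plan is to combine the squared-eigenfunction connection with the explicit shape of the dn Lax spectrum and then filter by the Floquet exponent $\mu$ to isolate the $P$-subharmonic modes. For dn solutions, all four roots $\zeta_j$ of $\Omega^2$ lie on the imaginary axis with $0 < \IM\zeta_1 < \IM\zeta_2$ and $\zeta_3 = -\zeta_2$, $\zeta_4 = -\zeta_1$. Combining Lemma~\ref{lem:dnRoots} (which forces $\Omega(\zeta) \in i\R$ on $i\R$ outside the segment with $\IM\zeta \in (\IM\zeta_1, \IM\zeta_2)$) with Lemma~\ref{lem:laxSpectrumTopology}, I would identify the two finite bands of $\sigma_L$ as the imaginary-axis segments $[\zeta_1, \zeta_2]$ and $[\zeta_3, \zeta_4]$; substituting $\zeta = i\eta$ into the polynomial for $\Omega^2$ shows $\Omega^2(i\eta) > 0$ strictly in the interior of each band, so $\Omega(\zeta) \in \R \setminus \{0\}$ there.

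Next, I would invoke Theorem~\ref{thm:squaredEigCompleteness} to pass from $\sigma_L$ to $\sigma_\cL$ via $\lam = 2\Omega(\zeta)$. By Lemma~\ref{lem:zeta_real}, $\Omega(\R) \subset i\R$, which contributes only stable modes; the band endpoints $\zeta_j$ produce $\lam = 0$; and the band interiors are the only source of real, nonzero $\lam$, hence the only candidate instabilities. Spectral stability with respect to $P$-subharmonic perturbations then reduces to asking which band points satisfy $M(\zeta) = T(k)\mu(\zeta) \equiv 2\pi m/P \pmod{2\pi}$ for some $m \in \{0, \ldots, P-1\}$. By Lemma~\ref{lem:M0dn}, the endpoints satisfy $M(\zeta_j) \equiv 0 \pmod{2\pi}$, while Lemma~\ref{lem:NoDoublePointsSpine} confines $M$ to $[0, 2\pi)$ on the band with $M = 0$ attained only at the endpoints.

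For coperiodic perturbations ($P=1$, $m=0$), the condition $M(\zeta) = 0$ is met only at the band endpoints, where $\lam = 0$. Combined with the imaginary contribution from $\R$, this gives $\sigma_1 \subset i\R$, so dn is spectrally stable with respect to coperiodic perturbations. For $P \geq 2$ and any $m \in \{1, \ldots, P-1\}$, the target value $2\pi m/P$ lies in $(0, 2\pi)$. Using Lemma~9.2 of \cite{deconinck2017stability} (monotonicity of $M$ along $\sigma_L$ up to a turning point where $\d I/\d\zeta = 0$) together with the boundary data $M(\zeta_1) = 0$ and $M(\zeta_2) \equiv 0 \pmod{2\pi}$, a continuous lift argument shows that $M$ sweeps through every value in $(0, 2\pi)$ as $\zeta$ traverses the band. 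By the intermediate value theorem, each value $2\pi m/P$ is attained at some interior $\zeta^\ast$, where $\Omega(\zeta^\ast) \in \R \setminus \{0\}$ furnishes a real, nonzero eigenvalue in $\sigma_\mu$, certifying spectral instability with respect to every $P$-subharmonic perturbation with $P \geq 2$.

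The main obstacle is this last step: verifying that the continuous lift of $M$ along the band covers the entire interval $(0, 2\pi)$, rather than turning back before reaching some threshold. This is delicate because the turning point $\zeta_c$ can lie outside the band (as a direct calculation of $\zeta_c^2$ with $b=1$ shows, at least for some ranges of $k$), which forces the relevant monotonicity/lifting argument from Lemma~9.2 of \cite{deconinck2017stability} to be combined carefully with Lemma~\ref{lem:NoDoublePointsSpine} in order to rule out the scenario where $M$ ascends from $0$ only to some $M_{\max} < 2\pi m/P$ before returning to $0 \bmod 2\pi$ at $\zeta_2$.
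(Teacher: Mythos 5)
Your proposal follows the paper's proof essentially step for step: the same ingredients (bands confined to the imaginary axis between the roots of $\Omega$, $\Omega\in\R\setminus\{0\}$ on the band interiors, $M(\zeta_j)\equiv 0 \pmod{2\pi}$, $M\in[0,2\pi]$ with equality only at the roots, and real $\zeta$ mapping to imaginary $\lambda$) give stability only for coperiodic perturbations and instability for every $P\geq 2$. The coverage worry you flag at the end is closed by exactly the lemmas you cite: for $b=1$ the only turning points $\pm\zeta_c$ are purely imaginary with $\abs{\zeta_c}<\IM(\zeta_1)$ (since $E(k)/K(k)>\sqrt{1-k^2}$), so no turning point lies on the band, $M$ is monotone along it from $0$ to $2\pi$, and every value $2\pi m/P$ is attained in the interior.
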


\begin{proof}
  Lemmas~\ref{lem:laxSpectrumTopology}, \ref{lem:dnRoots}, and the tangent
  vectors \eqref{eqn:LaxSpectrumTangents} show that the complex bands of the Lax
  spectrum are confined to the imaginary axis between the roots of $\Omega$.
  Using Lemmas~\ref{lem:NoDoublePointsSpine} and \ref{lem:M0dn} and the fact
  that $M(\zeta)$ must increase in absolute value between the roots of
  $\Omega(\zeta)$, $M(\zeta) \in [0,2\pi]$ on the bands of the Lax spectrum.
  Equality is attained only at the roots $\zeta_j$. By Lemma~\ref{lem:dnRoots},
  $\Omega(\zeta)\in \R$ on the interior of the bands, so the eigenvalues are
  unstable.  Since $\zeta \in \R$ only maps to stable modes
  (Lemma~\ref{lem:zeta_real}), \te{spectral} stability only exists for $T(k)\mu = 0$, which is
  what we wished to show.
\end{proof}

\paragraph{Solutions of cn-type, $b=k^2$}
When $b=k^2$, the inequality
\begin{align}
  \Omega^2(i\xi) = -\xi^4 + \frac12(2k^2-1)\xi^2-1/16<0,
\end{align}
is satisfied for all $\xi \in \R$. We need the following lemmas whose proofs can
be found in Appendix \ref{Appendix:proofs}.

\begin{lemma}\label{lem:cnMuOnImaginaryAxis}
  For cn solutions, when $\zeta \in i\R$, $M(\zeta) = \pi \mod{2\pi}$.
\end{lemma}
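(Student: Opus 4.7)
The plan is to reduce the statement to a reality check on $I(\zeta)$. For the cn solution $b=k^2$, so $c=0$, and by the continuity convention \eqref{eqn:thetaTk} we have $\theta(T(k)) = \pi$. The Floquet exponent formula \eqref{eqn:muTk} then reads $M(\zeta) = -2i\,I(\zeta) + \pi + 2\pi n$, so if one can show that $I(\zeta) \in \R$ for every $\zeta \in i\R$, then $-2i\,I(\zeta)$ is purely imaginary and $\RE M(\zeta) \equiv \pi \pmod{2\pi}$. For those $\zeta \in \sigma_L \cap i\R$ the Lax spectrum condition $\RE I(\zeta)=0$, combined with $I \in \R$, forces $I = 0$, making $M(\zeta) = \pi + 2\pi n$ a genuine real value congruent to $\pi$ modulo $2\pi$, which is the claim.

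To show $I(i\xi) \in \R$ for all $\xi \in \R$, I would work directly with either representation of the integrand $\mathcal{I}$ in \eqref{eqn:IntegrandI}. For the cn solution $\phi(x) = k\,\cn(x,k) \in \R$, and substituting $\zeta = i\xi$ into \eqref{eqn:A} gives
\[
A(x;\zeta) \;=\; i\lt(\xi^2 + \tfrac{1}{2}R^2(x) + \tfrac{1}{2}\omega\rt), \qquad B(x;\zeta) \;=\; i\lt(\xi\phi(x) + \tfrac{1}{2}\phi_x(x)\rt),
\]
both purely imaginary, and likewise $B_x \in i\R$. From \eqref{eqn:Omega} with $b=k^2$ and $c=0$, one computes
\[
\Omega^2(i\xi) \;=\; -\xi^4 - \tfrac{1-2k^2}{2}\xi^2 - \tfrac{1}{16},
\]
which is strictly negative for every real $\xi$ because the quadratic in $\xi^2$ has discriminant $k^2(k^2-1) < 0$ for $k \in (0,1)$; hence $\Omega(i\xi) \in i\R$. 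Since $i\zeta = -\xi$ is real, every term in the numerator of either form of $\mathcal{I}$ in \eqref{eqn:IntegrandI} is purely imaginary, as is the denominator, so $\mathcal{I}(x;\zeta) \in \R$ for all $x$ in the domain of regularity. Integrating over one period yields $I(\zeta) = -\int_0^{T(k)}\mathcal{I}\,dx \in \R$.

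Combining these two pieces gives the desired congruence. The substantive step is the pointwise reality of $\mathcal{I}$; the only delicate issue I anticipate is the treatment of the singularities of $\mathcal{I}$ at the isolated $x$ where $B(x;\zeta)$ (or $A(x;\zeta)-\Omega$) vanishes, which can be handled by locally switching between the two equivalent representations of $\mathcal{I}$ in \eqref{eqn:IntegrandI}, since at least one form is regular at each such point. Neither this nor the principal-value regularization changes the conclusion that $I(\zeta) \in \R$, and the lemma follows.
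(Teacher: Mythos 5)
Your main computation is right and it is a genuinely different route from the paper's. The paper proves the lemma by evaluating the closed form \eqref{eqn:Izeta} at the single point $\zeta=0$ (using $\alpha(0)=\wp^{-1}(e_3)$, the quasi-periodicity of $\zeta_w$ and the Legendre relation) and then propagating the value along the imaginary axis, using the orthogonality of the level curves of $\RE(I)$ and $\IM(I)$ together with $\Omega(i\R)\subset i\R$ to conclude that $M$ cannot change there. You instead work directly with the defining integral: for $b=k^2$ one has $c=0$, $\phi=k\cn$ real, $i\zeta=-\xi\in\R$, $A,B,B_x\in i\R$, and $\Omega^2(i\xi)<0$ (your discriminant check is correct), so $\mathcal{I}$ is pointwise real, $-2iI$ is purely imaginary, and with $\theta(T(k))=\pi$ from \eqref{eqn:thetaTk} the congruence $M\equiv\pi\pmod{2\pi}$ follows, with $I=0$ exactly on $\sigma_L\cap i\R$. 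This avoids the Weierstrass special-value computation entirely and is arguably more elementary; the paper's argument, by contrast, does not need to touch the singular structure of $\mathcal{I}$ at all.

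The one step that does not work as you describe it is the treatment of the real singularities. The two expressions in \eqref{eqn:IntegrandI} are not complementary regularizations: using $B_x=-2(i\zeta B+\phi A)$, $A_x=\phi C+\phi\cc B$ and $\Omega^2=A^2+BC$, both reduce to the \emph{same} meromorphic function $\mathcal{I}=-i\zeta+\phi C/(A-\Omega)$, whose genuine poles occur where $A=\Omega$ and $\phi C\neq 0$ -- and at such a point $B$ vanishes as well, so neither representation is regular there. Such points do occur on the imaginary $\zeta$ axis for the cn solution, so "switching forms" cannot dispose of them. The repair is the observation that $\mathcal{I}=-\d(\log\gamma)/\d x$ with $\gamma=-\chi_1/B$ and $\chi_1$ analytic, so every real pole of $\mathcal{I}$ is simple with an \emph{integer} residue; consequently any consistent interpretation of $I$ (e.g.\ the one matching \eqref{eqn:Izeta}) differs from the real-valued principal value only by an element of $i\pi\Z$, which enters $M=-2iI+\pi+2\pi n$ as a multiple of $2\pi$. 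Equivalently, the Floquet multiplier $e^{2I}$ is a positive real number, so $e^{iT(k)\mu}=e^{2I+i\pi}$ is negative real and equals $-1$ on the Lax spectrum, which is the claimed $M\equiv\pi\pmod{2\pi}$. With that substitution for your final paragraph, the proof is complete.
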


\begin{lemma}\label{lem:M0cn}
  $M(\zeta_j) = T(k) \mu(\zeta_j) = 0 \mod 2\pi$ for the cn solutions.
\end{lemma}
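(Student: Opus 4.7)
The strategy is to evaluate the Weierstrass representation \eqref{eqn:IzetaGamma} of $I(\zeta)$ at each of the four roots $\zeta_j$ of $\Omega$ and then read off $M(\zeta_j)$ from \eqref{eqn:muTk}. Since the extended definition \eqref{eqn:thetaTk} gives $\theta(T(k)) = \pi$ for cn, the claim $M(\zeta_j)\equiv 0 \pmod{2\pi}$ reduces to showing that $-2i\,I(\zeta_j) \equiv -\pi \pmod{2\pi}$, equivalently $I(\zeta_j)\in\tfrac{i\pi}{2}+i\pi\mathbb{Z}$.

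At a root $\zeta_j$, the conditions $\Omega(\zeta_j)=0$ and $c=0$ (since $b=k^2$) reduce \eqref{eqn:wpalphasquared} to $\wp'(\alpha_j)^2 = -4(4\zeta_j^3-2\zeta_j\omega)^2$, so $\Gamma\in\{+1,-1\}$ and
\[
I(\zeta_j) \;=\; -2i\zeta_j\omega_1 \;\pm\; 2\bigl[\,\zeta_w(\alpha_j)\omega_1 - \zeta_w(\omega_1)\alpha_j\bigr],
\]
where $\alpha_j$ solves $\wp(\alpha_j) = -2\zeta_j^2 + \omega/3$. Substituting $\omega=\tfrac{1}{2}(1-2k^2)$ and $\zeta_1=\tfrac12(\sqrt{1-k^2}+ik)$ gives the compact identity $\wp(\alpha_1)=e_2 - ik\sqrt{1-k^2}$; the other three $\alpha_j$ follow from the $\zeta\mapsto -\zeta$ and $\zeta\mapsto\overline{\zeta}$ symmetries available when $c=0$. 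Using the Jacobi--Weierstrass dictionary \eqref{eqn:JacobiToWeierstrass} I would invert this by writing $\alpha_j = z_j+\omega_3$ with $k^2\sn^2(z_j,k)=k^2-ik\sqrt{1-k^2}$ (a quantity of modulus $k$), thereby pinning $\alpha_j$ modulo the Weierstrass lattice. The combination $\zeta_w(\alpha_j)\omega_1-\zeta_w(\omega_1)\alpha_j$ is then well-defined modulo $i\pi$ thanks to the Legendre period relation between $\zeta_w(\omega_1),\zeta_w(\omega_3),\omega_1,\omega_3$; this is precisely the mechanism that supplies the required half-integer multiple of $i\pi$ inside $I(\zeta_j)$.

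The main obstacle is the explicit identification of $\alpha_j$. In the dn counterpart (Lemma~\ref{lem:M0dn}), the analogous value of $\wp(\alpha_j)$ is real and $\alpha_j$ lands cleanly at a quarter-period such as $\omega_1/2$ or $\omega_1/2+\omega_3$, so $\zeta_w(\alpha_j)$ is directly expressible via standard identities. For cn, $\wp(\alpha_j)$ has a nonzero imaginary part and inverting $\sn$ requires complex arguments; the correct lattice representative must be tracked carefully, and the branch of the inverse $\sn$ must be coordinated with the two sign choices of $\Gamma$. Once this bookkeeping is done, the Legendre relation completes the evaluation and yields $I(\zeta_j)\in\tfrac{i\pi}{2}+i\pi\mathbb{Z}$, which combined with $\theta(T(k))=\pi$ gives $M(\zeta_j)\equiv 0\pmod{2\pi}$.
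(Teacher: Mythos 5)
Your framework is the same as the paper's: use \eqref{eqn:muTk} with $\theta(T(k))=\pi$, evaluate $I(\zeta_j)$ from the Weierstrass representation \eqref{eqn:IzetaGamma} with $\Omega(\zeta_j)=0$ and $\Gamma=\pm1$, and your computation $\wp(\alpha_1)=e_2-ik\sqrt{1-k^2}$ (equivalently $e_2-i\sqrt{(e_1-e_2)(e_2-e_3)}$) is correct. But the proof stops exactly where the lemma's content lies, and you say so yourself ("the main obstacle is the explicit identification of $\alpha_j$"). Observing that $k^2\sn^2(z_j,k)=k^2-ik\sqrt{1-k^2}$ has modulus $k$ does not pin down $z_j$, and the Legendre relation by itself only tells you that $\zeta_w(\alpha_j)\omega_1-\eta_1\alpha_j$ is well defined modulo $i\pi$ once $\alpha_j$ is fixed up to the lattice; it does not produce the value. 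What is actually needed, and what the paper supplies, is the recognition that $e_2-i\sqrt{(e_1-e_2)(e_2-e_3)}=\wp(\omega_2/2)$ (a quarter-period value, the cn analogue of $\wp(\omega_1/2)$ in Lemma~\ref{lem:M0dn}), so that $\alpha(\zeta_1)=\alpha(\zeta_3)=\sigma_1\omega_2/2$ and $\alpha(\zeta_2)=\alpha(\zeta_4)=\sigma_1(\omega_2/2-\omega_3)$ modulo the lattice, followed by closed-form evaluation of $\wp'$ and $\zeta_w$ at these points via the addition formulas, e.g. $\wp'(\omega_2/2)=-2k(1-k^2+ik\sqrt{1-k^2})$ and $\zeta_w(\omega_2/2)=\tfrac12(\zeta_w(\omega_2)-k+i\sqrt{1-k^2})$.

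Only after these explicit values are inserted into \eqref{eqn:IzetaGamma} does one see the essential cancellation of all the $k$-dependent pieces (including the $-2i\zeta_j\omega_1$ term you carry along), leaving $I(\zeta_1)=I(\zeta_3)=\sigma_1 i\pi/2$ and $I(\zeta_2)=I(\zeta_4)=3\sigma_1 i\pi/2$ up to lattice contributions, whence $M(\zeta_j)=\sigma_1\pi+\pi=0\bmod 2\pi$ (and $3\sigma_1\pi+\pi=0\bmod 2\pi$). Your proposal asserts this outcome ("once this bookkeeping is done\ldots") without performing it, so as written it is a correct reduction plus an unproved claim rather than a proof; the quarter-period identification and the resulting cancellation are precisely the nontrivial steps.
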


\begin{lemma}\label{lem:noCrossingscn}
  For $b=k^2$ and $\zeta\in\sigma_L
  \setminus(\{\zeta_1,\zeta_2,\zeta_3,\zeta_4\}\cup \R \cup i\R$), we have that
  $\Omega(\zeta)\notin i\R$.
\end{lemma}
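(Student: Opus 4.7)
The plan is first to characterize the locus $\{\zeta : \Omega(\zeta) \in i\R\}$ algebraically using the structure of $\Omega^2$ for cn solutions, and then to show that $\sigma_L$ does not lie on the extra component of this locus away from $\R \cup i\R$. Since $c=0$ and $\omega = (1-2k^2)/2$ for cn solutions,
\[
  \Omega^2(\zeta) = -\zeta^4 + \tfrac{1-2k^2}{2}\zeta^2 - \tfrac{1}{16},
\]
an even polynomial with real coefficients. Writing $\zeta = \zeta_r + i\zeta_i$, a direct expansion gives
\[
  \IM\bigl(\Omega^2(\zeta)\bigr) = \zeta_r\zeta_i\bigl[(1-2k^2) - 4(\zeta_r^2 - \zeta_i^2)\bigr].
\]
The condition $\Omega(\zeta) \in i\R$ is equivalent to $\Omega^2(\zeta) \in \R_{\le 0}$. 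For $\zeta_r, \zeta_i \ne 0$, imposing $\IM(\Omega^2)=0$ forces $\zeta$ onto the hyperbola $H : \zeta_r^2 - \zeta_i^2 = (1-2k^2)/4$, and substituting into $\RE(\Omega^2)$ on $H$ gives $\RE(\Omega^2)|_H = 4\zeta_r^2\zeta_i^2 - k^2(1-k^2)/4$; the sign condition thus cuts out the compact arc
\[
  A := \bigl\{\zeta \in H : |\zeta_r\zeta_i| \le k\sqrt{1-k^2}/4\bigr\},
\]
whose four endpoints are exactly the roots $\zeta_1,\zeta_2,\zeta_3,\zeta_4$. Hence $\{\Omega \in i\R\}\setminus(\R\cup i\R\cup\{\zeta_j\}) = A^{\circ}$, the relative interior of $A$, and the lemma reduces to $\sigma_L \cap A^{\circ} = \emptyset$.

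Next I would argue by contradiction. Suppose $\zeta_0 \in \sigma_L \cap A^{\circ}$. By Lemma~\ref{lem:laxSpectrumTopology}, $\zeta_0$ lies on one of the smooth real-analytic bands of $\sigma_L$. Because $I$ is holomorphic near $A \setminus \{\zeta_j\}$ and $A$ is a real-analytic curve, $\RE I$ pulled back to an analytic parametrization of $A$ is real-analytic off the roots. If the band and $A$ agree on an open sub-arc through $\zeta_0$, the identity theorem forces $\RE I \equiv 0$ on the full connected component of $A\setminus\{\zeta_j\}$ containing $\zeta_0$, so the entire component of $A$ (joining its two endpoint roots through $\R$ when $k<1/\sqrt{2}$, or through $i\R$ when $k>1/\sqrt{2}$) lies in $\sigma_L$.

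To derive a contradiction I would examine the axis crossing $\zeta_\star$ of this component of $A$. For $k>1/\sqrt{2}$, $\zeta_\star = i\sqrt{2k^2-1}/2$ and $\Omega(\zeta_\star)^2 = -k^2(1-k^2)/4$; if $\zeta_\star \in \sigma_L$, Lemma~\ref{lem:cnMuOnImaginaryAxis} forces $M(\zeta_\star) \equiv \pi \mod 2\pi$, while an explicit evaluation of \eqref{eqn:muTk}--\eqref{eqn:IzetaGamma} at the known value of $\Omega(\zeta_\star)$ yields a residue class incompatible with $\pi$. For $k<1/\sqrt{2}$, the crossing $\zeta_\star = \sqrt{1-2k^2}/2 \in \R$ is automatically in $\sigma_L$, but the hyperbola $H$ meets $\R$ with a vertical tangent at $\zeta_\star$, so $A \subset \sigma_L$ would make $\zeta_\star$ a transverse double point of $\sigma_L$. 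This contradicts the fact (from the proof of Lemma~\ref{lem:laxSpectrumTopology}) that branches of $\sigma_L$ only cross at the turning points $\pm\zeta_c$, together with the observation that $\zeta_\star^2 = (1-2k^2)/4 \ne (2E(k) - K(k))/(4K(k)) = \zeta_c^2$, which reduces to the fact that $E(k) > (1-k^2)K(k)$ for $k \in (0,1)$.

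The main obstacle is that both the band of $\sigma_L$ and the arc $A$ emanate from any common root $\zeta_j$ along the same ray, because each is locally governed by the same square-root behavior $\Omega \sim \sqrt{(\Omega^2)'(\zeta_j)(\zeta-\zeta_j)}$. A purely local argument near a root therefore cannot separate them, and the distinction must come from a global feature --- namely the axis-crossing data above, where the explicit value of $M$ (for $k>1/\sqrt{2}$) or the transversality of $A$ to the always-present real-axis branch of $\sigma_L$ (for $k<1/\sqrt{2}$) supplies the contradiction. A secondary technical point is the degenerate limit $k=1/\sqrt{2}$, where $H$ collapses to a pair of lines and a short continuity argument is needed.
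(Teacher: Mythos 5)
Your reduction is set up correctly: the locus where $\Omega\in i\R$ off the axes is exactly the arc $A$ of the hyperbola $\zeta_r^2-\zeta_i^2=(1-2k^2)/4$ with $\abs{\zeta_r\zeta_i}\le k\sqrt{1-k^2}/4$, its endpoints are the roots $\zeta_j$, and the lemma is equivalent to $\sigma_L\cap A^{\circ}=\emptyset$; this matches the parametrization \eqref{eqn:cnParamCurve} used in the paper. The gap is in the contradiction step. You only derive a contradiction under the additional hypothesis that the Lax band and the arc $A$ \emph{coincide on an open sub-arc} through $\zeta_0$ (that is the only situation in which your identity-theorem argument, and the subsequent axis-crossing/double-point discussion, applies). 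Nothing forces that coincidence: $\sigma_L$ is the level set $\RE(I)=0$ \eqref{eqn:Icond}, while $A$ is cut out by $\IM(\Omega^2)=0$, $\RE(\Omega^2)\le 0$, and these are different curves that could a priori cross transversally at a single point. Such an isolated intersection --- a single $\zeta_0$ with $\RE(I(\zeta_0))=0$ and $\Omega(\zeta_0)\in i\R$ --- is precisely what the lemma must exclude, and your argument never addresses it. (As a secondary point, your proposed evaluation at the imaginary-axis crossing ``yielding a residue class incompatible with $\pi$'' cannot work as stated: $\zeta_\star\in i\R$, so Lemma~\ref{lem:cnMuOnImaginaryAxis} already gives $M(\zeta_\star)=\pi \bmod 2\pi$, independently of membership in $\sigma_L$.)

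The paper closes exactly this gap by quantitative control of $\RE(I)$ \emph{along} $A$: it sets $G(\zeta_r)=I(\zeta_r+i\zeta_i(\zeta_r))$ with $\zeta_i(\zeta_r)$ from \eqref{eqn:cnParamCurve}, computes $4\Omega_i\sqrt{Q(\zeta_r)}\,\d\RE(G)/\d\zeta_r=\zeta_r P_2(\zeta_r)$ with $P_2(\zeta_r)=-16K(k)\zeta_r^2+4(E(k)-k^2K(k))$, and then, using $\RE(G)=0$ at the endpoint roots together with the sign of $\RE(I)$ on the imaginary axis, runs a case analysis on the roots of $P_2$ and $Q$ to conclude $\RE(G)>0$ on the interior of the arc. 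That monotonicity/sign analysis is the essential analytic content, and some argument of this kind (showing $\RE(I)\neq 0$ pointwise on $A^{\circ}$, not merely that $\sigma_L$ cannot contain $A$) is needed for your proposal to become a proof.
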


\noindent The above lemmas allow us to find necessary and sufficient conditions
on the spectral stability of cn solutions.

\begin{theorem}\label{thm:cnStabilityResult}
  If $k> k\cc \approx 0.9089$ where $k\cc$ is the unique root of $2E(k) - K(k)$
  for $k\in [0,1)$, then solutions of cn-type ($b=k^2)$, are \te{spectrally
  stable with respect to coperiodic and 2-subharmonic perturbations, but no
other subharmonic perturbations}. If instead $k\leq k\cc$, then solutions are
\te{spectrally} stable with respect to perturbations of period $QT(k)$ for all
$Q\in \N$ with $Q\leq P \in \N$ if and only if
  \begin{align}
    M(-\zeta_c)\leq \frac{2\pi}{P},
  \end{align}
  defined in the $2\pi$-interval in which $M(\zeta_j)=0$.
\end{theorem}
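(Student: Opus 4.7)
The approach is to combine the topology of the Lax spectrum with the range of $M(\zeta)$ along the complex bands, locating exactly which subharmonic Floquet values $M = 2\pi m/Q$ can be attained at unstable modes. By Lemma~\ref{lem:zeta_real} and the observation $\Omega^2(i\xi)<0$ from the inequality preceding the theorem, the stable portion of $\sigma_L$ consists of $\R$, the branch points $\zeta_j$, and any part of $i\R$ lying in $\sigma_L$; Lemma~\ref{lem:noCrossingscn} then guarantees that every other point on the complex bands yields $\Omega\notin i\R$, hence an unstable mode. The strategy is to show that the only subharmonic $M$-values which might fall in the off-axis interior of a band are excluded precisely when $M(-\zeta_c)\leq 2\pi/P$.

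First I would determine the topology of the two complex bands using $\zeta_c^2 = (2E(k)-K(k))/(4K(k))$. For $k>k\cc$, $\zeta_c\in i\R$, so by Lemma~\ref{lem:LaxSpectrumIntersection} the bands cannot cross the real axis; Lemma~\ref{lem:laxSpectrumTopology} then forces one band to lie in the upper half-plane, connecting $\zeta_1$ to $\zeta_2$ through $\zeta_c$, with its complex conjugate in the lower half-plane. For $k\leq k\cc$, $\zeta_c\in\R$, and Lemma~\ref{lem:LaxSpectrumIntersection} forces the right and left bands to cross the real axis at $\zeta_c$ and $-\zeta_c$, joining $(\zeta_1,\zeta_4)$ and $(\zeta_2,\zeta_3)$ respectively.

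Next I would track $M$ along each band. By Lemma~\ref{lem:M0cn}, $M\equiv 0\pmod{2\pi}$ at each endpoint, and by \cite[Lemma~9.2]{deconinck2017stability}, $|M|$ increases monotonically along the band up to the unique turning point $\pm\zeta_c$. Since $dI/d\zeta$ has no other zeros on the band, $|M|$ must then decrease monotonically back to $0$ at the second endpoint, so $M$ traces $0\to M_{\max}\to 0$ with $M_{\max}=M(\pm\zeta_c)$ and $M_{\max}\in[0,2\pi)$ by Lemma~\ref{lem:NoDoublePointsSpine}. Every value in $(0,M_{\max})$ is realized at two off-axis, and thus unstable (by Lemma~\ref{lem:noCrossingscn}), points per band, while $M_{\max}$ is realized only at $\pm\zeta_c$. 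In the case $k>k\cc$, Lemma~\ref{lem:cnMuOnImaginaryAxis} pins $M_{\max}=\pi$; in the case $k\leq k\cc$, $M_{\max}=M(-\zeta_c)$, with the two bands yielding the same value by the conjugation symmetry of Lemma~\ref{lem:muSymmetry}.

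The conclusion then follows from a short combinatorial argument. Stability for all $Q$-subharmonic perturbations with $Q\leq P$ requires that no value in $\{2\pi m/Q : 1\leq Q\leq P,\ 1\leq m<Q\}$ fall in the open interval $(0,M_{\max})$. The minimum element of this set is $2\pi/P$, so the requirement reduces to $M_{\max}\leq 2\pi/P$; equality is permitted because $M_{\max}$ is attained only at the stable axis crossing $\pm\zeta_c$ (which lies in $i\R$ for $k>k\cc$ and in $\R$ for $k\leq k\cc$, both stable). Substituting $M_{\max}=\pi$ in the first case yields stability iff $P\leq 2$, and substituting $M_{\max}=M(-\zeta_c)$ in the second yields the stated criterion. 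The main obstacle I anticipate is verifying that after the turning point the monotonicity of $|M|$ reverses cleanly so that $M_{\max}$ is a true global maximum attained only at $\pm\zeta_c$; this requires ruling out any additional saddle-like behavior using the fact that $dI/d\zeta$ vanishes nowhere else on the bands.
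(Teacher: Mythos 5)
Your proposal is correct and follows essentially the same route as the paper's proof: $M(\zeta_j)=0$ at the band endpoints (Lemma~\ref{lem:M0cn}), $M=\pi$ at the imaginary-axis crossing (Lemma~\ref{lem:cnMuOnImaginaryAxis}), monotonicity of $M$ along the bands together with $M<2\pi$ (Lemma~\ref{lem:NoDoublePointsSpine}), instability of the band interiors (Lemma~\ref{lem:noCrossingscn}), and the resulting criterion $M(-\zeta_c)\leq 2\pi/P$, respectively $P\leq 2$ when $\zeta_c\in i\R$. The only cosmetic slips are that for $k>k\cc$ the band need only cross the imaginary axis at some point $\hat\zeta$ (not necessarily at $\zeta_c$), which suffices since Lemma~\ref{lem:cnMuOnImaginaryAxis} pins $M=\pi$ there, and that the agreement of the bounds coming from the two bands follows from the trivial-phase symmetry $\Omega^2(\zeta)=\Omega^2(-\zeta)$ (symmetry about the imaginary axis) rather than from Lemma~\ref{lem:muSymmetry}, which concerns reflection across the real axis.
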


\begin{proof}
  First choose a solution by fixing $k$. Then choose a $P$-subharmonic
  perturbation. If $k>k\cc$, then $2E(k) - K(k)<0$ so that $b>B(k)$ and $\zeta_c
  \in i\R$ (\eqref{eqn:Bk} when $b=k^2$). If $k\leq k\cc$, $\zeta_c\in\R$.
  Consider the band of the spectrum with endpoint $\zeta_2$ at which
  $M(\zeta_2)=0$ (Lemma \ref{lem:M0cn}). If $\zeta_c\in i\R$, this band
  intersects the imaginary axis at $\hat \zeta \in i\R$, otherwise it intersects
  the real axis at $-\zeta_c\in \R$.

  Let $S$ represent the band connecting $\zeta_2$ to $\hat \zeta$ when
  $\zeta_c\in i\R$. When $\zeta_c \in i\R$, $\abs{\RE(\lam)}>0$ on $S$ (Lemma
  \ref{lem:noCrossingscn}) so every $T(k)\mu$ value on $S$ corresponds to an
  unstable eigenvalue. Since $\mu \neq 0 \mod{2\pi}$ on $S$
  (Lemma~\ref{lem:NoDoublePointsSpine}), $M(\zeta)$ is increasing from $\zeta_2$
  to $\hat \zeta$ \cite[Lemma~9.2]{deconinck2017stability}, and $\partial S =
  \{0,\pi\}$ (Lemmas~ \ref{lem:cnMuOnImaginaryAxis} and \ref{lem:M0cn}),
  $M(\zeta) \in (0,\pi)$ on the interior of $S$. Therefore every
  $T(k)\mu\in(0,\pi)$ corresponds to an unstable eigenvalue. By the symmetry of
  the Lax spectrum in each quadrant, the analysis beginning at any of the roots
  $\zeta_j$ gives the same result, except perhaps with $(0,\pi)$ replaced with
  $(\pi,2\pi)$, which yields the same stability results. Since
  $\RE(\lam(\zeta))=0$ only at $T(k)\mu = 0,$ or $T(k)\mu = \pi$, if
  $2E(k)-K(k)<0$, the cn solutions are \te{spectrally stable with respect to
  coperiodic and 2-subharmonic perturbations, but no other subharmonic
perturbations.}

  If $2E(k)-K(k)\geq 0$, the band emanating from $\zeta_2$ intersects the real
  axis at $-\zeta_c$ (Lemma~\ref{lem:LaxSpectrumIntersection}). Then
  $M(\zeta)\in (0, T(k)\mu(-\zeta_c))$ along the interior of this band and
  $M(\zeta) = 0$ and $M(\zeta) = T(k)\mu(-\zeta_c)$ at the respective endpoints
  (Lemma \ref{lem:M0cn}).  Since $\abs{\RE(\lam)}>0$ on the interior of this
  band (Lemma \ref{lem:noCrossingscn}), every $T(k)\mu$ value along this band
  corresponds to an unstable eigenvalue. By Lemma~\ref{lem:NoDoublePointsSpine},
  $M(-\zeta_c)<2\pi$.  Therefore, in order to have spectral stability with
  respect to $P$-subharmonic perturbations, it must be that $M(-\zeta_c)$ is at
  least as small as the smallest nonzero $\mu$ value obtained in
  $\eqref{eqn:muDefn}$ for our $P$. The smallest nonzero $\mu$ value corresponds
  to $m=1$ or $m=P-1$, so if
  \begin{align}
    M(-\zeta_c)\leq \frac{2\pi}{P}, \label{eqn:cnBound}
  \end{align}
  then solutions are \te{spectrally} stable with respect to perturbations of
  period $PT(k)$.  Since the Lax spectrum is symmetric about the real and
  imaginary axes for the cn solutions (see Figure \ref{fig:LaxSpectrumPics}(i)),
  the same bound is found by starting the analysis at each $\zeta_j$. Since the
  preimage of all eigenvalues with $\RE(\Omega(\zeta))>0$ is the interior of the
  bands (Lemma \ref{lem:noCrossingscn}), \eqref{eqn:cnBound} is also a necessary
  condition for \te{spectral} stability. Since the bound holds for each $Q\leq P,~Q\in \N$,
  \te{spectral} stability with respect to $P$-subharmonic perturbations also
  implies \te{spectral} stability
  with respect to $Q$-subharmonic perturbations.
\end{proof}

\begin{remark}
  The calculations throughout this paper use the period of the modulus of the
  solution, $T(k) = 2K(k)$. However, the cn solution itself (not its modulus) is
  periodic with period $4K(k)$. When taking this into account, $I(\zeta)$ gets
  replaced by $2I(\zeta)$, and
  \begin{align}
    T(k)\mu(\zeta) &= 4i I(\zeta) + 2\pi n.\label{eqn:cnTrueM}
  \end{align}
  Using \eqref{eqn:cnTrueM} for $M(\zeta)$, Theorem \ref{thm:cnStabilityResult}
  can be updated to cover subharmonic perturbations with respect to the period
  $4K(k)$ of the cn solutions. We find that when $2E(k)-K(k)<0$, the solutions
  are \te{spectrally} stable with respect to perturbations of period $4K(k)$. The bound
  \eqref{eqn:cnBound} may also be updated using \eqref{eqn:cnTrueM} and upon
  letting $T(k) = 4K(k)$. In particular, we recover the cn solution stability
  results found in \cite{gustafson2017stability, ivey2008spectral}.
\end{remark}

\subsubsection{Nontrivial-phase solutions}

\te{For the nontrivial-phase solutions, $c\neq0$ and $\Omega$ is defined by
  \eqref{eqn:Omega}.  }The statement for the stability of nontrivial-phase
solutions is very similar to that for the stability of cn solutions. We begin
with a lemma whose proof can be found in Appendix \ref{Appendix:proofs}.

\begin{lemma}\label{lem:0atroots}
  $M_j := M(\zeta_j) = T(k)\mu(\zeta_j)= 0 \mod 2\pi$ for each root
  $\{\zeta_j\}_{j=1}^4$ of $\Omega(\zeta)$.
\end{lemma}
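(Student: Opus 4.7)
The plan is to evaluate $M(\zeta_j)$ directly using \eqref{eqn:muTk}, exploiting the fact that $\Omega(\zeta_j)=0$ collapses almost every quantity appearing in the Weierstrass-form expressions for $I(\zeta)$ in \eqref{eqn:IzetaGamma} and $\theta(T(k))$ in \eqref{eqn:thetaW}. First, I would note that $\Omega(\zeta_j)=0$ reduces \eqref{eqn:alpha} to $\wp(\alpha_j) = -2\zeta_j^2 + \omega/3 = -2\zeta_j^2 - e_0/2$, and reduces \eqref{eqn:wpalphasquared} to $\wp'(\alpha_j) = \pm 2i(-c + 4\zeta_j^3 - 2\zeta_j\omega)$; hence $\Gamma(\zeta_j) = \pm 1$. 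Substituting into \eqref{eqn:IzetaGamma} gives
\begin{align*}
I(\zeta_j) = -2i\zeta_j\omega_1 \pm 2\bigl(\zeta_w(\alpha_j)\omega_1 - \zeta_w(\omega_1)\alpha_j\bigr),
\end{align*}
and combining with \eqref{eqn:thetaW} produces
\begin{align*}
M(\zeta_j) = -4\zeta_j\omega_1 \mp 4i\bigl(\zeta_w(\alpha_j)\omega_1 - \zeta_w(\omega_1)\alpha_j\bigr) - 2i\bigl(\alpha_0\zeta_w(\omega_1) - \omega_1\zeta_w(\alpha_0)\bigr) + 2\pi n .
\end{align*}

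Second, I would collapse the remaining Weierstrass-$\zeta_w$ terms by using the addition formula for $\zeta_w$ together with the Legendre relation $\zeta_w(\omega_3)\omega_1 - \zeta_w(\omega_1)\omega_3 = i\pi/2$. The target is to recognize $\alpha_j \pm \alpha_0$ (or a related linear combination forced by the constraints $\wp(\alpha_j) = -2\zeta_j^2 - e_0/2$ and $\wp(\alpha_0) = e_0$, subject to the quartic $\Omega^2(\zeta_j)=0$) as a half-lattice point of $\wp$. If so, the elliptic contributions reduce to integer multiples of $i\pi$ and, after multiplication by $\mp 2i$ and cancellation against the real term $-4\zeta_j \omega_1$, leave an integer multiple of $2\pi$. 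The structure exactly parallels the calculations used in the appendix proofs of Lemmas~\ref{lem:M0dn} and~\ref{lem:M0cn}; the NTP case differs only in the presence of the nonzero $c\zeta_j$ contribution carried by $\alpha_0$.

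The main obstacle is this final algebraic step: pinning down the relative location of $\alpha_j$ and $\alpha_0$ on the fundamental period parallelogram. As a sanity check, and as a possible alternative route, I would also set up a continuity argument. On the boundary curves $b=k^2$ and $b=1$ of the NTP region, Lemmas~\ref{lem:M0cn} and~\ref{lem:M0dn} give $M(\zeta_j) = 0 \mod 2\pi$. Since $\zeta_j$ varies continuously with $(k,b)$, and since $\lambda = 2\Omega(\zeta_j) = 0$ has finite algebraic multiplicity whose admissible Floquet exponents form a discrete subset of $\mathbb{R}/(2\pi\mathbb{Z})$, the continuous function $M(\zeta_j)\mod 2\pi$ cannot leave the class $0$ it occupies on the boundary, so $M(\zeta_j) \equiv 0 \mod 2\pi$ throughout the NTP region by connectedness.
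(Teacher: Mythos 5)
Neither of your two routes closes the argument, so as it stands there is a genuine gap. Your first (direct-evaluation) route stalls exactly where you say it does: after using $\Omega(\zeta_j)=0$ to get $\Gamma=\pm1$ and reduce $M_j$ to Weierstrass data at $\alpha_j$ and $\alpha_0$, the whole content of the lemma is hidden in locating $\alpha_j$ relative to $\alpha_0$ on the period lattice, and unlike the cn and dn cases (where $\alpha_j$ lands at explicit half- and quarter-lattice points) there is no such explicit identification available for general $(k,b)$; you have not supplied it, and it is not clear that $\alpha_j\pm\alpha_0$ is a half-period at all. Your fallback continuity argument is the more serious problem, because it is stated as if it were complete but the key step fails: for each fixed $(k,b)$ the admissible Floquet exponents at $\lambda=0$ do form a finite set (at most four, the unit-circle eigenvalues of the monodromy matrix of the fourth-order problem at $\lambda=0$), but that finite set \emph{moves continuously with} $(k,b)$. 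A continuous function whose value at each parameter lies in a parameter-dependent finite set need not be locally constant (think of $f(t)=t$ with ``allowed set'' $\{t\}$), so discreteness of the admissible exponents does not prevent $M(\zeta_j)\bmod 2\pi$ from drifting away from $0$ as you move off the curves $b=k^2$ and $b=1$. What you would need — and what is precisely the content of the lemma — is that the specific exponent carried by the squared-eigenfunction solution at $\zeta_j$ stays pinned at $0$; cardinality arguments cannot deliver that.

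For comparison, the paper proves exactly this pinning by a deformation argument that replaces your topological step with a computation: it records the boundary values ($M_j=0\bmod 2\pi$ for Stokes waves at $k=0$ and for the cn and dn solutions via Lemmas~\ref{lem:M0cn} and \ref{lem:M0dn}), and then shows by explicit differentiation of \eqref{eqn:IzetaGamma}, \eqref{eqn:muTk}, \eqref{eqn:thetaW} — using $\partial_b\zeta_j=e_{p,j}/(4c)$, $\partial_b\alpha_0=-i/(2c)$, $\partial_b\alpha_j=(4\zeta_j^3-2\zeta_j\omega-c)/(2c\,\wp'(\alpha_j))$ and the identity $(4\zeta_j^3-2\zeta_j\omega-c)\Gamma/\wp'(\alpha_j)=-i/2$ — that $\partial_b M_j\equiv 0$ throughout the NTP region, so $M_j$ is constant along each line of fixed $k$ and equals its boundary value. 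If you want to salvage your approach, the viable repair is this kind of quantitative statement (show $M_j$ is independent of a parameter connecting the interior to the boundary), not an appeal to discreteness of the admissible Floquet exponents.
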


\noindent With this lemma, the following sufficient condition for spectral
stability of nontrivial-phase solutions holds.

\begin{theorem}\label{thm:mainNTPStabilityResult}
  Consider a solution with parameters $k$ and $b \leq B(k)$
  \eqref{eqn:Bk}.  The solution is \te{spectrally} stable with respect to
  perturbations of period $QT(k)$ for all $Q\in\N$, $Q\leq P\in \N$ if
  \begin{align}
    M(-\zeta_c)\leq\frac{2\pi}{P},\label{eqn:mainStabilityBound}
  \end{align}
  defined in the $2\pi$ interval in which $M(\zeta_j)=0$.
\end{theorem}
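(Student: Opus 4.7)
The plan is to mirror the structure of the proof of Theorem~\ref{thm:cnStabilityResult}, leveraging the topology of $\sigma_L$ and the Floquet-exponent bookkeeping already established for general elliptic solutions. By Lemma~\ref{lem:laxSpectrumTopology}, $\sigma_L$ consists of $\R$ together with two complex bands, each joining two roots of $\Omega$. Since $b\leq B(k)$, Lemma~\ref{lem:LaxSpectrumIntersection} asserts that these bands cross the real axis at $\pm\zeta_c$. I would first argue that the left band runs continuously from $\zeta_2$ in the upper half plane through $-\zeta_c$ down to $\zeta_3$ in the lower half plane: this follows from the tangent-field symmetry across $\IM\zeta=0$ captured by Lemma~\ref{lem:muSymmetry} together with the band classification of Lemma~\ref{lem:laxSpectrumTopology}, and the right band runs analogously from $\zeta_1$ through $\zeta_c$ to $\zeta_4$.

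Next I would track $M(\zeta)$ along the left band. By Lemma~\ref{lem:0atroots}, $M(\zeta_2) = M(\zeta_3) = 0 \mod 2\pi$. Since the only turning points of $I(\zeta)$ are $\pm\zeta_c$, the monotonicity result \cite[Lemma~9.2]{deconinck2017stability} forces $M$ to move monotonically from $0$ at $\zeta_2$ up to $M(-\zeta_c)$ at $-\zeta_c$; by the directional-derivative symmetry in Lemma~\ref{lem:muSymmetry}, $M$ then returns monotonically to $0$ at $\zeta_3$. Lemma~\ref{lem:NoDoublePointsSpine} confines the whole trajectory to $[0,2\pi)$, so $M(\zeta)\in[0,M(-\zeta_c)]$ on the band, with value $0$ attained only at the two endpoint roots and value $M(-\zeta_c)$ only at $-\zeta_c$. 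Under the hypothesis $M(-\zeta_c)\leq 2\pi/P$, the admissible $P$-subharmonic Floquet exponents $\mu_m=2\pi m/(PT(k))$ with $m\in\{1,\ldots,P-1\}$ satisfy $T(k)\mu_m \geq 2\pi/P \geq M(-\zeta_c)$, so the only band point that can carry such a $\mu_m$ is $-\zeta_c$ itself (arising precisely when $m=1$ and equality holds). Since $-\zeta_c\in\R$, Lemma~\ref{lem:zeta_real} yields $\Omega(-\zeta_c)\in i\R$; the remaining preimages of $\mu_0=0$ lie either at the roots $\zeta_j$ (mapping to $\lambda=2\Omega=0$) or on $\R$, all producing $\lambda\in i\R$. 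Applying the same analysis to the right band and invoking Theorem~\ref{thm:squaredEigCompleteness} then delivers $\sigma_\cL\subset i\R$ for $P$-subharmonic perturbations, and the statement for each $Q\leq P$ is automatic because $M(-\zeta_c)\leq 2\pi/P\leq 2\pi/Q$.

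The technical core I anticipate is verifying that each complex band really is a single arc passing through $\pm\zeta_c$, and that $M$ achieves its extremum on the band exactly at that crossing without doubling back. Lemma~\ref{lem:LaxSpectrumIntersection} pins down the real crossing and Lemma~\ref{lem:muSymmetry} gives the reflection across $\R$, but gluing these into a monotone one-parameter description of the band requires careful application of the implicit function theorem to the defining equation $\RE I=0$ together with the nonvanishing of the gradient away from the turning points. The more fundamental obstruction to upgrading this theorem from sufficient to necessary is the absence of an NTP analogue of Lemma~\ref{lem:noCrossingscn}: the broken symmetry visible in Figure~\ref{fig:LaxSpectrumPics}(iii)--(iv) does not preclude $\Omega$ from crossing $i\R$ at isolated interior points of a complex band, so this approach yields only sufficiency of \eqref{eqn:mainStabilityBound}, in line with the stated form of the theorem and the deferral of necessity to Appendix~\ref{Appendix}.
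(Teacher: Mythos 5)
Your outline follows essentially the same route as the paper's proof: use the band topology (Lemmas \ref{lem:laxSpectrumTopology} and \ref{lem:LaxSpectrumIntersection}), the vanishing of $M$ at the roots (Lemma \ref{lem:0atroots}), the monotonicity of $M$ along a band up to the turning point $-\zeta_c$ together with Lemma \ref{lem:NoDoublePointsSpine}, then conclude that no admissible nonzero Floquet exponent $2\pi m/P$ is attained on the interior of the band, so every admissible $\zeta$ lies on $\R\cup\{\zeta_j\}$ and maps to $i\R$ by Lemma \ref{lem:zeta_real} and Theorem \ref{thm:squaredEigCompleteness}; the monotonicity in $P$ gives the statement for all $Q\leq P$. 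In fact your version is logically a bit cleaner than the paper's, since you do not need the existence of unstable eigenvalues near $\zeta_2$ and $-\zeta_c$ (which the paper invokes) to establish mere sufficiency.

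There is, however, one concrete step you gloss over: the right band. The hypothesis \eqref{eqn:mainStabilityBound} bounds only $M(-\zeta_c)$, while ``the same analysis'' applied to the band through $\zeta_c$ shows that its interior carries $M$-values in $\bigl(0, M(\zeta_c)\bigr)$; to exclude admissible nonzero exponents there you need $M(\zeta_c)\leq 2\pi/P$, which does not follow from the hypothesis alone. The paper closes this by invoking the inequality $M(\zeta_c) < M(-\zeta_c)$, established in \cite{deconinck2017stability}, so that the bound on $M(-\zeta_c)$ is the tighter one and automatically controls the right band. Without citing or proving that ordering, your argument leaves open the possibility that an admissible $\mu_m$ is attained at an interior point of the right band, where you have no information about $\RE\bigl(2\Omega(\zeta)\bigr)$; you should add this inequality (or an argument for it) to complete the proof.
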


\begin{proof}

    The proof here, much like the statement of the theorem, is similar to the
    proof of Theorem~\ref{thm:cnStabilityResult}.

    Choose a solution by fixing $k$ and $b\leq B(k)$ so that $\zeta_c$ is real.
    Choose a $P$-subharmonic perturbation.  Consider the band of the spectrum
    with endpoint $\zeta_2$ (see Figure \ref{fig:LaxSpectrumPics} (iii, iv)) ,
    at which $M(\zeta_2)=0$ (Lemma~\ref{lem:0atroots}), and which intersects the
    real line at $-\zeta_c$ (Lemmas~\ref{lem:laxSpectrumTopology} and
    \ref{lem:zeta_real}). Since $M(\zeta)$ is increasing along the band (Lemma
    \ref{lem:muSymmetry}), $ 0 <M(\zeta)<T(k)\mu(-\zeta_c)$ along the interior
    of the band with $M(\zeta) = 0$ and $M(\zeta) = T(k)\mu(-\zeta_c)<2\pi$
    (Lemma~\ref{lem:NoDoublePointsSpine}) at the respective endpoints.

    Since the tangent lines of $\sigma_L$ are nonvertical at the origin for
    $b<B(k)$ and $\abs{\RE(\lam(-\zeta_c \pm i\eps))}>0$
    \cite{deconinck2017stability}, there exist $\zeta$ on the bands in a
    neighborhood of $-\zeta_c$ and a neighborhood of $\zeta_2$ which correspond
    to eigenvalues $\lam$ with $\lam_r>0$, \ie unstable eigenvalues.  Since
    there exist unstable eigenvalues on this band, in order to have spectral
    stability with respect to $P$-subharmonic perturbations, it must be that
    $M(-\zeta_c)$ is at least as small as the smallest nonzero $\mu$ obtained in
    $\eqref{eqn:muDefn}$ for our $P$. The smallest nonzero $\mu$ value
    corresponds to $m=1$ or $m=P-1$, so if
    \begin{align}
      M(-\zeta_c)\leq \frac{2\pi}{P},
    \end{align}
    then solutions are \te{spectrally} stable with respect to perturbations of
    period $PT(k)$.

    By Lemma~\ref{lem:muSymmetry}, the same bound is found for the starting
    point $\zeta_3$. Starting at $\zeta_1$ or $\zeta_4$ gives the bound
    \begin{align}
      M(\zeta_c)\leq \frac{2\pi}{P}.
    \end{align}
    However, since
    \begin{align}
      M(-\zeta_c)> M(\zeta_c),
    \end{align}
    as shown in \cite{deconinck2017stability}, the tighter bound is found with
    $M(-\zeta_c)$. This is the sufficient condition for \te{spectral} stability.
    As for the cn case, if the bound is satisfied for $P$, then it is also
    satisfied for all $Q\leq P$.
\end{proof}

\begin{remark}
  Determining whether or not the bound \eqref{eqn:mainStabilityBound} is also a
  necessary condition for \te{spectral} stability is a significant challenge.
  Work in this direction is presented in Appendix \ref{Appendix:real}.
\end{remark}

\begin{remark}\label{remark:muSymmetry}
  We note that Lemma~\ref{lem:muSymmetry} implies that near $-\zeta_c\in\R$, two
  eigenvalues with the same $\abs{T(k)\mu}$ value are found equidistant from
  $-\zeta_c$ along the band above and below the real axis. Since two eigenvalues
  with the same $\abs{T(k)\mu} \mod{2\pi}$ value represent the same perturbation
  of period $PT(k)$, the eigenvalues associated with a perturbation of period
  $PT(k)$ straddle $-\zeta_c$ on either of the arcs and come together or
  separate as the solution parameters vary, see Figure \ref{fig:muSymmetry}.
\end{remark}

\begin{theorem}\label{thm:imaginaryNTPStabilityResult}
  If $b> B(k)$ \eqref{eqn:Bk}, solutions are \te{spectrally} stable with respect
  to coperiodic perturbations. Additionally, they can be \te{spectrally} stable
  with respect to perturbations of twice the period, but they are not stable
  with respect to any other \te{subharmonic} perturbations.
\end{theorem}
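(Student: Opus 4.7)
The strategy is to mimic the analysis in the proof of Theorem~\ref{thm:mainNTPStabilityResult} but in the complementary parameter region, where by \eqref{eqn:zetac} the turning points $\pm\zeta_c$ of $I(\zeta)$ lie on the imaginary axis. I would first pin down the topology of $\sigma_L$: since $\zeta_c\in i\R\setminus\{0\}$, Lemma~\ref{lem:LaxSpectrumIntersection} prohibits the complex bands from crossing $\R$. Combining Lemma~\ref{lem:laxSpectrumTopology} with the uniqueness of the turning points then forces the upper complex band to be an arc from $\zeta_2$ to $\zeta_1$ passing through $\zeta_c$, with the lower band a mirror image connecting $\zeta_3$ and $\zeta_4$ through $-\zeta_c$ by Lemma~\ref{lem:muSymmetry}.

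Next I would describe $M(\zeta)$ along these arcs. Lemma~\ref{lem:0atroots} gives $M=0\bmod 2\pi$ at the roots; \cite[Lemma~9.2]{deconinck2017stability} makes $|M|$ increase monotonically from $\zeta_2$ to the turning point, attain its maximum $M^*:=M(\zeta_c)$ there, and then decrease back to $0$ at $\zeta_1$; and Lemma~\ref{lem:NoDoublePointsSpine} forces $M^*\in(0,2\pi)$. Hence the $M$-range on the upper band is exactly $[0,M^*]$, and by Lemma~\ref{lem:muSymmetry} the lower band covers $[2\pi-M^*,2\pi)$. I would then note the key simplification over the cn case: for any nontrivial-phase solution $c\neq 0$, so evaluating $\Omega^2$ at $\zeta=i\xi$ gives $\IM(\Omega^2(i\xi))=c\xi\neq 0$ for $\xi\neq 0$, and in particular $\Omega(\zeta_c)\notin i\R$. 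Thus, in contrast to the cn situation, even the imaginary-axis crossing of each complex band corresponds to an unstable eigenvalue $\lam=2\Omega(\zeta_c)$. Together with an analogue of Lemma~\ref{lem:noCrossingscn} showing $\Omega(\zeta)\notin i\R$ throughout the interior of the complex bands, this means every $M$-value in $(0,M^*]\cup[2\pi-M^*,2\pi)$ corresponds only to unstable eigenvalues.

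Combining these pieces yields the three assertions. The requirement $M=0\bmod 2\pi$ for coperiodic perturbations can be met on the complex bands only at the roots (where $\Omega=0$) or on the real axis (where Lemma~\ref{lem:zeta_real} gives $\Omega\in i\R$), so the $P=1$ case is automatic. For a $P$-subharmonic perturbation with $P\geq 2$, \eqref{eqn:muDefn} demands $M=2\pi m/P$ for some $m\in\{1,\ldots,P-1\}$, and the preceding analysis makes stability equivalent to $M^*<2\pi/P$. For $P=2$ this reads $M^*<\pi$, a condition which may or may not hold within $b>B(k)$, matching the ``can'' in the statement. For $P\geq 3$, instability holds provided $M^*\geq 2\pi/P$, and to rule out stability for every such $P$ uniformly across $b>B(k)$ it suffices to establish the single bound $M^*\geq 2\pi/3$ throughout this region.

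The uniform lower bound $M^*\geq 2\pi/3$ is the main obstacle. I would attack it by feeding $\zeta=\zeta_c\in i\R$ into the Weierstrass-form expression \eqref{eqn:IzetaGamma}--\eqref{eqn:muTk}, exploiting the simplifications that $dI/d\zeta=0$ and $\zeta_c^2=(B(k)-b)/4$ is real. The delicate parts are a consistent choice of branch of $\zeta_w$ and of $\alpha(\zeta_c)$, together with a monotonicity or limit argument at the boundaries $b=B(k)$ (where $\zeta_c\to 0$ and the band degenerates) and $b\to 1$ (the dn limit) to ensure the bound is not lost between them. This is in my view considerably more subtle than the corresponding cn computation, where the reflection symmetry pins $M$ at the imaginary-axis crossing to exactly $\pi$.
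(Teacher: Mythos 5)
There is a genuine gap, and it sits exactly where you locate the ``main obstacle.'' Your description of $M(\zeta)$ along the upper band is incorrect: you claim $M$ rises to a maximum $M^*=M(\zeta_c)$ at the turning point and then falls back to $0$ at $\zeta_1$, so that the band only carries $M\in[0,M^*]$. In fact the directional derivative \eqref{eqn:muDirectionalDeriv} of $\mu$ along $\sigma_L$ is a sum of squares, vanishing only where $\d I/\d\zeta=0$, i.e.\ at $\pm\zeta_c$; and when $b>B(k)$ the band's crossing of the imaginary axis is some point $\hat\zeta$ that need not be $\zeta_c$ at all (the argument forcing crossings to occur at turning points uses the conjugation symmetry across the \emph{real} axis and does not apply to the imaginary axis for NTP solutions). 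Hence $M$ is monotone along the whole arc from $\zeta_2$ to $\zeta_1$, increasing from $0$ to $2\pi$ (consistent with the dn limit, where the band lies on $i\R$ and $M$ sweeps $[0,2\pi]$), so the single upper band already covers \emph{every} $T(k)\mu\in(0,2\pi)$. This is the paper's route: since every nonzero $\mu$ required by \eqref{eqn:muDefn} is realized on the band, and since Lemma~\ref{lem:noCrossings} (left half plane), the observation $\IM(\Omega^2(i\xi))=c\xi\neq0$ from \eqref{eqn:ReImOmega} (imaginary axis), and Lemma~\ref{lem:RHSCrossings} (at most one point of the first-quadrant portion with $\Omega\in i\R$, proved by a Descartes-rule count on a sextic) together allow at most one band point with $\RE\lam=0$, stability requires $P-1\leq 1$, i.e.\ $P\leq 2$. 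No lower bound on $M(\zeta_c)$ is needed anywhere. Your reduction of the $P\geq3$ case to the uniform estimate $M^*\geq 2\pi/3$ is an artifact of the wrong $M$-range, and you leave that estimate unproven, so the argument does not close.

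A second problem is your appeal to ``an analogue of Lemma~\ref{lem:noCrossingscn} showing $\Omega(\zeta)\notin i\R$ throughout the interior of the complex bands.'' That strong statement is not available in the regime $b>B(k)$ and appears to be false for some parameters: the paper can only prove the ``at most one crossing'' statement of Lemma~\ref{lem:RHSCrossings}, and Remark~\ref{remark:StabilityImaginaryZetac} reports parameter values where the stability spectrum does touch $i\R$ at an interior point of the figure. This is precisely why the theorem is phrased as ``can be spectrally stable with respect to $2$-subharmonic perturbations'' rather than ruling it in or out; an argument resting on the no-crossing claim would both overreach and rest on an unproved (likely false) lemma. Your treatment of the coperiodic case and of the imaginary-axis point itself ($\Omega(\hat\zeta)\notin i\R$ when $c\neq0$) is fine and matches the paper.
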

\begin{proof}
  See Appendix \ref{Appendix:imaginary}.
\end{proof}

\begin{remark}\label{remark:StabilityImaginaryZetac}
  Numerical evidence suggests that when $b>B(k)$, NTP solutions are spectrally
  stable \te{with respect to coperiodic perturbations and no other subharmonic
  perturbations}. However, there are some parameter values for which the
  stability spectrum intersects the imaginary axis at a point. We cannot rule
  out the possibility of this point corresponding to 2-subharmonic
  perturbations. For the cn solutions, this intersection point corresponds to
  $M(\zeta) = \pi$, which gives rise to \te{spectral} stability with respect to 2-subharmonic
  perturbations. Because of this, a cn solution and a NTP solution with $b>B(k)$
  can be arbitrarily close to each other but have different stability
  properties. One way to rule out this spurious stability for NTP solutions with
  $b>B(k)$ is to show that the point $M(\zeta) = \pi$, which we know occurs
  exactly once on the band of Lax spectrum in the upper half plane, remains in
  the left half plane (see Lemma \ref{lem:RHSCrossings}) for all parameter
  values.
\end{remark}

\begin{figure}
  \includegraphics[width=0.4\textwidth]{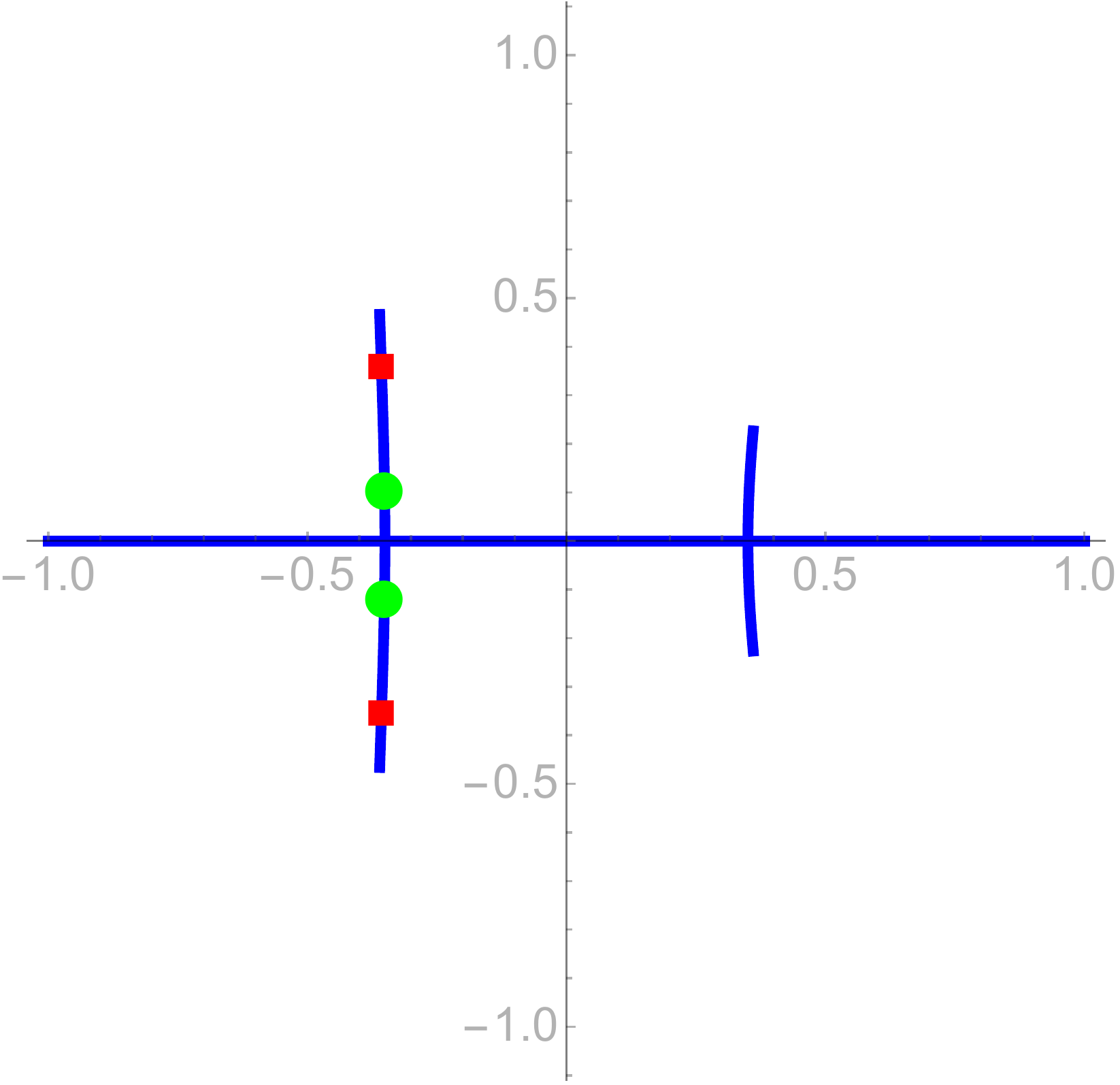}
  \caption{The Lax spectrum for $(k,b) =( 0.65, 0.48)$. Green circles map to
    eigenvalues of $\cL_\pi$ (elements of $\sigma_\pi$
    \eqref{eqn:LmustabilitySpectrum}) through $\Omega(\zeta)$ \eqref{eqn:Omega}.
    In other words, $P=2$ and $T(k)\mu = \pi$. Red squares map to eigenvalues of
    $\cL_{2\pi/3}$: $P=3$ and $T(k)\mu = 2\pi/3$. See
    Remark~\ref{remark:muSymmetry}. \label{fig:muSymmetry}}
\end{figure}

Having put the subharmonic stability results from \cite{deconinck2017stability}
on a rigorous footing, we summarize the findings in Figure
\ref{fig:stabilityCurves}.
\begin{figure}
    \centering
     \begin{tikzpicture}
        \node[inner sep=0pt] (paramSpace) at (0,0)
        {\includegraphics[width=0.4\textwidth]{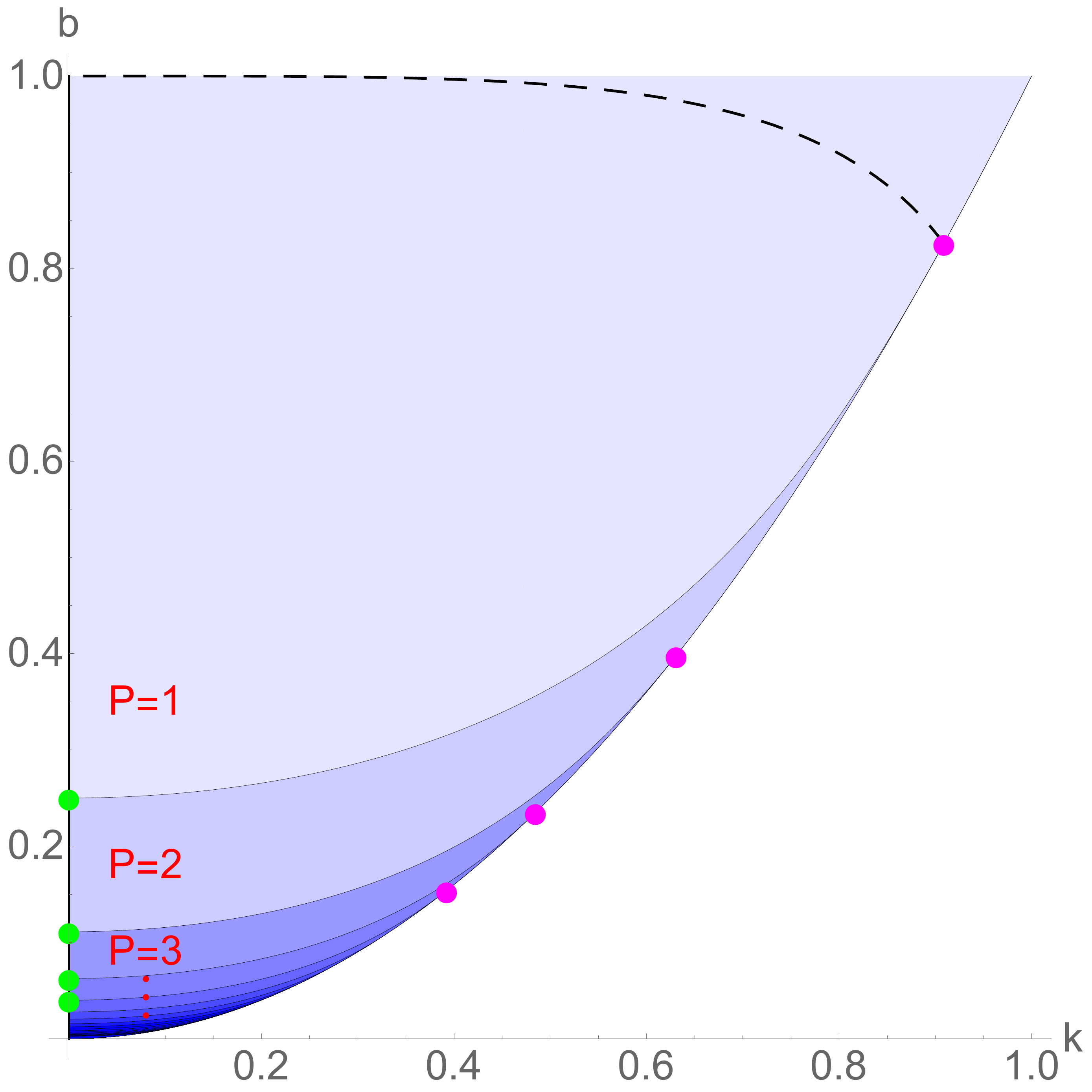}};
        \node (imagCurveAnchor) at (1.7,2.4) {};
        \node[draw] (imagCurve) at (4.9,2.6) {$b=B(k)$, \eqref{eqn:Bk}};
        \draw[decoration={markings,mark=at position 1 with {\arrow[scale=2,
        >=stealth]{>}}}, postaction={decorate}] (imagCurve) -- (imagCurveAnchor) ;
        \node (P2anchor) at (0.55,-0.4) {};
        \node[draw] (P2curve) at (3.9,-0.5) {$M(-\zeta_c) = \pi$,
          \eqref{eqn:mainStabilityBound}};
        \draw[decoration={markings,mark=at position 1 with {\arrow[scale=2,
        >=stealth]{>}}}, postaction={decorate}] (P2curve) -- (P2anchor) ;
        \node (P3anchor) at (-0.51,-1.71) {};
        \node[draw] (P3curve) at (1.9,-2.5) {$M(-\zeta_c) = 2\pi/3$,
          \eqref{eqn:mainStabilityBound}};
        \draw[decoration={markings,mark=at position 1 with {\arrow[scale=2,
        >=stealth]{>}}}, postaction={decorate}] (P3curve) -- (P3anchor) ;
      \end{tikzpicture}
    \caption{The parameter space split up into different regions of subharmonic
      \te{spectral} stability. \label{fig:stabilityCurves} Each solid curve separating regions
      of different color corresponds to equality in
      \eqref{eqn:mainStabilityBound} for different values of $P$. Curves end at
      $b=1/P^2$ (green), the stability bound
      \eqref{eqn:TightStokesSubharmonicStabilityCriteria} for Stokes waves. \te{The
      magenta dots, along the curve $b=k^2$, show where the stability curves,
      which are the boundary of different stability regions, intersect the cn
    solution regime.} The dashed line corresponds to \eqref{eqn:Bk}.  Below it,
      $\zeta_c \in \R\setminus\{0\}$ and above it $\zeta_c \in
      i\R\setminus\{0\}$.
    }
\end{figure}
\te{Equality in condition \eqref{eqn:mainStabilityBound} defines a family of
``stability curves'', for $P \in \N$, in the parameter space which split up the
parameter space into regions bounded by these different curves.} The dashed curve
shows where $\zeta_c=0$. Below (above) the dashed curve, $\zeta_c$ is real
(imaginary). The lightest shading represents \te{spectral} stability with respect to
coperiodic perturbations: all solutions are \te{spectrally} stable with respect to such
perturbations \cite{gallay_haragus_OrbitalStability}. Darker shaded regions
represent where solutions additionally are \te{spectrally} stable with respect to perturbations
of higher multiples of the fundamental period. The $P$ labels inside of the
parameter space indicate which solutions are \te{spectrally} stable with respect to
P-subharmonic perturbations in the given region.

\section{The advent of instability}\label{section:krein}
\teee{
Many results on the spectral stability of the elliptic solutions with respect to
subharmonic perturbations were shown in \cite{deconinck2017stability}.  However,
no explanation is given there as to how a solution which is spectrally stable
with respect to subharmonic perturbations loses stability as its parameters are
varied.} We show \teee{here} that as the amplitude increases, the instabilities
of elliptic solutions arise in the same manner as was demonstrated for the
Stokes waves (Section \ref{section:Stokes}).  We begin by using the
Floquet-Fourier-Hill-Method \cite{FFHM} to compute the point spectrum for a
single subharmonic perturbation \eqref{eqn:LmustabilitySpectrum} (Figure
\ref{fig:stabilityToInstabilityFFHM}).  We show that two eigenvalues collide on
the imaginary axis and leave it at the intersection of the figure 8 spectrum and
the imaginary axis.

\sloppy
\te{Consider a point $(k_Q,b_Q)$ in the parameter space lying below a stability
curve labeled $P=Q$ \mbox{($Q = 1,2,3,\ldots$)}, \ie
$M(-\zeta_c(k_Q,b_Q))<2\pi/Q$ (see Figure \ref{fig:stabilityCurves}).} This
solution is spectrally stable with respect to perturbations of period $QT(k)$
and the Lax eigenvalues corresponding to $Q$-periodic perturbations lie on the
real axis. Two Lax eigenvalues, $\hat \zeta_R$ and $\tilde \zeta_R =
\hat\zeta_R\cc$, corresponding to $R>Q$ perturbations lie equidistant from
$-\zeta_c(k_Q, b_Q)\in\R$ on the bands connecting to $-\zeta_c(k_Q,b_Q)$ (see
Remark \ref{remark:muSymmetry} and Figure \ref{fig:stabilityCurves}). The value
$-\zeta_c(k_Q,b_Q)$ lies at the intersection of $\overline{\sigma_L\setminus
\R}$ and $\sigma_L\cap\R$ which maps to the intersection of the figure 8 and the
imaginary axis in the $\sigma_{\cL}$ plane \cite{deconinck2017stability}. The
stability spectrum eigenvalues, $\hat \lam_R = 2\Omega(\hat\zeta_R)$ and $\tilde
\lam_R = 2\Omega(\tilde\zeta_R)$, corresponding to $R$-subharmonic
perturbations, are on the figure 8 to the left and right of the intersection
with the imaginary axis.  As the solution parameters are monotonically varied
approaching the stability curve which is the boundary of the stability region
for $R$-subharmonic perturbations, where $M(-\zeta_c(k_R, b_R)) = 2\pi/R$,
$\tilde\zeta_R$ and $\hat\zeta_R$ move to $-\zeta_c(k_R, b_R)$, and $\hat\lam_R$
and $\tilde\lam_R$ converge to the top of the figure 8.  When this happens, the
solution gains \te{spectral} stability with respect to perturbations of period $RT(k)$.
\te{Spectral} Stability is gained through a Hamiltonian Hopf bifurcation in which two complex
conjugate pairs of eigenvalues come together onto the imaginary axis in the
upper and lower half planes.

\begin{figure}

 \begin{subfigure}[b]{0.31\textwidth}
 \includegraphics[width=\textwidth]{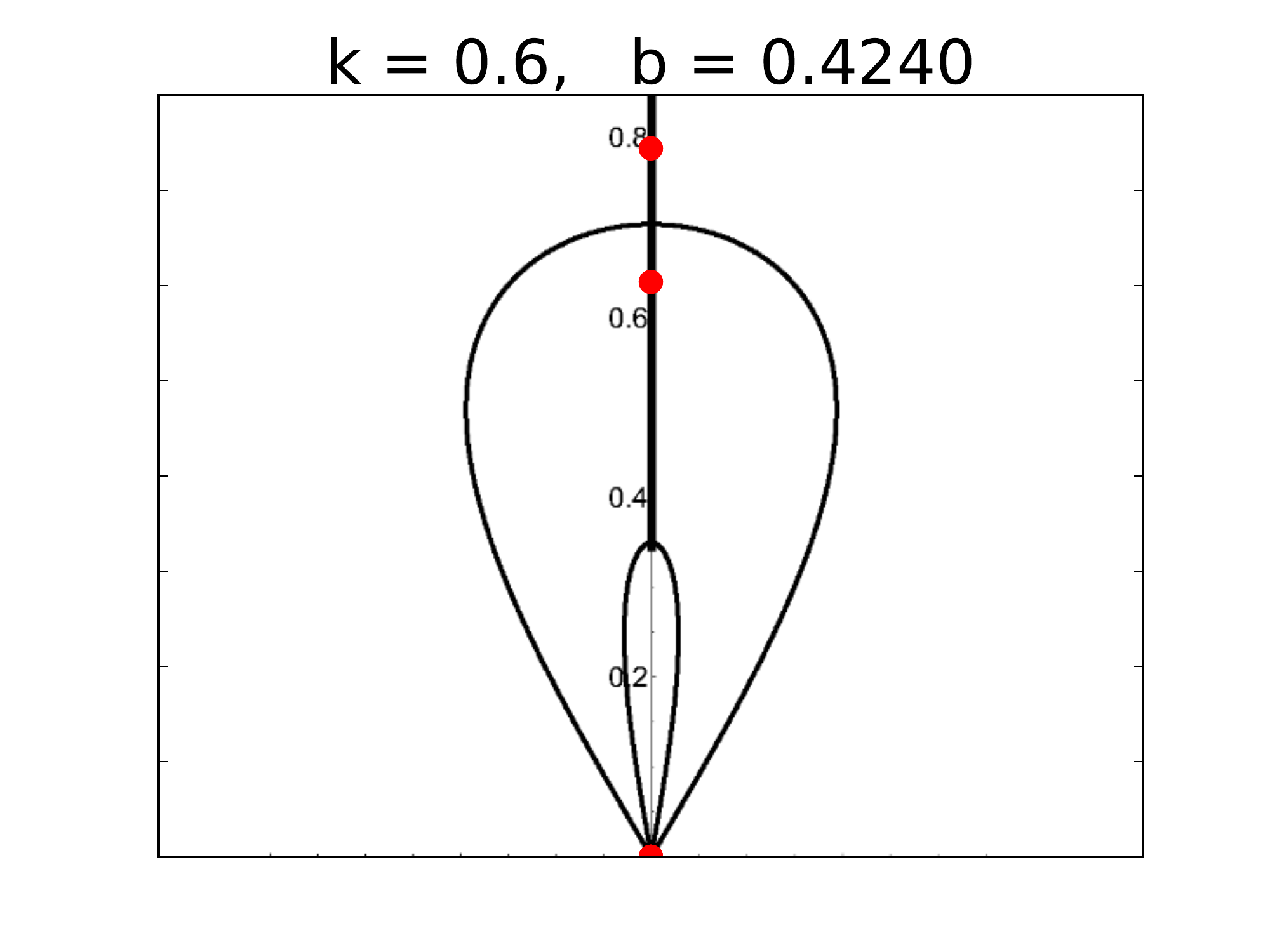}
 \end{subfigure}
 \quad
 \begin{subfigure}[b]{0.31\textwidth}
   \includegraphics[width=\textwidth]{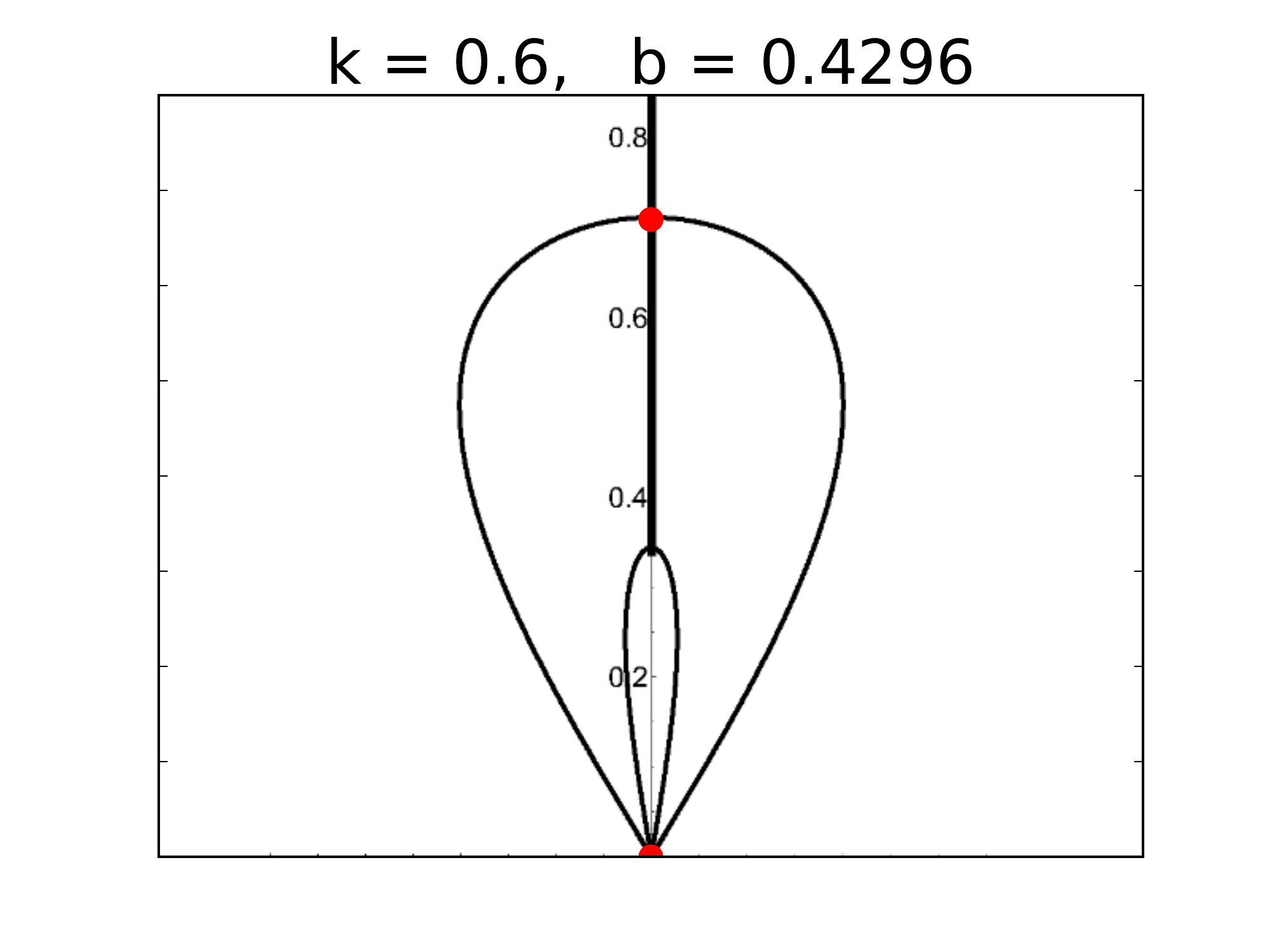}
 \end{subfigure}
 \quad
 \begin{subfigure}[b]{0.31\textwidth}
   \includegraphics[width=\textwidth]{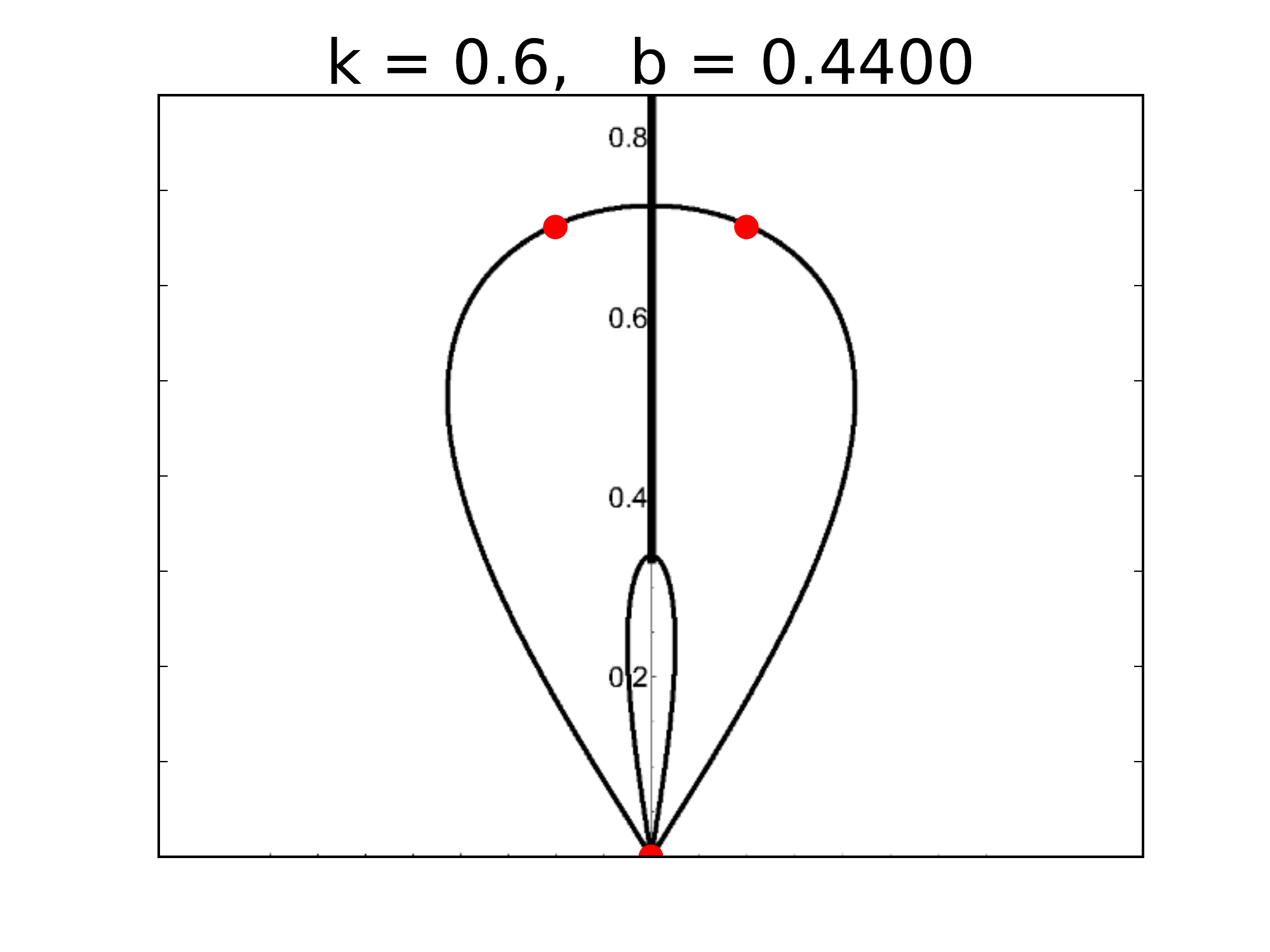}
 \end{subfigure}
 \quad
 \begin{subfigure}[b]{0.29\textwidth}
   \begin{tikzpicture}[spy using outlines={circle, magenta, magnification=5,
        size=2cm, connect spies, transform shape}]
        \node[scale=1.0]{
         \includegraphics[width=\textwidth]{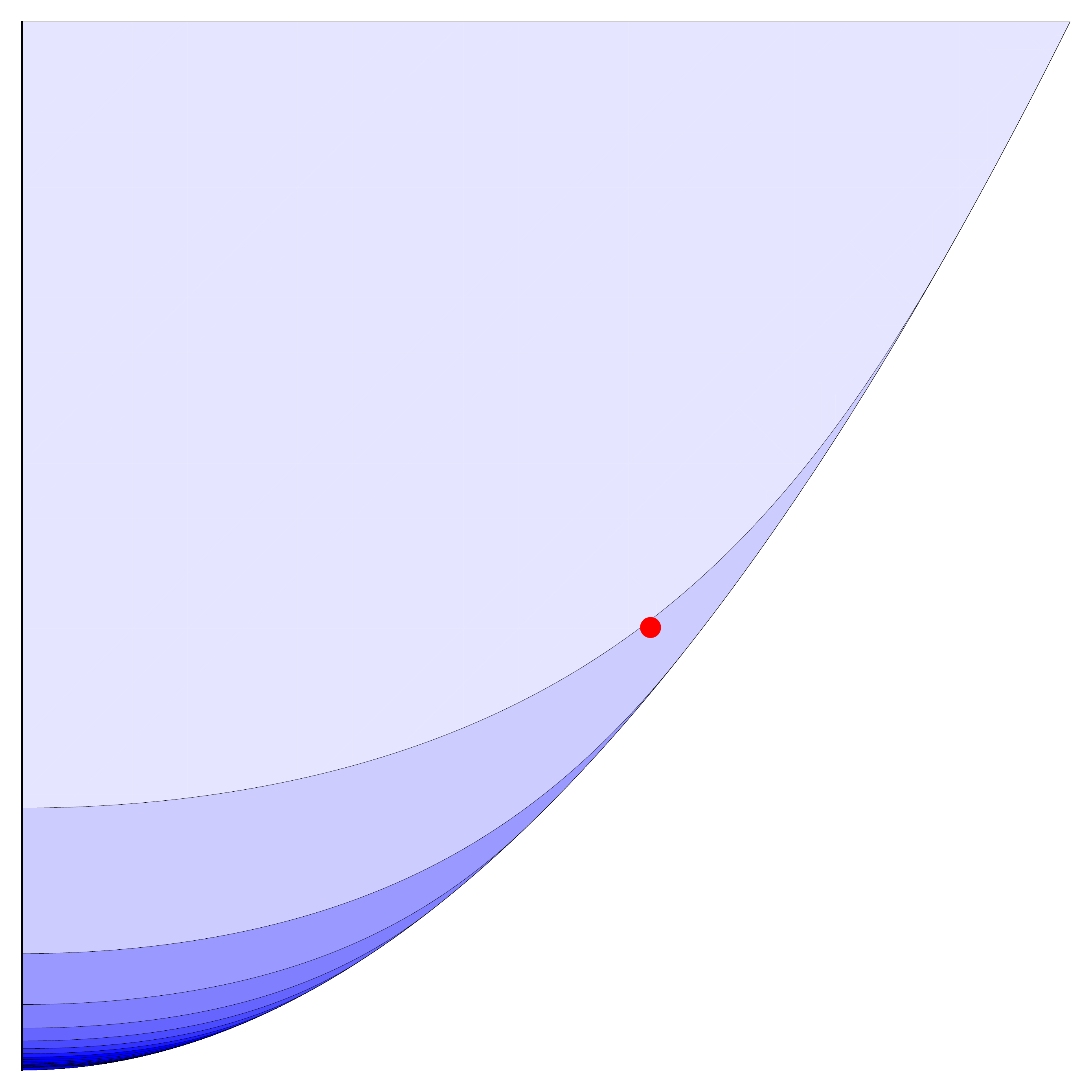} };
        \spy[style={thick}, size=2cm] on (0.47, -0.35) in node [left] at
        (2.2,-1.5);
    \end{tikzpicture}
  \end{subfigure}
  \quad
  \begin{subfigure}[b]{0.29\textwidth}
    \begin{tikzpicture}[spy using outlines={circle, magenta, magnification=5,
      size=2cm, connect spies, transform shape}]
    \node[scale=1.0]{
  \includegraphics[width=\textwidth]{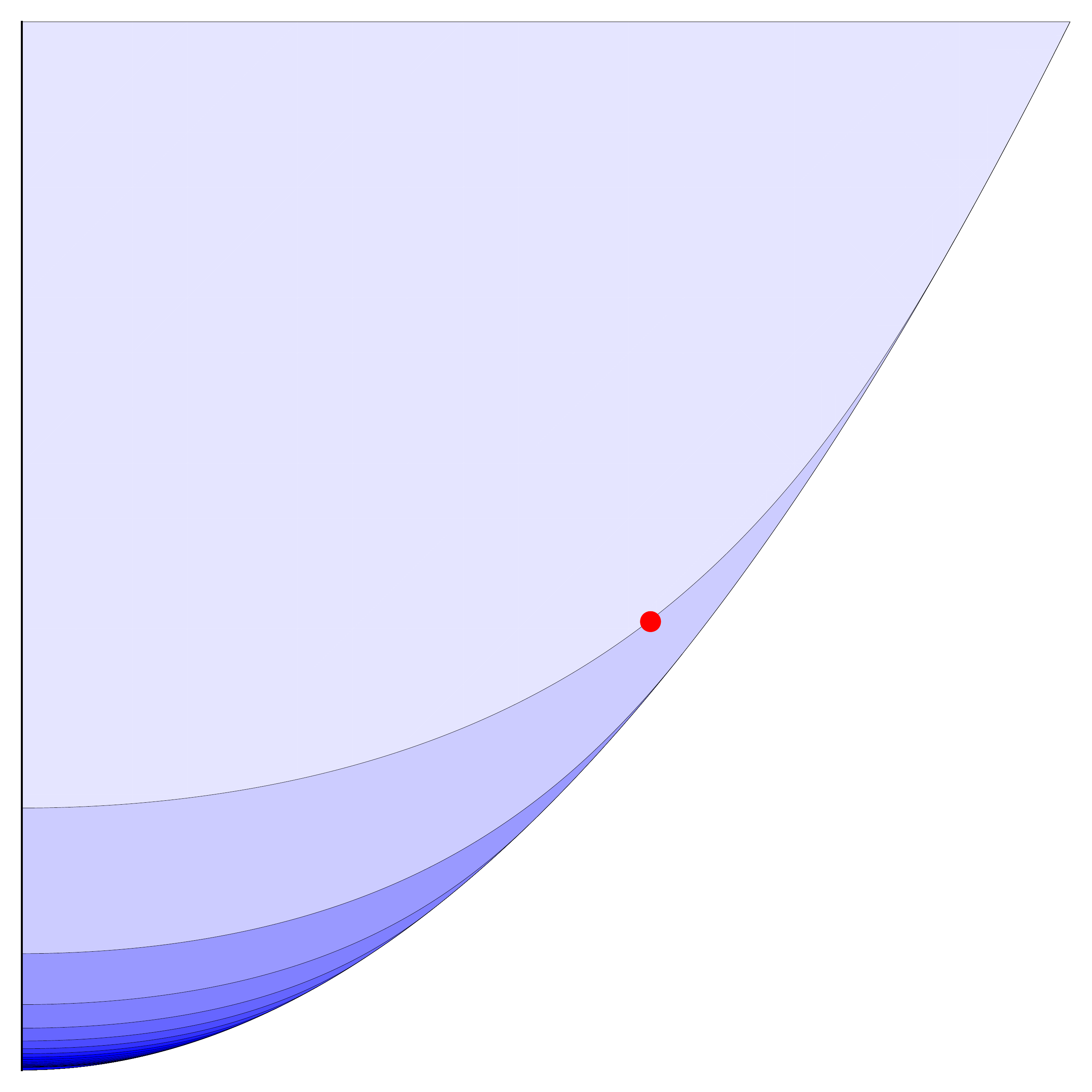} };
    \spy[style={thick}, size=2cm] on (0.47, -0.35) in node [left] at
      (2.2,-1.5);
    \end{tikzpicture}
  \end{subfigure}
  \quad
  \begin{subfigure}[b]{0.29\textwidth}
    \begin{tikzpicture}[spy using outlines={circle, magenta, magnification=5,
      size=2cm, connect spies, transform shape}]
    \node[scale=1.0]{
    \includegraphics[width=\textwidth]{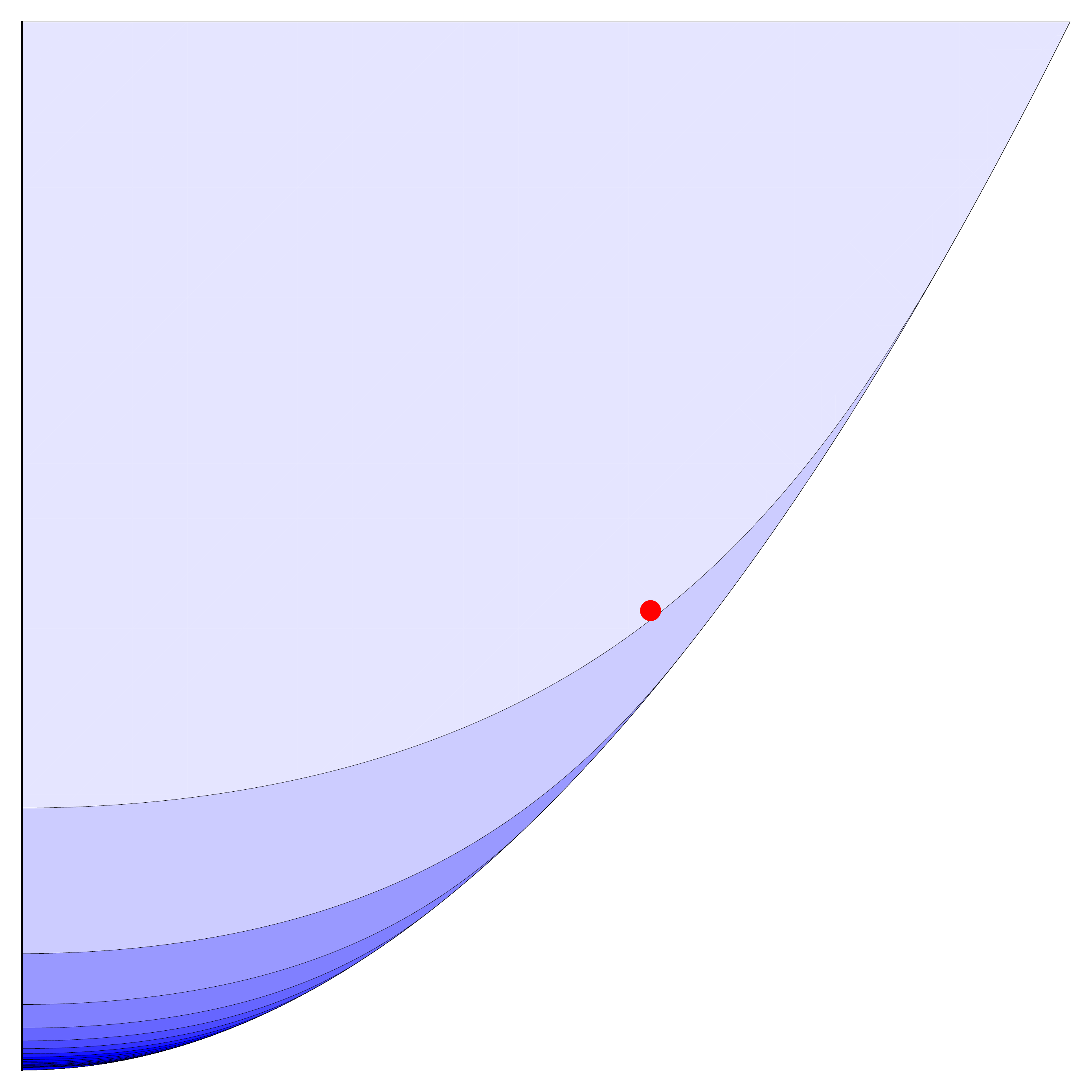} };
      \spy[style={thick}, size=2cm] on (0.47, -0.35) in node [left] at
      (2.2,-1.5);
    \end{tikzpicture}
  \end{subfigure}

\caption{ \label{fig:stabilityToInstabilityFFHM}
For $k=0.6$ and $P=2$ (perturbations of twice the period) we vary $b$ to go from
\te{spectrally} stable to unstable solutions. Top: the top half of the continuous spectrum of
$\cL$ (black, $\RE\lam$ \vs $\IM\lam$, plotted using the analytic expression
\eqref{eqn:Icond} and \eqref{eqn:Omega}) and two eigenvalues with $P=2$
highlighted with red dots computed using the FFHM. Bottom: Location in parameter
space ($k$ \vs $b$)}
\end{figure}

We are interested in the transition from \te{spectrally} stable to unstable solutions. For fixed
$\mu$, consider two eigenvalues $\hat\lam=2\Omega(\hat\zeta)\in i\R$ and
$\tilde\lam=2\Omega(\tilde\zeta) \in i\R$ (\te{spectrally} stable). \te{Stability is lost as the
solution parameters are varied to cross a stability curve, $\hat\zeta \to
-\zeta_c$ and $\tilde\zeta \to -\zeta_c$, entering a new stability region.} The
Krein signature \cite{kollarMiller2014} gives a necessary condition for two
colliding eigenvalues to leave the imaginary axis, leading to instability. For a
given eigenvalue $\lam$ of the operator $\cL_\mu$ associated with a perturbation
of period $PT(k)$ and eigenfunction $W = (W_1, W_2)$, the Krein signature is the
sign of
\begin{align}
  K_2\te{(\zeta)} &:= \ip{W, \mathscr{L}_2 W} = \ip{W, \hat H''(\tilde r, \tilde \ell)W} =
  \int_{-PT(k)/2}^{PT(k)/2} W\cc \hat H''(\tilde r,\tilde \ell) W ~\d
  x,\label{eqn:K2Defn}
\end{align}
\teee{where $\mathscr{L}_2 = \hat H''(\tilde r, \tilde \ell)$ is the Hessian of $\hat
H(r,\ell)$, defined in Appendix \ref{section:integrability}, evaluated at the
elliptic solution.

To relate the eigenfunctions of $J\mathscr{L}_2$ to those of $\cL$, we use
\eqref{eqn:NLSHamiltonianDynamics} in Appendix \ref{section:integrability}.
Linearizing \eqref{eqn:NLSHamiltonianDynamics} about the elliptic solution,
\begin{align}
  \begin{pmatrix}
    r(x,t) \\ \ell(x,t)
  \end{pmatrix} &= \begin{pmatrix} \tilde r(x) \\ \tilde \ell(x) \end{pmatrix} +
  \eps\begin{pmatrix} w_1(x,t) \\ w_2(x,t) \end{pmatrix} + \0{\eps^2},
\end{align}
we obtain
\begin{align}
  \pd{}{t} \begin{pmatrix}
    w_1\\ w_2
\end{pmatrix} &= \begin{pmatrix}
  - \tilde r \tilde \ell & -\frac12 \partial_x^2 - \frac12(\tilde r^2 + 3\tilde
  \ell^2) - \omega\\
  \frac12 \partial_x^2 + \frac12(3\tilde r^2 + \tilde \ell^2) + \omega & \tilde
  r \tilde \ell
\end{pmatrix} \begin{pmatrix} w_1 \\ w_2\end{pmatrix}
= J \hat H''(\tilde r,\tilde \ell) \begin{pmatrix} w_1 \\ w_2\end{pmatrix}
= J \mathscr{L}_2 \begin{pmatrix} w_1 \\ w_2\end{pmatrix}.
\end{align}
Separation of variables,
$(w_1,w_2)^T=e^{\lam t}(W_1, W_2)^T$, and the squared-eigenfunction give $\lam =
2\Omega(\zeta)$ and
\begin{align}
  W = \begin{pmatrix}
    W_1 \\ W_2
  \end{pmatrix} &=
  \begin{pmatrix}
  \varphi_1^2 + \varphi_2^2 \\
  -i\varphi_1^2 + i\varphi_2^2
\end{pmatrix}.\label{eqn:squaredEFReIm}
\end{align}
From the expressions for the eigenfunctions \eqref{eqn:squaredEF} and
\eqref{eqn:squaredEFReIm} it is clear that if an eigenfunction $(U,V)^T$ of
$\mathcal{L}$ corresponds to a spectral element $\lambda$, then there is a
corresponding eigenfunction $(W_1,W_2)^T$ of $J \mathscr{L}_2$ with the same
spectral element $\lambda = 2\Omega(\zeta)$.
}

Since $2\Omega W = J\mathscr{L}_2W$ and since $J$ is invertible, we find from
\eqref{eqn:squaredEFReIm} that
\begin{align}
  W\cc \mathscr{L}_2 W =2\Omega W\cc J^{-1}W =  2\Omega(W_1 W_2\cc -
  W_2W_1\cc) = 4i\Omega(\abs{\varphi_1}^4 - \abs{\varphi_2}^4) \label{eqn:WLW},
\end{align}
with $\varphi_1 = -\gamma(x)B(x),$ and $\varphi_2 = \gamma(x) (A(x) - \Omega)$.
For a fixed $\mu$ and a corresponding spectrally stable solution, $\Omega(\zeta)
\in i\R$ for $\zeta \in \R$. When $\zeta \in \R$, the preimage of
$\lam(\zeta)\in i\R$ is only one point \eqref{eqn:ReImOmega} so that by
Theorem~\ref{thm:squaredEigCompleteness}, $\lam(\zeta)$ is a simple eigenvalue.
Therefore, we compute the Krein signature only for $\zeta \in \R$. From
\eqref{eqn:gamma(x)},
\begin{align}
  \gamma(x) &= \frac{\gamma_0}{B} \exp(\text{imag}) \exp\lt(-\int
    \frac{(A-\Omega)\phi}{B}~\d x\rt),
\end{align}
where ``imag'' represents imaginary terms which are not important for the
magnitude of $\gamma(x)$. The magnitude of $\gamma(x)$ depends
critically on
\begin{align}
\begin{split}
 -\frac{(A-\Omega)\phi}{B} &=  \frac{C \phi}{(A+\Omega)}
    = \frac{-i\zeta|\phi|^2 - (\tilde r \tilde r_x + \tilde\ell
    \tilde\ell_x + i \tilde \ell \tilde r_x - i\tilde r\tilde \ell_x)/2}{\zeta^2
    - |\phi|^2/2 - \omega/2 + i\Omega}\\
 &= \text{imag} + \frac12\D{x}\ln\abs{i(A(x)+\Omega)},
\end{split}
\end{align}
where $\phi = \tilde r + i \tilde \ell$ is the elliptic solution whose stability
is being investigated. Since $A(x) -\Omega \in i\R$, it follows that
\begin{align}
 \gamma(x)
 &= \frac{\gamma_0}{B}\exp\lt(\text{imag}\rt)\abs{i(A(x)+\Omega)}^{1/2}.
\end{align}
Equating $|\gamma_0|=1$,
\begin{align}
 \abs{\gamma(x)}^2 &= \frac{\abs{A+\Omega}}{|B|^2} = \frac{1}{|A-\Omega|},
\end{align}
so that
\begin{align}
 \abs{\varphi_1}^4 &= |\gamma|^4|B|^4 = |A+\Omega|^2, \qquad |\varphi_2|^4 =
|\gamma|^4|A-\Omega|^4 = |A-\Omega|^2.
\end{align}
Further,
\begin{align}
  W\cc \mathscr{L}_2 W &= 4i\Omega\lt(|A+\Omega|^2 - |A-\Omega|^2\rt) = -16\Omega^2 i A,
\end{align}
implies
\begin{align}
 K_2(\zeta) = -16\Omega^2(\zeta)\int_{-PT(k)/2}^{PT(k)/2}\lt(\zeta^2 - \frac 12|\phi|^2 -
\frac{\omega}{2}\rt)~\d x \label{K2-orig},
\end{align}
which is the same $K_2$ found in \cite{bottman2011elliptic} with appropriate
modifications for the focusing case. This integral can be computed directly
using elliptic functions \cite[equation (310.01)]{byrdFriedman}:
\begin{align}
\begin{split}
  K_2\te{(\zeta)} &= -32\Omega^2(\zeta)PT(k)\lt(\zeta^2 + \frac{b}{4} + \frac14\lt(
    1-k^2-2\frac{E(k)}{K(k)}\rt)\rt) = -32\Omega(\zeta)^2 PT(k)(\zeta^2-\zeta_c^2).\label{eqn:K2def}
\end{split}
\end{align}
Note that since $\Omega^2<0$ for stable eigenvalues, $K_2(\zeta) <0$ for $\zeta
\in (-\zeta_c,\zeta_c)$ changing sign at $\zeta = \pm \zeta_c$. Therefore the
two eigenvalues which collide at $-\zeta_c$ have opposite Krein signatures, a
necessary condition for instability.

For the trivial-phase solution, $\Omega(\zeta) = \Omega(-\zeta)$, so the Krein
signature calculation here might not be sufficient, since the colliding
eigenvalues, $\hat\lam$ and $\tilde\lam$, might not be simple. Our remaining 
stability results do not rely on this fact. \teee{Computing the
  Krein signature for Stokes waves (Section \ref{section:Stokes}) is simpler
than the calculation here, but it is omitted for brevity.}

\section{Orbital stability}\label{section:orbitalStability}
The results on spectral stability may be strengthened to orbital stability by
constructing a Lyapunov functional in conjunction with the results of
\cite{grillakis1987stability, maddocksSachs}. \teee{In Theorem
  \ref{thm:mainNTPStabilityResult}, we have established spectral stability for
  solutions below the curve \eqref{eqn:mainStabilityBound} (see Figure
  \ref{fig:stabilityCurves})}. In this section we show that those solutions are
  also orbitally stable. \teee{To this end, we use the higher-order conserved
    quantities of NLS (see Appendix \ref{section:integrability}).} \te{
  \begin{definition}
    A stationary solution $\tilde\Psi$ of \eqref{eqn:NLS} is orbitally stable
    with respect to the norm $\norm{\cdot}$ if for any given $\eps>0$ there
    exists a $\delta>0$ such that
    \begin{align}
      ||\Psi(x,0) - \tilde\Psi(x,0)||<\delta,
    \end{align}
    implies that for all $t>0$,
    \begin{align}
      \inf_{g \in G} ||\Psi(x,t) - T(g)\tilde\Psi(x,t)||<\eps,
    \end{align}
    where $T(g)$ is the action of an element $g$ of the group of symmetries $G$.
  \end{definition}
}

To prove nonlinear stability, we construct a Lyapunov function, \ie a constant
of the motion $\cK(r,\ell)$ for which the solution $(\tilde r, \tilde \ell)$
is an unconstrained minimizer:
\begin{align}
  \cK'(\tilde r, \tilde
  \ell)=0, \quad
  \D{t}\cK(\tilde r,\tilde\ell)=0,
  \quad \ip{v, \cK''(\tilde r, \tilde \ell)v}>0, \quad \forall v \in
  \mathbb{V}, v\neq 0. \label{eqn:LyapunovDefn}
\end{align}
In Section \ref{section:krein} it is shown that the energy $\hat H$ satisfies
the first two conditions in \eqref{eqn:LyapunovDefn} but not the third since
$K_2$ is not of definite sign. When evaluated at stationary solutions, each
equation defined in \eqref{eqn:nthNLS} satisfies the first and second
conditions.  Following the work of \cite{bottman2011elliptic,
mkdvOrbitalStability, maddocksSachs,kdvFGOrbitalStability} we choose
one member of \eqref{eqn:nthNLS} to satisfy the third condition by choosing the
constants $c_{n,j}$ in a particular manner. \te{A different approach to finding
  a Lyapunov function is used in \cite{GallayPelinovskyOrbital} for
defocusing NLS.}

Linearizing the $n$-th NLS equation about the elliptic solution results in
\[
  w_{t_n} = J \mathscr{L}_n w, \qquad \mathscr{L}_n = \hat H_n''(\tilde r,
  \tilde \ell).
\]
The squared-eigenfunction connection and separation of variables gives
\begin{align}
  2 \Omega_n W(x) = J \mathscr{L}_n W(x),\label{eqn:OmeganLn}
\end{align}
where $\Omega_n$ is defined by
\begin{align}
w(x,t_n) = e^{\Omega_n t_n}\begin{pmatrix}W_1(x) \\ W_2(x)\end{pmatrix} =
    e^{\Omega_n t_n} W(x),
\end{align}
and where $W(x)$ \te{is any eigenfunction of $\cL_2$}.
The relation
\begin{align}
 \Omega_n^2(\zeta) = p_n^2(\zeta)\Omega^2(\zeta),
 \qquad n\geq2\label{eqn:OmeganSquared},
\end{align}
where $p_n$ is a polynomial of degree $n-2$, is found in
\cite{bottman2011elliptic} and applies in the focusing case as well. When $n=2$,
$p_2 = 1$ so that $\Omega_2 = \Omega$ and \eqref{eqn:OmeganLn} implies
\begin{align}
  2J^{-1}W = \frac{1}{\Omega}\mathscr{L}_2 W=\frac{1}{\Omega}\mathscr{L} W,
\end{align}
\te{for any eigenfunction $W$ of $\cL_2$.} The definition of $K_2$
\eqref{eqn:K2Defn} and \eqref{eqn:OmeganSquared} imply
\begin{align}
  K_n\te{(\zeta)} &:=  \ip{W, \mathscr{L}_n W} = \ip{W, \hat H_n''(\tilde r, \tilde \ell)
  W} = \frac{\Omega_n}{\Omega}\int_{-PT(k)/2}^{PT(k)/2} W\cc \mathscr{L}_2 W~\d
  x = p_n(\zeta) K_2(\zeta).\label{eqn:KnDefn}
\end{align}
$K_2(\zeta)$ takes the sign $+,-,+$ for $\zeta \in (-\infty, -\zeta_c),~\zeta \in
(-\zeta_c,\zeta_c),$ and $\zeta \in (\zeta_c, \infty)$ respectively.  Since
$p_4(\zeta)$ is quadratic, we use $K_4(\zeta) = p_4(\zeta) K_2(\zeta)$, where
$p_4(\zeta)$ is defined by \eqref{eqn:OmeganSquared}. Adjusting the constants of
$p_4$ so that it has the same sign as $K_2$ with zeros at $\zeta = \pm \zeta_c$
makes $K_4$ nonnegative.  In order to calculate $\Omega_4(\zeta)$, we need
\begin{align}
 \hat T_4 = T_4 + c_{4,3}T_3 + c_{4,2} T_2 + c_{4,1} T_1 + c_{4,0} T_0,
\end{align}
since $\Omega_4$ is defined by $\hat T_4\chi = \Omega_4\chi$ by separation of
variables in \eqref{eqn:tau4dynamics}. The $c_{4,k}$ are not entirely arbitrary.
They are determined by requiring that the stationary elliptic solutions are
stationary with respect to $t_4$, or
\begin{align}
 \pd{}{t_4} \begin{pmatrix} r \\ \ell\end{pmatrix} = J \hat H_4' =  J \lt( H_4' +
 c_{4,3}H_3' + c_{4,2} H_2' + c_{4,1} H_1' + c_{4,0}H_0'\rt) = 0.
\end{align}
Since $J$ is invertible,
\begin{align}
 \hat H_4' = H_4' + c_{4,3} H_3' + c_{4,2}H_2' + c_{4,1} H_1' + c_{4,0} H_0' =
 0,
\end{align}
when evaluated at the stationary solution. Equating
\begin{align}
 0 = \Psi_{\tau_4} + c_{4,3}\Psi_{\tau_3} + c_{4,2}\Psi_{\tau_2} +
c_{4,1}\Psi_{\tau_1} + c_{4,0}\Psi_{\tau_0},
\end{align}
and using \eqref{eqn:taundynamics} with $\Psi$ defined in
\eqref{eqn:ellipticSolns}, we find
\begin{subequations}
\begin{align}
 c_{4,0} &= \omega c_{4,2} - c c_{4,3} + \frac 18\lt( 1 + 15b^2 + 4k^2 + k^4 +
10b + 10bk^2\rt),\label{eqn:c40}\\
 c_{4,1} &= \frac12 c - \frac12 \omega c_{4,3},\label{eqn:c41}
\end{align}
\end{subequations}
with $c_{4,2}$ and $c_{4,3}$ arbitrary. Then
\begin{align}
 \Omega_4^2 = \frac{1}{16}\lt(2 \omega + 4\zeta^2 + 4 c_{4,2} + 4\zeta
c_{4,3}\rt)^2 \Omega_2^2,\label{eqn:Omega4defn}
\end{align}
so that
\begin{align}
 p_4(\zeta) = \zeta^2 + \zeta c_{4,3} + \frac12 \omega + c_{4,2}.
\label{eqn:p4defn}
\end{align}
The constants $c_{4,2}$ and $c_{4,3}$ are chosen so that $K_4(\zeta) =
p_4(\zeta) K_2(\zeta)\geq 0$. Setting
\begin{subequations}
\begin{align}
 c_{4,3} &= 0 \label{eqn:c43},\\
    c_{4,2} &= -\frac{\omega}{2} + \frac b4 + \frac14\lt(1-k^2- 2\frac{E(k)}{K(k)}\rt)
\label{eqn:c42},
\end{align}
\end{subequations}
we have
\begin{align}
 K_4(\zeta) &= -32\Omega^2(\zeta) PT(k) \lt(\zeta^2 - \zeta_c^2\rt)^2 \geq
    0\label{eqn:K4positive},
\end{align}
for $\zeta \in \R$ and equality only at $\zeta = \pm\zeta_c$ and the roots of
$\Omega$.
The result \eqref{eqn:K4positive} has only been proven for eigenfunctions of
$\cL_2$. However since the eigenfunctions of $\cL_2$ are complete in
$L^2_{\text{per}}([-T(k)/2,T(k)/2])$ \cite{haragus_kapitula-SpectraPeriodicWaves}
the results apply to all functions in $L^2_{\text{per}}([-T(k)/2,T(k)/2])$.
\te{This result implies that $H_4$, with the constants chosen above, acts as a
  Lyapunov functional for the spectrally stable elliptic solutions with respect
  to the $t_4$ dynamics. However since all flows of the NLS hierarchy commute,
  $H_4$ is a conserved quantity with respect to the $t$ dynamics as well.}
  Therefore whenever solutions are spectrally stable with respect to a given
  subharmonic perturbation, they are also formally stable \cite{maddocksSachs}.

To go from formal to orbital stability, the conditions of
\cite{grillakis1987stability} must be satisfied. The kernel of the functional
$\hat H_4''(\tilde r, \tilde \ell)$ must consist only of the infinitesimal
generators of the symmetries of the solution $(\tilde r, \tilde \ell)$. The
infinitesimal generators of the Lie point symmetries correspond to the values of
$\zeta$ for which $\Omega(\zeta) = 0$, so the kernel of $\hat H_4''(\tilde r,
\tilde \ell)$ contains the infinitesimal generators of the Lie point symmetries.
In order for the kernel to consist only of this set, we need strict inequality
in \eqref{eqn:mainStabilityBound}. This comes from the following lemma.

\begin{lemma}\label{lem:badset}
    Let $b,k$ and $P$ be such that \eqref{eqn:mainStabilityBound} holds with a
    strict inequality. Then the set
    \begin{align}
      S := \{\zeta \in \sigma_L : M(\zeta) = m2\pi / P, \quad m = 0, \ldots,
      P-1\}\label{eqn:badset}
    \end{align}
    does not contain $\pm \zeta_c$.
\end{lemma}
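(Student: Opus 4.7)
The plan is to sandwich the values $M(\pm\zeta_c)$ strictly inside the open interval $(0,2\pi/P)$; since the set $\{2\pi m/P : m=0,\ldots,P-1\}$ consists of the endpoints $0, 2\pi/P, 4\pi/P,\ldots,2(P-1)\pi/P$, none of which lie in this open interval, $\pm \zeta_c$ cannot belong to $S$.

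\textbf{Upper bound.} By hypothesis, the strict form of \eqref{eqn:mainStabilityBound} gives $M(-\zeta_c)<2\pi/P$. To control $M(\zeta_c)$, I invoke the inequality $M(\zeta_c)<M(-\zeta_c)$ already used in the proof of Theorem~\ref{thm:mainNTPStabilityResult} (established in \cite{deconinck2017stability}). Together these produce
\begin{align*}
M(\zeta_c) < M(-\zeta_c) < \frac{2\pi}{P}.
\end{align*}

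\textbf{Strict lower bound.} This is the more delicate step. Under the assumption $b\le B(k)$, Lemma~\ref{lem:LaxSpectrumIntersection} says that each complex band of $\sigma_L$ joins one of the roots $\zeta_j$ to one of $\pm\zeta_c$ on the real axis. Along such a band, the directional derivative formula \eqref{eqn:muDirectionalDeriv} in the proof of Lemma~\ref{lem:muSymmetry} equals $2\bigl[(\IM\, dI/d\zeta)^2+(\RE\, dI/d\zeta)^2\bigr]$, which vanishes only at the turning points, i.e.\ at $\pm\zeta_c$. Hence $M$ is \emph{strictly} monotonic along the band from $\zeta_j$ to $\pm\zeta_c$. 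By Lemma~\ref{lem:0atroots}, $M(\zeta_j)=0$, so $M(\pm\zeta_c)\neq 0$ provided $\pm\zeta_c$ is not itself a root of $\Omega$. But for the NTP solutions ($c\neq 0$) and the cn solutions the roots $\zeta_j$ all have nonzero imaginary part, while $\pm\zeta_c\in\R$ for $b<B(k)$; and when $b=B(k)$ we have $\zeta_c=0$, which again is not a root of $\Omega$. Thus $M(\pm\zeta_c)>0$ strictly.

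\textbf{Conclusion.} Combining the two steps yields $0<M(\pm\zeta_c)<2\pi/P$, so neither $-\zeta_c$ nor $\zeta_c$ satisfies $M(\zeta)=2\pi m/P$ for any $m\in\{0,1,\ldots,P-1\}$. Therefore $\pm\zeta_c\notin S$.

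I expect the main obstacle to be the clean invocation of strict monotonicity and the verification that $\pm\zeta_c$ is genuinely distinct from every $\zeta_j$; once these are in hand, the argument is essentially bookkeeping on the interval $[0,2\pi)$ of $M$-values. The hypothesis of strict inequality in \eqref{eqn:mainStabilityBound} is used crucially to exclude the borderline case $M(-\zeta_c)=2\pi/P$, which would place $-\zeta_c$ in $S$.
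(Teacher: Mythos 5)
Your overall strategy (sandwich $M(\pm\zeta_c)$ strictly between $0$ and $2\pi/P$, with the upper bound coming from the strict hypothesis together with $M(\zeta_c)<M(-\zeta_c)$) is the same as the paper's, and the upper-bound half is fine. The gap is in your ``strict lower bound.'' The quantity $M(\zeta)=T(k)\mu(\zeta)$ is only defined modulo $2\pi$ (see the $2\pi n$ in \eqref{eqn:muTk}), and membership in $S$ with $m=0$ means $M(\pm\zeta_c)\equiv 0 \pmod{2\pi}$. Your monotonicity argument controls the \emph{continuous lift} of $M$ along the band: strict monotonicity from $M(\zeta_j)=0$ shows the lift at $-\zeta_c$ is nonzero, but it does not prevent the lift from equalling exactly $2\pi$ (or a higher multiple) at the crossing with the real axis, in which case the normalized value — the one appearing in \eqref{eqn:mainStabilityBound}, which is ``defined in the $2\pi$ interval in which $M(\zeta_j)=0$'' — would be $0<2\pi/P$, the hypothesis would still hold, and $-\zeta_c$ would lie in $S$. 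This is precisely the scenario the paper's proof singles out (``the only possibility for $-\zeta_c$ to be in $S$ is that $M(-\zeta_c)=0\bmod 2\pi$'') and excludes by invoking Lemma~\ref{lem:NoDoublePointsSpine}, i.e.\ the nonexistence of periodic eigenvalues on the complex bands and at their crossing with $\R$. That lemma is the genuinely hard ingredient (its proof in Appendix~\ref{Appendix:NoDoublePointsProof} requires Lemmas~\ref{lem:noCrossingscn}/\ref{lem:noCrossings} and the separation of $-\zeta_c$ from the double root of $Q$), and the directional-derivative computation \eqref{eqn:muDirectionalDeriv} does not recover it.

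Two smaller points: your monotonicity claim needs a choice of orientation — along the tangent direction \eqref{eqn:LaxSpectrumTangents} the derivative of $M$ is positive, but traversed the other way $M$ decreases, so ``$M(\pm\zeta_c)>0$'' for the lift requires fixing the direction (Lemma~\ref{lem:muSymmetry} does this), and a negative lift again forces you back to the mod-$2\pi$ discussion. The same mod-$2\pi$ issue affects your treatment of $+\zeta_c$. If you simply add an appeal to Lemma~\ref{lem:NoDoublePointsSpine} to rule out $M(\pm\zeta_c)\equiv 0\pmod{2\pi}$, your argument closes — but then the monotonicity step becomes superfluous and the proof coincides with the paper's.
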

\begin{proof}
  Since $M(-\zeta_c)<2\pi /P$, the only possibility for $-\zeta_c$ to be in $S$
  is that $M(-\zeta_c) = 0 \mod{2\pi}$. But since $-\zeta_c$ represents the
  intersection of the branch of spectra and the real line,
  Lemma~\ref{lem:NoDoublePointsSpine} applies and $M(-\zeta_c)\neq 0
  \mod{2\pi}$. Since $M(\zeta_c)< M(-\zeta_c)$, it is also the case that
  $\zeta_c$ is not in $S$
\end{proof}

The above lemma implies that if $M(-\zeta_c)<2\pi /P$, the kernel of $\hat
H_4''(\tilde r, \tilde \ell)$ consists only of the roots of $\Omega(\zeta)$.
\te{\textbf{It follows that, for a fixed perturbation with period $PT(k)$, all
  solutions which are spectrally stable with respect to that perturbation and
  whose parameters do not lie \textit{on} stability curves (the boundary of
subharmonic stability regions, at which $M(-\zeta_c) = 2\pi/P$) are also
orbitally stable. }}

\section*{Conclusion}
We have proven the orbital stability with respect to subharmonic perturbations
for the elliptic solutions of the focusing nonlinear Schr\"{o}dinger equation.
The necessary condition for stability \eqref{eqn:mainStabilityBound} is shown to
also be a sufficient condition with the help of a numerical check. We see three
main remaining tasks to be completed for this problem: (i) remove the numerical
check for sufficiency of Theorem~\ref{thm:mainNTPStabilityResult}; (ii) determine
whether or not solutions lying on stability curves, $M(-\zeta_c) = 2\pi/P$, are
orbitally stable; and (iii) prove that the solutions satisfying $b>B(k)$ in
Theorem \ref{thm:imaginaryNTPStabilityResult} are not stable with respect to
2-subharmonic perturbations.

The main difficulty in establishing the results presented in this paper is that
the Lax pair does not define a self-adjoint spectral problem. Work towards
establishing similar nonlinear stability results for the sine-Gordon equation
\cite{SGstability}, for which the Lax spectral problem is both not self-adjoint
and is a quadratic eigenvalue problem, is currently underway. This is
generalized in \cite{upsalInPrep} by computing the Floquet discriminant for all
equations in the AKNS hierarchy and other integrable equations.

\section*{Acknowledgments}
The authors acknowledge Greg Forest, Stephane Lafortune, and Vishal Vasan for
helpful conversations and ideas. In addition, the referees are thanked for many
useful suggestions. This work was generously supported by the National Science
Foundation under award number NSF-DMS-1522677 (BD). Any opinions, findings, and
conclusions or recommendations expressed in this material are those of the
authors and do not necessarily reflect the views of the funding sources.

\begin{appendices}
\tee{  \section{Integrability background}\label{section:integrability}
The results presented in this section are found in more detail in classic
sources such as \cite{AKNS74, SolitonsAndIST}. NLS \eqref{eqn:NLS} is a
Hamiltonian system with canonical variables $\Psi$ and $i\Psi\cc$, \ie it can be
written as an evolution equation
\begin{align}
  \pd{}{t} \begin{pmatrix} \Psi \\ i\Psi\cc \end{pmatrix} = J H'(\Psi,i \Psi\cc)
  = J \begin{pmatrix} \delta H/\delta \Psi \\ \delta H /
    \delta(i\Psi\cc)\end{pmatrix},
\end{align}
\te{for a functional $H$ and where
\begin{align}
   J &= \begin{pmatrix} 0 & 1 \\ -1 & 0 \end{pmatrix}.\label{eqn:J}
\end{align}
\te{We define the variational gradient \cite{SolitonsAndIST} of a function $F(u,v)$
by}
\begin{align}
     F'(u,v) = \lt(\displaystyle
         \frac{\delta F}{\delta u}~, ~ \displaystyle \frac{\delta F}{\delta v}
        \rt)^T
               &= \lt(
                 {\displaystyle \sum_{j=0}^N (-1)^j \partial_x^j
                 \frac{\partial F}{\partial u_{jx}}}~,~
                 {\displaystyle \sum_{j=0}^N (-1)^j \partial_x^j
                 \frac{\partial F}{\partial v_{jx}}}
               \rt)^T,
\end{align}
where $u_{jx} = \partial_x^j u$, and $N$ is the highest-order $x$-derivative of
$u$ or $v$ in $F$. }The quantity $H(\Psi,i\Psi\cc)$ is conserved under
\eqref{eqn:NLS} and is the Hamiltonian of \eqref{eqn:NLS}. The Hamiltonian is
one of an infinite number of conserved quantities of NLS. We label these
quantities $\{H_j\}_{j=0}^\infty$. We need the first five conserved
quantities:
\begin{subequations}
\begin{align}
  H_0 &= 2\int \abs{\Psi}^2~ \d x,\\
  H_1 &= i \int \Psi_x\Psi\cc~ \d x,\\
  H_2 &= \frac12 \int\lt(\abs{\Psi_x}^2 - \abs{\Psi}^4\rt)~ \d x,\\
  H_3 &= \frac{i}{4}\int\lt(\Psi_x\cc \Psi_{xx} - 3 \abs{\Psi}^2 \Psi\cc
  \Psi_x\rt)~\d x,\\
  H_4 &= \frac18 \int \lt( \abs{\Psi_{xx}}^2 - \Psi^2 \Psi_x^{*2}  - 6
  \abs{\Psi}^2 \abs{\Psi_x}^2 + \abs{\Psi}^2\Psi\cc \Psi_{xx} + 2
  \abs{\Psi}^6\rt)~\d x.
\end{align}
\end{subequations}
The above equations can be written in terms of $\Psi$ and $i\Psi\cc$ by using
$\abs{\Psi_{jx}}^2 = \Psi_{jx}\Psi_{jx}\cc$. The above integrals are evaluated
over one period $T(k)$, for periodic or quasi-periodic solutions. Each $H_n$
defines an evolution equation with respect to a time variable $\tau_n$ by
\begin{align}
  \pd{}{\tau_n} \begin{pmatrix} \Psi \\ i\Psi\cc\end{pmatrix} &= J
    H_n'(\Psi,i \Psi\cc) = J \begin{pmatrix} \delta H_n/\delta \Psi \\
\delta H_n/\delta (i\Psi\cc) \end{pmatrix}.\label{eqn:NLShierarchy}
\end{align}
When $n=2$ and $\tau_2 = t$, $H_2 = H$ is the NLS Hamiltonian:
\eqref{eqn:NLShierarchy} is equivalent to \eqref{eqn:NLS}. Letting $\Psi =
(r+i\ell)/\sqrt{2}$ and $i\Psi\cc = i(r-i\ell)/\sqrt{2}$, where $r$ and $\ell$
are the real and imaginary parts of $\Psi$ respectively,
\eqref{eqn:NLShierarchy} becomes
\begin{align}
  \pd{}{\tau_n} \begin{pmatrix} r \\ \ell\end{pmatrix} &= J
    H_n'(r,\ell) = J \begin{pmatrix} \delta H_n/\delta r \\
\delta H_n/\delta \ell \end{pmatrix}.\label{eqn:NLShierarchycomplex}
\end{align}
We use \eqref{eqn:NLShierarchy} and \eqref{eqn:NLShierarchycomplex}
interchangeably and refer to $H_n(r,\ell)$ and $H_n(\Psi, i \Psi\cc)$ as $H_n$
when the context is clear. The collection of equations \eqref{eqn:NLShierarchy}
is the NLS hierarchy \cite[Section 1.2]{SolitonsAndIST}. The first five members
of the hierarchy are
\begin{subequations}
\begin{align}
  \Psi_{\tau_0} &= -2 i \Psi \label{eqn:tau0dynamics},\\
  \Psi_{\tau_1} &= \Psi_x \label{eqn:tau1dynamics},\\
  \Psi_{\tau_2} &= i \abs{\Psi}^2 \Psi + \frac i2 \Psi_{xx}
  \label{eqn:tau2dynamics},\\
  \Psi_{\tau_3} &= -\frac32 \abs{\Psi}^2\Psi_x - \frac14
  \Psi_{xxx}\label{eqn:tau3dynamics},\\
  \Psi_{\tau_4} &= -\frac34i |\Psi|^4 \Psi - \frac34 i \Psi\cc
  \Psi_x^2 - \frac i2 \Psi |\Psi_x|^2 - i |\Psi|^2 \Psi_{xx} - \frac i4 \Psi^2
  \Psi_{xx}\cc - \frac i8 \Psi_{xxxx}\label{eqn:tau4dynamics}.
\end{align}
\label{eqn:taundynamics}
\end{subequations}
Each equation obtained in this manner is integrable and shares the conserved
quantities $\{H_j\}_{j=0}^\infty$.

Through the AKNS method, the $n$-th member of the NLS hierarchy is obtained by
enforcing the compatibility of a pair of ordinary differential equations, the
$n$-th Lax Pair. \te{The first equation of the pair is $\chi_{\tau_1} = T_1\chi$ and
the second is $\chi_{\tau_n} = T_n \chi$, for the $n$-th member of the
hierarchy. Here, $T_1$ and $T_n$ are $2\times 2$ matrices, the first five of
which are defined in \eqref{eqn:linearHierarchy}. } The $n$-th member of the NLS
hierarchy is recovered by requiring $\partial_{\tau_n}\chi_{\tau_1} =
\partial_{\tau_1}\chi_{\tau_n}$.  For example, \eqnref{eqn:NLS} is recovered
from the compatibility condition of $\chi_{\tau_1}$ and $\chi_{\tau_2}$ with $t
= \tau_2$. We call the collection of the Lax equations for the hierarchy the
linear NLS hierarchy. The first five members of the linear NLS hierarchy are
\begin{subequations}
\begin{align}
  \chi_{\tau_0} &= T_0 \chi = \begin{pmatrix}  -i & 0\\0 &
i\end{pmatrix}\chi \label{eqn:linearHierarchy0},\\
\chi_{\tau_1} &= T_1 \chi =\begin{pmatrix}
                            -i\zeta & \Psi\\ -\Psi\cc & i\zeta
                        \end{pmatrix}\chi \label{eqn:linearHierarchy1},\\
\chi_{\tau_2} &= T_2\chi = \begin{pmatrix}
  -i\zeta^2 + i\abs{\Psi}^2/2 & \zeta\Psi + i\Psi_x/2\\
  -\zeta\Psi\cc + i\Psi_x\cc/2 & i\zeta^2 - i\abs{\Psi}^2/2
              \end{pmatrix}\chi\label{eqn:linearHierarchy2},\\
\chi_{\tau_3} &= T_3 \chi = \begin{pmatrix}
 -i \zeta^3 + i \zeta \abs{\Psi}^2/2 + i\IM\lt(\Psi \Psi_x\cc\rt)/2 &
 \zeta^2 \Psi + i \zeta \Psi_x /2- \abs{\Psi}^2 \Psi /2 - \Psi_{xx}/4\\
 -\zeta^2 \Psi\cc + i\zeta \Psi_x\cc/2 + \abs{\Psi}^2\Psi\cc/2 + \Psi_{xx}\cc/4 &
 i\zeta^3 - i\zeta \abs{\Psi}^2/2 - i \IM\lt(\Psi \Psi_x\cc\rt)/2
 \end{pmatrix}\chi\label{eqn:linearHierarchy3},\\
  \begin{split}
   \chi_{\tau_4} &= T_4 \chi = \begin{pmatrix} N_1 & N_2 \\ N_3 & -N_1
   \end{pmatrix}\chi,
 \end{split}
\end{align}
  \label{eqn:linearHierarchy}
\end{subequations}
where
\begin{subequations}
\begin{align}
    &  N_1 = -i\zeta^4 + i\zeta^2 \abs{\Psi}^2/2 + i\zeta\IM(\Psi\Psi_x\cc)/2 -
    3i\abs{\Psi}^4/8 + i\abs{\Psi_x}^2/8 -i \RE(\Psi\cc \Psi_{xx})/4,\\
    & N_2 = \zeta^3 \Psi + i\zeta^2 \Psi_x/2 -\zeta \lt(\abs{\Psi}^2 \Psi/2 +
   \Psi_{xx}/4\rt) -3 i \abs{\Psi}\Psi_x/4 - i\Psi_{xxx}/8,\\
    & N_3 = -\zeta^3 \Psi\cc + i\zeta^2 \Psi_x\cc/2 + \zeta \lt(\abs{\Psi}^2
   \Psi\cc/2 + \Psi_{xx}\cc/4\rt) - 3 i \abs{\Psi}\Psi_x\cc/4 -
   i\Psi_{xxx}\cc/8,
 \end{align}
\end{subequations}
and $\zeta$ is referred to as the Lax parameter.

Each of the $H_n$ are mutually in involution under the canonical Poisson
bracket \eqref{eqn:J} \cite{SolitonsAndIST}. As a result, the flows of all members of the NLS
hierarchy commute and any linear combination of the conserved quantities gives
rise to a dynamical equation whose flow commutes with all equations of the
hierarchy. We define a family of evolution equations in $t_n$ by
\begin{align}
  \frac{\partial}{\partial_{t_n}} \begin{pmatrix} r \\ \ell \end{pmatrix} = J \hat H_n' (r,\ell)
= J \lt(H_n' + \sum_{j=0}^{n-1} c_{n,j} H_j'\rt), \quad n\geq
0\label{eqn:nthNLS},
\end{align}
where the $c_{n,j}$ are constants. We loosely call \eqref{eqn:nthNLS} the
``$n$-th NLS equation.'' Similarly we define the $n$-th linear NLS equation to
be
\begin{align}
  \chi_{t_n} = \hat T_n \chi = \lt(T_n + \sum_{j=0}^{n-1} c_{n,j} T_j\rt)\chi.
  \label{eqn:nthLinearNLS}
\end{align}
The $n$-th NLS equation is obtained by enforcing the compatibility of
$\chi_{\tau_1}$ with $\chi_{t_n}$.

The second NLS equation \eqref{eqn:stationaryPDE} is obtained from
\eqref{eqn:tau0dynamics} and \eqref{eqn:tau2dynamics} and has Hamiltonian \mbox{$\hat
H~=~\hat H_2~=~H_2 - \omega H_0 / 2$.} With $\psi(x,t) = (r(x,t) + i
\ell(x,t))/\sqrt{2}$, \eqref{eqn:stationaryPDE} is
\begin{align}
  \partial_t \begin{pmatrix} r \\ \ell \end{pmatrix} &=
    \begin{pmatrix} - \omega
    \ell - \ell(r^2 + \ell^2)/2 -  \ell_{xx}/2 \\ \omega r + r(\ell^2 + r^2)/2
    +  r_{xx}/2\end{pmatrix} =
  J \hat H'(r,\ell).
\label{eqn:NLSHamiltonianDynamics}
\end{align}
The associated linear NLS equation is $\hat T_2 = T_2 - \omega T_0/2$. Defining
$\tau_1 = x$ and $t_2=t$, \eqref{eqn:NLSHamiltonianDynamics} (or
equivalently~\eqref{eqn:stationaryPDE}) is obtained via the compatibility
condition of the two matrix equations
\begin{subequations}
\begin{align}
\chi_x  &= \chi_{\tau_1} = T_1 \chi\label{eqn:chi_x_general},\\
  \chi_t &= \chi_{\tau_2} - \frac\omega2 \chi_{\tau_0} = \lt(T_2 -
  \frac{\omega}{2}T_0\rt) \chi \label{eqn:chi_t_general}.
\end{align}\label{eqn:LaxPairRepr_general}
\end{subequations}
}

\section{Proofs of some lemmas}\label{Appendix:proofs}
In this appendix we present proofs for lemmas used in Section
  \ref{section:EllipticSpectralStability}.

\begin{proof}[\textbf{Proof of Lemma~\ref{lem:M0dn}}]
  Formulae for Weierstrass Elliptic functions used here and in what follows
  are in \cite[Chapter 23]{NIST:DLMF} \cite{byrdFriedman, whittakerWatson} . We use the
  notation $\eta_k=\zeta_w(\omega_k)$, $k=1,2,3.$

  For the dn solutions, $b=1$ and the four roots of $\Omega(\zeta)$ are
  \begin{align}
    \zeta_1 &= \frac i2 (1-\sqrt{1-k^2}), \quad \zeta_2 = \frac i2 (1 +
    \sqrt{1-k^2}), \quad \zeta_3 = -\zeta_2, \quad \zeta_4 = -\zeta_1.
  \end{align}
  Since $c=\theta = 0$,
  \begin{align}
    M(\zeta_j) &= -2i I(\zeta_j)\mod 2\pi.
  \end{align}
  The quantities $\alpha(\zeta_j), ~\wp'(\alpha(\zeta_j)),$ and $
  \zeta_w(\alpha(\zeta_j))$ are needed for the computation of
  $I(\zeta_j)$.
  Using \eqref{eqn:e0e1e2e3} and \eqref{eqn:alpha},
  \begin{subequations}
  \begin{align}
    \alpha(\zeta_2) = \alpha(\zeta_3) &= \wp^{-1}\lt( e_1 +
    \sqrt{(e_1-e_3)(e_1-e_2)}\rt) = \sigma_1 \frac{\omega_1}{2} + 2n\omega_1 +
    2m\omega_3,\\
    \alpha(\zeta_1) = \alpha(\zeta_4) &= \wp^{-1}\lt(e_3 + \frac{(e_3 -
    e_1)(e_3-e_2)}{\wp(\omega_1/2)-e_3}\rt) =\sigma_1\lt( \frac{\omega_1}{2} -
    \omega_3\rt) + 2n\omega_1 + 2m\omega_3,
  \end{align}
\end{subequations}
  where $n, m \in \Z$ and $\sigma_1$ is either $\pm 1$. From \cite[equation
  1033.04]{byrdFriedman} and the addition formula for $\wp'(z)$,
  \begin{subequations}
  \begin{align}
    \wp'(\omega_1/2) &= -2\lt((e_1-e_3)\sqrt{e_1-e_2} +
    (e_1-e_2)\sqrt{e_1-e_3}\rt) =
    -2(1-k^2+\sqrt{1-k^2})\label{eqn:wp'(omega1/2)},\\
    \wp'(\omega_1/2 - \omega_3) &= 2\lt((e_1-e_3) \sqrt{e_1-e_2} -
    (e_1-e_2)\sqrt{e_1-e_3}\rt) = -2(1-k^2 - \sqrt{1-k^2}).
  \end{align}
  \end{subequations}
  Using the addition formula for $\zeta_w(z)$,
  \begin{align}
    \zeta_w(\omega_1/2) = \zeta_w(-\omega_1/2 + \omega_1) =
    -\zeta_w(\omega_1/2) + \eta_1 - \frac12
    \frac{\wp'(\omega_1/2)}{\wp(\omega_1/2)-e_1},
  \end{align}
  so that
  \begin{subequations}
  \begin{align}
    \zeta_w(\omega_1/2) &= \frac12 \lt(\eta_1 -\frac12
    \frac{\wp'(\omega_1/2)}{\wp(\omega_1/2) - e_1}\rt)  =
    \frac12\lt(\eta_1 + 1 + \sqrt{1-k^2}\rt),\\
    \zeta_w(\omega_1/2 - \omega_3) &= \zeta_w(\omega_1/2) - \eta_3
    + \frac12\frac{\wp'(\omega_1/2)}{\wp(\omega_1/2) - e_3} =
    \frac12\lt(\eta_1 + 1  - \sqrt{1-k^2}\rt).
  \end{align}
  \end{subequations}
  Using the parity and periodicity of $\wp'(z)$, and the quasi-periodicity of
  $\zeta_w(z)$ we arrive at
  \begin{subequations}
  \begin{align}
    \wp'(\alpha(\zeta_2)) &= \sigma_1\wp'(\omega_1/2), \\
    \wp'(\alpha(\zeta_1)) &= \sigma_1\wp'(\omega_1/2 - \omega_3),\\
    \zeta_w(\alpha(\zeta_2)) &= \sigma_1\zeta_w(\omega_1/2) +
    2n\eta_1 + 2m\eta_3,\\
    \zeta_w(\alpha(\zeta_1)) &= \sigma_1\zeta_w(\omega_1/2-\omega_3) +
    2n\eta_1 + 2m\eta_3.
  \end{align}
  \end{subequations}
  Substituting the above quantities into \eqref{eqn:Izeta} and using
  $\omega_3 \eta_1 -\omega_1 \eta_3 =i\pi/2$ results in
  $I(\zeta_j) = 0 \mod 2\pi$ for $j=1,2,3,4.$
\end{proof}

\begin{proof}[\textbf{Proof of Lemma~\ref{lem:dnRoots}}]
    Let $\zeta = i\xi$ with $\xi \in \R$. Then
    \begin{align}
      \Omega^2(\zeta) &= -\xi^4 - \frac{1}{2}(k^2-2) \xi^2 - \frac{k^4}{16} \in \R,
    \end{align}
    so $\Omega(\zeta)$ is either real or imaginary. Then $\Omega(\zeta)\in i\R$
    if and only if
    \begin{align}
        \xi^2 \geq \frac{1}{4}(2-k^2) + \frac12\sqrt{1-k^2} \quad \text{ or
        }\quad \xi^2 \leq \frac{1}{4}(2-k^2) - \frac12\sqrt{1-k^2},
    \end{align}
    which is equivalent to
    \begin{align}
      \abs{\xi}\leq \IM(\zeta_1) \quad \text{ or }\quad \abs{\xi}\geq
      \IM(\zeta_2).\label{eqn:dnSpecialZetaValues}
    \end{align}
\end{proof}

\begin{proof}[\textbf{Proof of Lemma~\ref{lem:cnMuOnImaginaryAxis}}]
    First, this holds for $\zeta = 0$, since
    \begin{align}
        \alpha(0) &= \wp^{-1} (e_3) = \omega_3 + 2n\omega_1 + 2m\omega_2,
    \end{align}
    where $m,n\in \Z$, so that
    \begin{align}
        I(0) &= 2\Gamma(\omega_1(\eta_3 + 2n\eta_1 + 2m\eta_3) - \eta_1(\omega_3
        + 2n\eta_1 +2m\eta_3) = -\Gamma p\pi i,
    \end{align}
    for $p\in \Z$ \cite[Chapter 23]{NIST:DLMF}. Then
    \begin{align}
      M(\zeta) &= -2i(-\Gamma p\pi i) + \pi = \pi\mod{2\pi}.
    \end{align}
    Since the curves for $\RE(I)=\text{constant}$, given by
    \eqref{eqn:LaxSpectrumTangents}, and for $\IM(I)=\text{constant}$ are
    orthogonal, the vector field for $\IM(I)=\text{constant}$ is vertical on the
    imaginary axis as $\Omega(\zeta)\in i\R$ there
    \eqref{eqn:LaxSpectrumTangents}. Since $M(\zeta) =\pi \mod 2\pi$ at $\zeta =
    0$ and is constant on the imaginary axis, it follows that $M(\zeta) = \pi
    \mod{2\pi}$ on the imaginary axis.
\end{proof}

\begin{proof}[\textbf{Proof of Lemma~\ref{lem:M0cn}}]
  For the cn solutions, $b=k^2$ and the four roots of $\Omega(\zeta)$ are
\begin{align}
 \zeta_1 = \frac12 \lt(\sqrt{1-k^2} + i k\rt), \quad \zeta_2 = \frac12
  \lt(-\sqrt{1-k^2} + i k\rt), \quad \zeta_3 = -\zeta_1, \quad \zeta_4 = -\zeta_2.
\end{align}
  Here, $c=0$ and $\theta(T(k)) = \pi$ give
  \begin{align}
    M(\zeta_j) = -2i I(\zeta_j)  + \pi \mod 2\pi.
  \end{align}
  The quantities $\alpha(\zeta_j), ~\wp'(\alpha(\zeta_j)),$ and
  $\zeta_w(\alpha(\zeta_j))$ are needed. Using
  \eqref{eqn:e0e1e2e3} and \eqref{eqn:alpha},
  \begin{subequations}
  \begin{align}
    \alpha(\zeta_1) &= \alpha(\zeta_3) = \wp^{-1}\lt( e_2 -
    i\sqrt{(e_1-e_2)(e_2-e_3)}\rt) = \sigma_1\frac{\omega_2}{2} + 2n\omega_1 +
    2m\omega_3,\\
    \alpha(\zeta_2) &= \alpha(\zeta_3) = \wp^{-1}\lt( e_3 +
    \frac{(e_3-e_1)(e_3-e_2)}{e_2-e_3-i\sqrt{(e_1-e_2)(e_2-e_3)}}\rt)
    =\sigma_1
    \lt(\frac{\omega_2}{2}-\omega_3\rt) + 2n\omega_1 + 2m\omega_3.
  \end{align}
  \end{subequations}
  From \cite[equation 1033.04]{byrdFriedman} and the addition formula for
  $\wp'(z)$,
  \begin{subequations}
  \begin{align}
    \begin{split}
      \wp'(\omega_2/2) &= -\wp'(\omega_1/2 + \omega_3/2) =
    -2\lt((e_1-e_2)\sqrt{e_2-e_3} + i(e_2-e_3)\sqrt{e_1-e_2}\rt) \\
      &=-2k(1-k^2
    + ik \sqrt{1-k^2}),
    \end{split}\\
    \wp'(\omega_2/2 - \omega_3) &= -2k(1-k^2 - ik\sqrt{1-k^2}).
  \end{align}
  \end{subequations}
  $\zeta_w(\omega_2/2)$ is found in a similar manner to
  $\zeta_w(\omega_1/2)$ (Lemma~\ref{lem:M0dn}) to be
  \begin{align}
    \zeta_w(\omega_2/2) &= \frac12 \lt(\zeta_w(\omega_2) - k + i
    \sqrt{1-k^2}\rt),
  \end{align}
  from which
  \begin{align}
    \zeta_w(\omega_2/2- \omega_3) &= \frac12 \lt(\zeta_w(\omega_2) - k - i
    \sqrt{1-k^2}\rt) - \eta_3.
  \end{align}
  Using the parity and periodicity of $\wp'(z)$, and the quasi-periodicity of
  $\zeta_w(z)$ we arrive at
  \begin{subequations}
  \begin{align}
    \wp'(\alpha(\zeta_1)) &= \sigma_1\wp'(\omega_2/2), \\
    \wp'(\alpha(\zeta_2)) &= \sigma_1\wp'(\omega_2/2 - \omega_3),\\
    \zeta_w(\alpha(\zeta_1)) &= \sigma_1\zeta_w(\omega_2/2) +
    2n\eta_1 + 2m\eta_3,\\
    \zeta_w(\alpha(\zeta_2)) &= \sigma_1\zeta_w(\omega_2/2-\omega_3) +
    2n\eta_1 + 2m\eta_3,
  \end{align}
  \end{subequations}
  where $\sigma_1$ is either $\pm1$.  Substituting the above quantities
  into \eqref{eqn:Izeta} results in
  \begin{subequations}
  \begin{align}
    I(\zeta_1) &= I(\zeta_3) = \sigma_1 \frac{i\pi}{2} + 2\pi m,\\
    I(\zeta_2) &= I(\zeta_4) = 3\sigma_1 \frac{i\pi}{2} + 2\pi m.
  \end{align}
  \end{subequations}
  Therefore
  \begin{subequations}
  \begin{align}
    M(\zeta_1) &= M(\zeta_3) = \sigma_1 \pi + 4\pi m + \pi = 0 \mod 2\pi, \\
    M(\zeta_2) &= M(\zeta_4) = 3\sigma_1 \pi + 4\pi m + \pi = 0 \mod 2\pi.
  \end{align}
  \end{subequations}

\end{proof}

\begin{proof}[\textbf{Proof of Lemma~\ref{lem:noCrossingscn}}]
  Without loss of generality, let $\zeta = \zeta_r+ i \zeta_i$ with $\zeta_r<0$.
  The computation is the same for $\zeta_r>0$ by symmetry of the Lax spectrum.
  Consider the curve in the left half plane defined by $\IM(\Omega^2)=0,~
  \RE(\Omega^2<0)$ \eqref{eqn:ReImOmega}. For $\zeta_i\neq 0$, this curve is
  defined by
  \begin{align} \zeta_i^2 =Q(\zeta_r)= \zeta_r^2 -
    \frac14(1-2k^2)\label{eqn:cnParamCurve}, \qquad \text{for}\qquad  \zeta_r
    \in [-\sqrt{1-k^2}/2,0).
  \end{align}
  The above parameterization is valid only when $k\geq 1/\sqrt{2}$. For
  $k<1/\sqrt{2}$, $\zeta_r$ is restricted to a smaller range so that
  $\zeta_i\in \R$.

  Let $G(\zeta_r) = I(\zeta_r + i\zeta_i(\zeta_r))$ where $\zeta_i(\zeta_r)$ is
  defined with either sign of the square root in \eqref{eqn:cnParamCurve}. If we
  can show that $\RE(G(\zeta_r))>0$ for $\zeta_r \in (-\sqrt{1-k^2}/2,0),$ then
  we have shown that $\RE(I(\zeta))\neq 0$ when $\Omega(\zeta)\in
  i\R\setminus\{0\}$. We compute
  \begin{align}
    4\Omega_i \sqrt{Q(\zeta_r)} \DD{\RE(G)}{\zeta_r} = \zeta_r
    P_2(\zeta_r),
  \end{align}
  where
  \begin{align} P_2(\zeta_r) &:= -16 K(k)
    \zeta_r^2 + 4(E(k) - k^2K(k)),
  \end{align}
  and
  \begin{align}
    \Omega_i &:= \frac12\sqrt{(4\zeta_r^2+k^2-1)(k^2+4\zeta_r^2)},
  \end{align}
  the imaginary part of $\Omega$. Here we take $\Omega_i\sqrt{Q(\zeta_r)}>0$
  without loss of generality ($\Omega_i\sqrt{Q(\zeta_r)}<0$ corresponds to a
  different sign for $\zeta_i$ or $\Omega_i$ or both and is a nontrivial but
  straightforward extension of what follows). $P_2(\zeta_r)$ and $\d \RE(G)/\d
  \zeta_r$ have opposite signs since $\zeta_r<0$. Since
  $\RE(G(-\sqrt{1-k^2}/2))=0$ and $P_2(-\sqrt{1-k^2}/2)<0$, it suffices to show
  that $\d \RE(G)/\d\zeta_r>0$. Indeed, if this is true, then $\RE(G)>0$ when
  $\Omega(\zeta)\in i\R\setminus\{0\}$. There are three cases to consider.

  \begin{enumerate}[font={\bfseries}]
  \item \textbf{Case 1: $\bm{P_2(\zeta_r)}$ has no negative roots or one root at
    $\bm{\zeta_r=0}$.}

      If $P_2(\zeta_r)$ is always negative, then we are done since $\RE(G)$ is
      increasing on $(-\sqrt{1-k^2}/2,0)$. This is the case if $E(k)
      -k^2K(k)\leq 0$, which is true for $k \geq \kappa$ where $\kappa \approx
      0.799879$.

    \item \textbf{Case 2: $\bm{P_2(\zeta_r)}$ has one negative root and
      $\bm{Q(\zeta_r)}$ has no negative roots or a double negative root.}

      Let $\hat\zeta$ be such that $P_2(\hat\zeta)=0$. Then $\RE(G)$ is
      increasing on $(-\sqrt{1-k^2}/2,\hat\zeta)$ and decreasing on
      $(\hat\zeta,0)$. This can only occur for $1/\sqrt{2}< k<\kappa$. Since
      \begin{align}
        \DD{\RE(I)}{\zeta_i} = -\IM\lt(\DD{I}{\zeta}\rt),
      \end{align}
      $\d\RE(I)/\d\zeta_i>0$ for $\zeta = i\zeta_i$.  Since $\RE(G(\zeta_r))$
      must be minimized in the limit $\zeta_r\to 0^-$, it follows from
      continuity and the fact that $\RE(G)>0$ on the imaginary axis that
      $\RE(G)>0$ for $\zeta_r\in(-\sqrt{1-k^2}/2,0)$.

    \item \textbf{Case 3: $\bm{P_2(\zeta_r)}$ and $\bm{Q(\zeta_r)}$ both have
    one negative root.}

    Let $\hat \zeta$ be as above and let $\hat \xi$ be the negative root of $Q$.
    Then $\RE(G)$ is increasing on $(-\sqrt{1-k^2}/2,\hat\zeta)$ and decreasing
    on $(\hat\zeta,\hat\xi)$ at which $\RE(G(\hat\xi))=0$. Since the
    parameterization is not valid on $(\hat\xi,0)$, $\RE(G)>0$ for
    $\zeta_r\in(-\sqrt{1-k^2}/2,\hat\xi)$ which are all allowed $\zeta$ values
    for which $\zeta\not\in\R\cup i\R$.
  \end{enumerate}
  It follows that $\RE(G)>0$ when $\Omega(\zeta)\in i\R\setminus\{0\}$.
\end{proof}

\begin{proof}[\textbf{Proof of Lemma~\ref{lem:0atroots}}]
  We establish that $M_j = 0 \mod 2\pi$ on the boundary of the parameter space
  by establishing this fact for the Stokes waves ($k=0$) and using Lemmas~
  \ref{lem:M0dn} and \ref{lem:M0cn}.

  Setting $\lam = 0$ in \eqref{eqn:StokesEvals} shows that $\mu = -2n$. Since
  $T(k) = \pi$ for Stokes waves, $T(k)\mu = 0 \mod 2\pi$ whenever
  $\Omega = 0$. Next, we compute directly that $\partial_b M_j =0$ for the nontrivial-phase
  solutions. In what follows we use that
  \begin{equation}
    \zeta_j = \frac12 \lt(\sigma_1 \sqrt{1-b} + i \sigma_2 \lt(\sqrt{b} -
    \sigma_1 \sqrt{b-k^2}\rt)\rt),
  \end{equation}
  so that $\zeta_1,~\zeta_2,~\zeta_3,$ and $\zeta_4$ correspond to
  $(\sigma_1,\sigma_2) = (1,1),~(-1,1),~(-1,-1),~(1,-1)$ respectively. We define
  \begin{align}
    e_{p,j} &= \wp(\alpha_j) - e_0 = -2\zeta_j^2 + \omega,
  \end{align}
  where $e_0$ is defined in \eqref{eqn:alpha0}, and use
  \begin{subequations}
  \begin{align}
    \pd{\zeta_j}{b} &= \frac{e_{p,j}}{4c},\\
    \pd{\alpha_0}{b} &= \frac{1}{\wp'(\alpha_0)} = -\frac{i}{2c},\\
    \pd{\alpha_j}{b} &= -\frac{c + 2\zeta_j e_{p,j}}{2c\wp'(\alpha_j)} =
    \frac{4\zeta_j^3 - 2\zeta_j\omega -c}{2c\wp'(\alpha_j)}.
  \end{align}
  \end{subequations}
  From the definition of $\Gamma$ and \eqref{eqn:wpalphasquared},
  \begin{align}
    \frac{(4\zeta_j^3 - 2\zeta_j\omega - c)\Gamma}{\wp'(\alpha_j)} &= \frac{2i (4\zeta_j^3 - 2\zeta_j
    \omega-c)^2}{\wp'(\alpha_j)^2} = -\frac{i}{2}.
  \end{align}
  Using the above calculations, the expression \eqref{eqn:IzetaGamma}, and
  \eqref{eqn:muTk} with $\theta(T(k))$ defined in \eqref{eqn:thetaW}, we compute
\begin{align}
  \begin{split}
  \pd{}{b} M_j &= -2i \lt( \frac{\partial I(\zeta_j)}{\partial b} +
  \pd{}{b}(\alpha_0 \eta_1 - \omega_1 \zeta_w(\alpha_0)) \rt)\\
  &= -2i \lt( -\frac{i}{2c} e_{p,j}\omega_1 - \frac{(4\zeta_j^3 - 2\zeta_j\omega
  - c)\Gamma}{c\wp'(\alpha_j)}(\eta_1 + \omega_1(e_{p,j} + e_0)) -
  \frac{i}{2c}(\eta_1 + \omega_1e_0)\rt) = 0
  \end{split}
\end{align}
  by direct computation. Since $M_j = 0\mod 2\pi$ along the boundaries of the
  parameter region (Figure \ref{fig:parameterSpace}) and $\partial_b M_j = 0$ on
  the interior of the parameter space, it follows that $M_j$ is constant $(0
  \mod 2\pi)$ in the whole parameter space.

  \end{proof}

\section{Necessity of stability condition (\ref{eqn:mainStabilityBound}), proof
of Lemma~\ref{lem:NoDoublePointsSpine}, and proof of Theorem
\ref{thm:imaginaryNTPStabilityResult}} \label{Appendix}

In this appendix we present progress made towards showing that
\eqref{eqn:mainStabilityBound} is not only a sufficient but also a necessary
condition for \te{spectral} stability. We introduce a theorem which shows that
$\abs{\RE(\lam)}>0$ on the complex bands of the spectrum. For part of the
parameter space, the proof of this theorem is complete.  For a different part of
parameter space, the proof relies upon a numerical check over a bounded region
of parameter space (see Figure \ref{fig:whenToDoNumericalCheck}). The numerical
check consists of finding a root of a degree-six polynomial and evaluating
Weierstrass elliptic functions at that root. Numerical checks of this kind are
not uncommon (see \eg the non-degeneracy condition for focusing NLS in
\cite{gallay_haragus_OrbitalStability}). We use similar arguments as used in
Lemma~\ref{lem:noCrossings} to prove Theorem
\ref{thm:imaginaryNTPStabilityResult} and Lemma~\ref{lem:NoDoublePointsSpine}.

\begin{lemma}\label{lem:noCrossings}
Let $c\neq 0$ and $\zeta\in (\C^-\cap\sigma_L)\setminus(\R\cup i\R \cup
\{\zeta_2,\zeta_3\})$ where $\C^-$ is the left half plane. Then
$\Omega(\zeta)\notin i\R$.
\end{lemma}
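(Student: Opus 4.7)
The plan is to parallel the strategy used to prove Lemma~\ref{lem:noCrossingscn}. In order to rule out $\Omega(\zeta)\in i\R$ for $\zeta\in\sigma_L\cap\C^-$ off the axes and away from $\zeta_2,\zeta_3$, the first step is to parametrize the locus on which $\Omega(\zeta)\in i\R$. From the imaginary part of \eqref{eqn:ReImOmega}, with $\zeta_i\neq 0$ this locus is the curve
\[
\zeta_i^2 = Q(\zeta_r) := \zeta_r^2 - \frac{\omega}{2} - \frac{c}{4\zeta_r},
\]
restricted to $\zeta_r<0$ with $Q(\zeta_r)\geq 0$ and $\RE(\Omega^2)\leq 0$. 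Unlike the cn case where $c=0$ makes $Q$ even, the presence of $c\neq 0$ introduces asymmetry and a pole at $\zeta_r=0$, so the admissible $\zeta_r$-interval is more delicate to describe; its finite endpoints are precisely the points where the curve meets the real axis, including $\RE(\zeta_2)$ and $\RE(\zeta_3)$.

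Second, define $G(\zeta_r) := I(\zeta_r + i\sqrt{Q(\zeta_r)})$ with the square root branch appropriate to the upper portion of $\C^-$; then $\zeta\in\sigma_L$ along this curve is equivalent to $\RE(G(\zeta_r))=0$. By Lemma~\ref{lem:0atroots}, $M(\zeta_2)=0\mod 2\pi$ gives $\RE(I(\zeta_2))=0$, so $\RE(G)$ vanishes at the endpoint $\zeta_r=\RE(\zeta_2)$; a symmetric statement holds at $\zeta_3$ on the lower branch. The goal is therefore to show that $\RE(G)$ does not vanish again on the interior of the parametrizing interval.

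Third, differentiate using the chain rule together with \eqref{eqn:IDerivative} and \eqref{eqn:ReImDI}. After clearing the common factor $\Omega(\zeta)$ and multiplying through by $\sqrt{Q(\zeta_r)}$, the quantity $\d\RE(G)/\d\zeta_r$ reduces to a rational function whose numerator is a polynomial in $\zeta_r$ with coefficients depending algebraically on $(k,b,c)$ and linearly on $E(k),K(k)$. As in the cn proof, we expect the numerator to factor into a manifestly-signed factor (fixed by $\zeta_r<0$) times a remaining polynomial $P(\zeta_r)$ whose sign structure governs the monotonicity of $\RE(G)$.

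Finally, carry out a case analysis analogous to Cases~1--3 of Lemma~\ref{lem:noCrossingscn}, organized by how many roots of $P(\zeta_r)$ fall inside the admissible interval. In the favorable regime $P$ has constant sign on the interval, so monotonicity of $\RE(G)$ together with $\RE(G)=0$ at the endpoint yields strict positivity on the interior. In the delicate regime $P$ changes sign, and we fall back on continuity together with the already-established fact that $\RE(I)>0$ on the pure imaginary $\zeta$-axis to exclude additional zeros. The main obstacle is that for $c\neq 0$ the factorization of $P$ loses the clean quadratic structure it had in the cn case, and its discriminant no longer has a uniform sign across the $(k,b)$-parameter region. I anticipate that the analytic argument closes the proof on an open subregion of parameter space, while the complementary bounded subregion will reduce to checking the sign of an explicit polynomial at a root of a degree-six auxiliary polynomial, consistent with the numerical verification that the authors acknowledge at the start of this appendix.
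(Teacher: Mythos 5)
Your overall strategy coincides with the paper's: parametrize the locus $\IM(\Omega^2)=0$ in the left half plane by \eqref{eqn:parameterizedOmegaLT0}, restrict $I$ to it via $G(\zeta_r)=I(\zeta_r+i\zeta_i(\zeta_r))$, turn $\d\RE(G)/\d\zeta_r$ into a sign statement about an explicit polynomial (the paper's degree-six $P_6$, analyzed with Descartes' rule), and close a bounded residual parameter region with a numerical sign check at a root of that polynomial. Up to that point the proposal is on track, including the anticipated numerical step.

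There is, however, a genuine gap in the case analysis. For $c\neq 0$ the admissible $\zeta_r$-interval can be interrupted: the cubic $Q(\zeta_r)=4\zeta_r^3-2\omega\zeta_r-c$ of \eqref{eqn:Qzr} may have two distinct negative roots, so the parametrized curve exits and re-enters the set where $\zeta_i^2\geq 0$ (the paper's ``forbidden region''). In that configuration (Case 4 of the paper's proof) monotonicity plus endpoint information, even supplemented by the numerical check, only yields that $\RE(G)$ has \emph{at most one} interior zero; it does not exclude \emph{exactly one}, which would be an interior crossing of the imaginary axis by $\Omega(\zeta)$ and would defeat the lemma. The paper eliminates this case by a separate parity argument, Lemma~\ref{lem:numOfIntersections}: from the tangent direction of $\sigma_{\cL}$ at the origin and its direction where the band meets $-\zeta_c$, the band must cross the imaginary axis an even number of times, so ``at most one'' forces ``none.'' Nothing in your outline supplies this ingredient. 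Relatedly, your fallback for the delicate regime --- continuity together with ``$\RE(I)>0$ on the imaginary $\zeta$-axis'' --- is borrowed from Case 2 of Lemma~\ref{lem:noCrossingscn} and does not transfer: for $c\neq 0$ the curve \eqref{eqn:parameterizedOmegaLT0} has a pole at $\zeta_r=0$ and never reaches the imaginary axis, and no such sign of $\RE(I)$ there has been established for NTP solutions. The paper's substitute is the boundary behavior $\RE(G(\zeta_r))\to+\infty$ as $\zeta_r\to 0^-$, which follows from $P_6(0)=-c^2K(k)<0$; without that limit and without the parity lemma, the argument as you describe it cannot be closed.
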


\begin{proof}
    Let $c\neq0$ and $\zeta = \zeta_r + i\zeta_i$ with $\zeta_r<0$. Consider the
    curve in the left half plane defined by $\IM(\Omega^2) = 0$. For
    $\zeta_i\neq0,$ this curve is defined by
    \begin{align}
      \zeta_i^2 = \zeta_r^2 - \frac{\omega}2 -
      \frac{c}{4\zeta_r}\label{eqn:parameterizedOmegaLT0}.
    \end{align}
    The condition $\RE(\Omega^2)\leq 0$ implies $\abs{\zeta_r}\leq
    \sqrt{1-b}/2$ with equality attained at the roots of $\Omega^2$. Let
    \begin{align}
      Q(\zeta_r) := 4\zeta_r^3 - 2\omega\zeta_r - c.\label{eqn:Qzr}
    \end{align}
    We have that $\zeta_i\in \R$ only if $Q(\zeta_r) \leq 0$ and $Q(\zeta_r)$
    has two roots with negative real part. If both roots are complex or there is
    a double root, then the parameterization \eqref{eqn:parameterizedOmegaLT0}
    is valid for all $-\sqrt{1-b}/2\leq \zeta_r<0$. This is the case if the
    discriminant of $Q$ is nonpositive, which is true when
    \begin{align}
      b \geq \begin{cases}
        k^2, & k>1/\sqrt{2},\\
        F(k), & k\leq 1/\sqrt{2},
      \end{cases}\label{eqn:Qdiscriminant}
    \end{align}
    with
    \begin{align}
      F(k) := \frac{(1+k^2)^3}{9(1-k^2+k^4)}.\label{eqn:bltFk}
    \end{align}
    It is interesting to note that the condition $b<F(k)$ is the same condition
    as \cite[equation (85)]{deconinck2017stability} which determines when the
    imaginary $\Omega$ axis is quadruple covered by the map $\Omega(\zeta)$.

    Define $G(\zeta_r) = I(\zeta_r + i\zeta_i(\zeta_r)),$ where
    $\zeta_i(\zeta_r)$ is defined with either sign of the square root in
    \eqref{eqn:parameterizedOmegaLT0}. The goal is to show that $\RE G(\zeta_r)
    = 0$ only when $\zeta_i=0$ or $\zeta_r = -\sqrt{1-b}/2$, which corresponds
    to one of the roots of $\Omega^2$. Along the solutions of
    \eqref{eqn:parameterizedOmegaLT0},
    \begin{align}\label{eqn:derivativeRelation}
      \Omega_i \zeta_r\sqrt{Q(\zeta_r)}\DD{\RE G}{\zeta_r} = P_6(\zeta_r),
    \end{align}
    where
    \begin{align}
      \Omega_i = \pm \frac{1}{4\abs{\zeta_r}} \sqrt{(4\zeta_r^2 +
      b-1)(b+4\zeta_r^2)(b-k^2+4\zeta_r^2)},
    \end{align}
    the imaginary part of $\Omega$ (and $\Omega = \Omega_i$
    because of the parameterization). The polynomial $P_6$ is given by
    \begin{align}
      P_6(x) = -64 K(k) x^6 + 16(E(k) + (k^2-2b)K(k))x^4 + 8cK(k)x^3 + 2c
      (E(k) + (b-1)K(k)) x - c^2K(k).
    \end{align}
    We let $\Omega_i\sqrt{Q(\zeta_r)}>0$, without loss of generality.
    ($\Omega_i\sqrt{Q(\zeta_r)}<0$ corresponds to a different sign for $\zeta_i$
    or $\Omega_i$ or both and is a nontrivial but straightforward extension of
    what follows). Therefore, $P_6(\zeta_r)$ has the opposite sign of
    $\d\RE(G)/\d\zeta_r$ and $\RE(G(\zeta_r))\to+\infty$ as $\zeta_r\to
    0^-$ since $\Omega_i\zeta_r\sqrt{Q(\zeta_r)}\to 0^-$ and $P_6\to-c^2K(k)<0$.
    Since $\zeta_r = -\sqrt{1-b}/2$ corresponds to a root of $\Omega$ and the
    roots of $\Omega$ are in the Lax spectrum, $\RE G(-\sqrt{1-b}/2) = 0$.  We
    wish to show that $\d\RE(G)/\d\zeta_r\geq 0$, which guarantees that
    $\RE(G(\zeta_r))=0$ only when $\Omega(\zeta_r)=0$.

    Consider the polynomial
    \begin{align}
      \tilde P_6(x) = P_6(-x) = a_6 x^6 + a_4x^4 + a_3x^3 + a_1 x +
      a_0.\label{eqn:descartes}
    \end{align}
    It is clear that $a_6<0, a_3<0, a_0<0$ and $a_4$ changes sign depending on
    $b$ and $k$. We have
    \begin{align}
      a_1 = -2c\lt(E(k) + (b-1)K(k)\rt) \leq -2c\lt(E(k) +
      (k^2-1)K(k)\rt) = -2c\DD{K(k)}{k} \leq 0.
    \end{align}
    By Descartes' sign rule, an upper bound on the number of negative roots of
    $P_6$ is either 2 or 0, depending on the sign of $a_4$. Since
    $P_6(\zeta_r)\to -\infty$ as $\zeta_r\to -\infty$ and $P_6(0) <0$,
    $P_6(\zeta_r)$ has an even number of negative roots, either 2 or 0.

    We consider four cases.
    \begin{enumerate}[font={\bfseries}]
      \item \textbf{Case 1: $\bm{P_6(\zeta_r)}$ has no negative roots or a
      double negative root.}

        If $P_6(\zeta_r)$ has no negative roots or a double negative root, then
        $P_6(\zeta_r)\leq 0$ and $\RE(G(\zeta_r))>0$ so  $\RE(G(\zeta_r))$ is
        bounded away from 0 (see Figure \ref{fig:Case1}).

      \item \textbf{Case 2: $\bm{P_6(\zeta_r)}$ has two distinct negative roots,
        $\bm{Q(\zeta_r)}$ has no negative roots.}

        Let $\xi_1$ and $\xi_2$ be the two roots of $P_6$ with $\xi_1<\xi_2<0$
        (see Figure \ref{fig:Case2}).  Then $\RE(G)$ is increasing on
        $(-\sqrt{1-b}/2, \xi_1)$, decreasing on $(\xi_1,\xi_2)$, and increasing
        again on $(\xi_2,0)$. If $\RE(G( \xi_2))>0$, then $\RE(G)$ is bounded
        away from $0$ and we are done.  We do not know how to verify this
        condition analytically, so we check it numerically. It is found to
        always hold.

      \item \textbf{Case 3: $\bm{P_6(\zeta_r)}$ has two distinct negative roots,
        $\bm{Q(\zeta_r)}$ has a double negative root.}

        Let $\xi_1$ and $\xi_2$ be as above and let $\zeta_1$ be the negative
        double root of $Q$ (see Figure \ref{fig:Case3}). It must be the case
        that $\zeta_1>\xi_1$ since $\RE(G)$ is initially increasing and we know
        that $\RE(G)\to 0$ as $\zeta \to \zeta_1$.  However, since $\zeta_1$ is
        a double root of $Q$, it is also a root of $\RE(G)$ so it must be that
        $\zeta_1 = \xi_2$. This means that $\RE(G)$ is tangent to 0 at $\zeta =
        \zeta_1$. This corresponds to $\zeta \in \R$.
      \item \textbf{Case 4: $\bm{P_6(\zeta_r)}$ and $\bm{Q(\zeta_r)}$ have two
      distinct negative roots}

        Let $\xi_1$ and $\xi_2$ be as before and let $\zeta_1$ and $\zeta_2$ be the two
        negative roots of $Q$ with $\zeta_1<\zeta_2$. As before, it must be that
        $\xi_1$ is smaller than each of $\xi_2,~\zeta_1$, and $\zeta_2$. The
        next largest root may be either $\xi_2$ or $\zeta_1$.

        \begin{itemize}
          \item An illustration of this case is found in Figure
            \ref{fig:Case4Option1}. If $\xi_2$ is the next largest root, then
            there is a $\hat \zeta \in (\xi_1,\xi_2)$ such that
            $\RE(G(\hat\zeta))=0$. For $\zeta_r$ greater than $\xi_2$, $\RE(G)$
            increases to $0$ at $\zeta_r = \zeta_1$. For $\zeta \in
            (\zeta_1,\zeta_2)$, nothing can be said about $\RE(G)$ since the
            parameterization is not valid. For $\zeta \in (\zeta_1,0)$,
            $\RE(G)>0$ is increasing since $P_6(\zeta)<0$ in this range.  Thus
            if the ordering is $\xi_1<\xi_2<\zeta_1<\zeta_2$, there is a $\hat
            \zeta \in \sigma_L$ such that $\RE(G(\hat \zeta))=0$ and
            $\Omega(\hat \zeta) \in i\R$.

            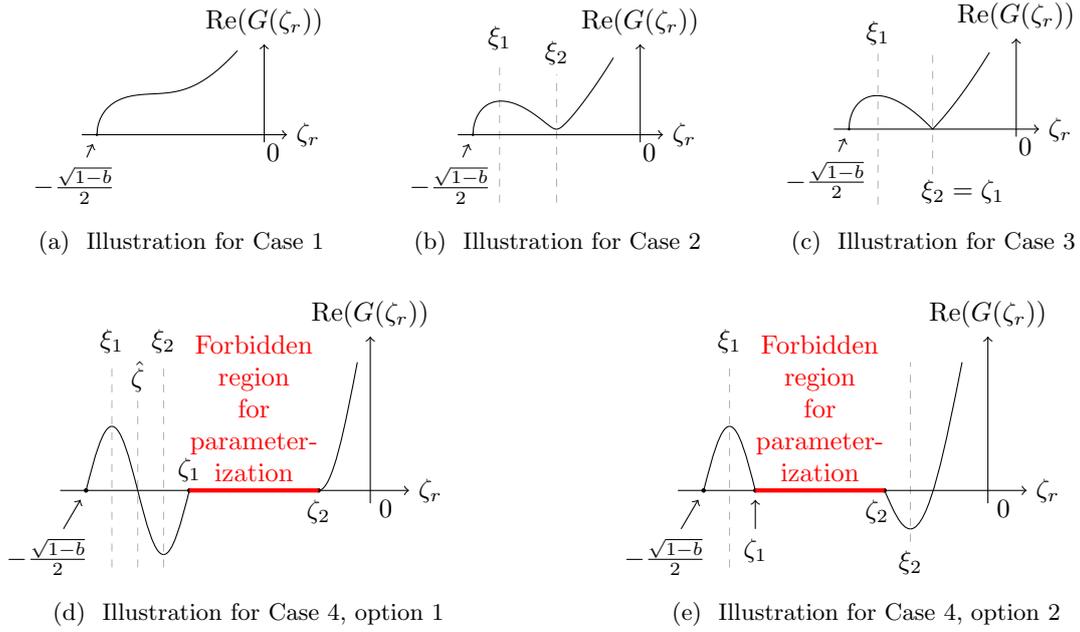
\begin{figure}
              \begin{subfigure}[b]{0.25\textwidth}
                \begin{tikzpicture}[scale=1]
                    \node[below] (origin) at (0.12,0) {0};
                    \draw[->] (-2.4,0)--(0.3,0) node[right] {$\zeta_r$};
                    \draw[->] (0,-0.1) -- (0,1.2) node[above] {$\RE(G(\zeta_r))$};
                    \node (LHSanchor) at (-2.2,0) {};
                    \fill (LHSanchor) circle [radius=0.5pt];
                    \node[below] (LHS) at (-2.5,-0.3) {$-\frac{\sqrt{1-b}}2$};
                    \draw[->] (LHS) -- (LHSanchor);
                    \draw[domain=-2.2:-0.35, smooth, variable={\x},black,
                    samples=300] plot
                    ({\x},{0.8*sqrt{((\x*\x*\x)+(4.4*\x*\x)+6.54*\x+3.7401)}});
                \end{tikzpicture}
                \caption{\label{fig:Case1} Illustration for Case 1}
              \end{subfigure}
              \qquad
               \begin{subfigure}[b]{0.25\textwidth}
                \begin{tikzpicture}[scale=1]
                    \node[below] (origin) at (0.12,0) {0};
                    \draw[->] (-2.4,0)--(0.3,0) node[right] {$\zeta_r$};
                    \draw[->] (0,-0.1) -- (0,1.2) node[above] {$\RE(G(\zeta_r))$};
                    \node (LHSanchor) at (-2.2,0) {};
                    \fill (LHSanchor) circle [radius=0.5pt];
                    \node[below] (LHS) at (-2.5,-0.3) {$-\frac{\sqrt{1-b}}2$};
                    \draw[->] (LHS) -- (LHSanchor);
                    \draw[domain=-2.2:-0.35, smooth, variable={\x},black,
                    samples=300] plot
                    ({\x},{sqrt{((\x*\x*\x)+(4.4*\x*\x)+6.0549*\x+2.6728)}});
                    \draw[dashed, opacity=0.3] (-1.85, -0.9) -- (-1.85,1);
                    \node[above] () at (-1.85, 1) {$\xi_1$};
                    \draw[dashed, opacity=0.3] (-1.1, -0.9) -- (-1.1,0.8);
                    \node[above] () at (-1.1, 0.8) {$\xi_2$};
                \end{tikzpicture}
                \caption{\label{fig:Case2} Illustration for Case 2}
              \end{subfigure}
              \qquad
               \begin{subfigure}[b]{0.25\textwidth}
                \begin{tikzpicture}[scale=1]
                    \node[below] (origin) at (0.12,0) {0};
                    \draw[->] (-2.4,0)--(0.3,0) node[right] {$\zeta_r$};
                    \draw[->] (0,-0.1) -- (0,1.2) node[above] {$\RE(G(\zeta_r))$};
                    \node (LHSanchor) at (-2.2,0) {};
                    \fill (LHSanchor) circle [radius=0.5pt];
                    \node[below] (LHS) at (-2.5,-0.3) {$-\frac{\sqrt{1-b}}2$};
                    \draw[->] (LHS) -- (LHSanchor);
                    \draw[domain=-2.2:-0.35, smooth, variable={\x},black,
                    samples=300] plot
                    ({\x},{sqrt{((\x+2.2)*(\x+1.1)*(\x+1.1))}});
                    \draw[dashed, opacity=0.3] (-1.82, -1) -- (-1.82,1);
                    \node[above] () at (-1.82, 1) {$\xi_1$};
                    \draw[dashed, opacity=0.3] (-1.1, -.6) -- (-1.1,1);
                    \node[below] () at (-0.7, -0.55) {$\xi_2 = \zeta_1$};
                \end{tikzpicture}
                \caption{\label{fig:Case3} Illustration for Case 3}
              \end{subfigure}
              \par\bigskip
              \begin{subfigure}[b]{0.4\textwidth}
                \begin{tikzpicture}[scale=1.7]
                    \node[below] (origin) at (0.12,0) {0};
                    \draw[->] (-2.4,0)--(0.3,0) node[right] {$\zeta_r$};
                    \draw[->] (0,-0.1) -- (0,1.2) node[above] {$\RE(G(\zeta_r))$};
                    \node (LHSanchor) at (-2.2,0) {};
                    \fill (LHSanchor) circle [radius=0.5pt];
                    \node[below] (LHS) at (-2.5,-0.3) {$-\frac{\sqrt{1-b}}2$};
                    \draw[->] (LHS) -- (LHSanchor);
                    \draw[black] (-2.2,0) sin (-2.0,0.5);
                    \draw[black] (-2.0,0.5) cos (-1.8,0);
                    \draw[black] (-1.8, 0) sin (-1.6, -0.5);
                    \draw[black] (-1.6, -0.5) cos (-1.4, 0);
                    \node[above] (zeta1) at (-1.4,0) {$\zeta_1$};
                    \fill (zeta1.south) circle [radius=0.5pt];
                    \draw[dashed, opacity=0.3] (-2.0, -0.6) -- (-2.0,1);
                    \node[above] () at (-2.0, 1) {$\xi_1$};
                    \draw[dashed, opacity=0.3] (-1.8, -0.6) -- (-1.8,0.7);
                    \node[above] () at (-1.8, 0.7) {$\hat\zeta$};
                    \draw[dashed, opacity=0.3] (-1.6, -0.6) -- (-1.6,1);
                    \node[above] () at (-1.6, 1) {$\xi_2$};
                    \node[below] (zeta2) at (-0.4,0) {$\zeta_2$};
                    \fill (zeta2.north) circle [radius=0.5pt];
                    \draw[ultra thick,red] (zeta1.south) -- (zeta2.north);
                    \draw[-] (-0.4,0) cos (-0.1,1);
                    \node[above, red, align=center] () at (-0.9,0) {Forbidden\\ region \\
                        for\\ parameter-\\ization};
                \end{tikzpicture}
                \caption{\label{fig:Case4Option1} Illustration for Case 4, option 1}
              \end{subfigure}
              \qquad \qquad
              \begin{subfigure}[b]{0.4\textwidth}
                \begin{tikzpicture}[scale=1.7]
                    \node[below] (origin) at (0.12,0) {0};
                    \draw[->] (-2.4,0) -- (0.3,0) node[right] {$\zeta_r$};
                    \draw[->] (0,-0.1) -- (0,1.2) node[above] {$\RE(G(\zeta_r))$};
                    \node (LHSanchor) at (-2.2,0) {};
                    \fill (LHSanchor) circle [radius=0.5pt];
                    \node[below] (LHS) at (-2.5,-0.3) {$-\frac{\sqrt{1-b}}2$};
                    \draw[->] (LHS) -- (LHSanchor);
                    \draw[black] (-2.2,0) sin (-2.0,0.5);
                    \draw[black] (-2.0, 0.5) cos (-1.8, 0);
                    \node (zeta1anchor) at (-1.8,0) {};
                    \fill (zeta1anchor) circle [radius=0.5pt];
                    \node[below] (zeta1) at (-1.8, -0.3) {$\zeta_1$};
                    \draw[->] (zeta1) -- (zeta1anchor);
                    \draw[dashed, opacity=0.3] (-2.0, -0.6) -- (-2.0,1);
                    \node[above] () at (-2.0, 1) {$\xi_1$};
                    \node[below] (zeta2) at (-0.8,0) {\vspace{1mm}\hspace{-2mm}$\zeta_2$};
                    \fill (zeta2.north) circle [radius=0.5pt];
                    \draw[ultra thick,red] (zeta1anchor.center) -- (zeta2.north);
                    \draw[-] (-0.8,0) sin (-0.6,-0.3);
                    \draw[-] (-0.6,-0.3) cos (-0.2,1);
                    \draw[dashed, opacity=0.3] (-0.6, -0.4) -- (-0.6,1);
                    \node[below] () at (-0.6, -0.4) {$\xi_2$};
                    \node[above, red, align=center] () at (-1.3,0) {Forbidden\\ region \\
                        for\\ parameter-\\ization};
                \end{tikzpicture}
                \caption{\label{fig:Case4Option2} Illustration for Case 4, option 2}
              \end{subfigure}
              \caption{Illustrations of $\zeta_r$ \vs $\RE(G(\zeta_r))$ for the
                four cases in the proof of Lemma \ref{lem:noCrossings}.}
            \end{figure}

          \item An illustration of this case is found in Figure
            \ref{fig:Case4Option2}.  If $\zeta_1$ is the next largest root,
            there are no zeros on $(-\sqrt{1-b}/2,\zeta_1)$. If there were,
            there would be another zero of $P_6$ in $(\xi_1,\zeta_1)$ (so that
            $\RE(G)$ can increase back to zero) but there is not, by assumption.
            For $\zeta_r \in (\zeta_1,\zeta_2)$, the parameterization is not
            valid. $\RE(G( \zeta_2))=0$ and is increasing if $\xi_2<\zeta_2$
            and is decreasing if $\xi_2>\zeta_2$. If $\RE(G)$ is increasing at
            $\zeta_2$, we are done.  If $\RE(G)$ is decreasing at $\zeta_2$,
            then since $\RE(G)\to \infty$ as $\zeta_r \to 0$, there must be
            another zero of $\RE(G)$ in $(\zeta_2,0)$.
        \end{itemize}

    \end{enumerate}
    In either of the two subcases of Case 4, there can be at most one $\zeta_r =
    \hat \zeta_r$ with $\RE G(\hat \zeta_r) = 0$.  However,
    Lemma~\ref{lem:numOfIntersections} below shows that there must be an even
    number of zeros of $\RE(G(\zeta_r))$ for $\zeta_r<0$. It follows that there
    must be 0 intersections and Case 4 is eliminated.  Since Case 1 and Case 3
    also do not pose any problems, we are left with verifying Case 2 only. This
    check is done numerically for some parameters, which completes the proof of
    Lemma \ref{lem:noCrossings}.
  \end{proof}

  \begin{remark}
    The numerical search required for Lemma \ref{lem:noCrossings} need not take
    place over the whole parameter space. Case 2 can only occur when
    \eqref{eqn:Qdiscriminant} holds with strict inequality ($b=F(k)$ corresponds
    to Case 3). Thus our search region covers only those $b$ values satisfying
    $b>\max(k^2,F(k))$.  The search space is shrunk further by looking only for
    those $(b,k)$ pairs satisfying $a_4>0$ in \eqref{eqn:descartes}. $a_4\leq 0$
    if and only if $b\geq G(k)$, where
    \begin{align}
      G(k) := \frac{ E(k) + k^2 K(k)}{2 K(k)} \label{eqn:bgtGk}.
    \end{align}
    The search region is further shrunk by first checking whether or not $P_6$
    has two negative roots, counted with multiplicity. This check needs to be
    done numerically since the roots cannot be found analytically. The search
    region shown in Figure \ref{fig:whenToDoNumericalCheck} indicates where
    $P_6$ has two negative roots. From our numerical tests, fewer than 4\% of
    the grid points in the search region give rise to $P_6$ with negative roots,
    independent of grid spacing. Therefore, fewer than 4\% of the points are
    checked to satisfy $\RE(G(\xi_2))>0$.  Representative plots of
    $\RE(G(\zeta_r))$ near $b=F(k)$ are shown in Figure \ref{fig:bNearFk}. It is
    verified that, for a grid spacing of $10^{-10}$, the condition
    $\RE(G(\xi_2))>0$ is satisfied in the necessary domain. The numerical check
    can be removed if it can be shown that the minimum of $\RE(G(\zeta_r))$ at
    $\xi_2$ is monotonically increasing as $b$ increases from $F(k)$. We are
    not, however, able to prove that at this time.
   \end{remark}

\begin{figure}
  \begin{subfigure}[t]{0.4\textwidth}
    \begin{tikzpicture}[scale=1]
      \node[inner sep=0pt] (test) at (0,0)
      {\includegraphics[width=\textwidth]{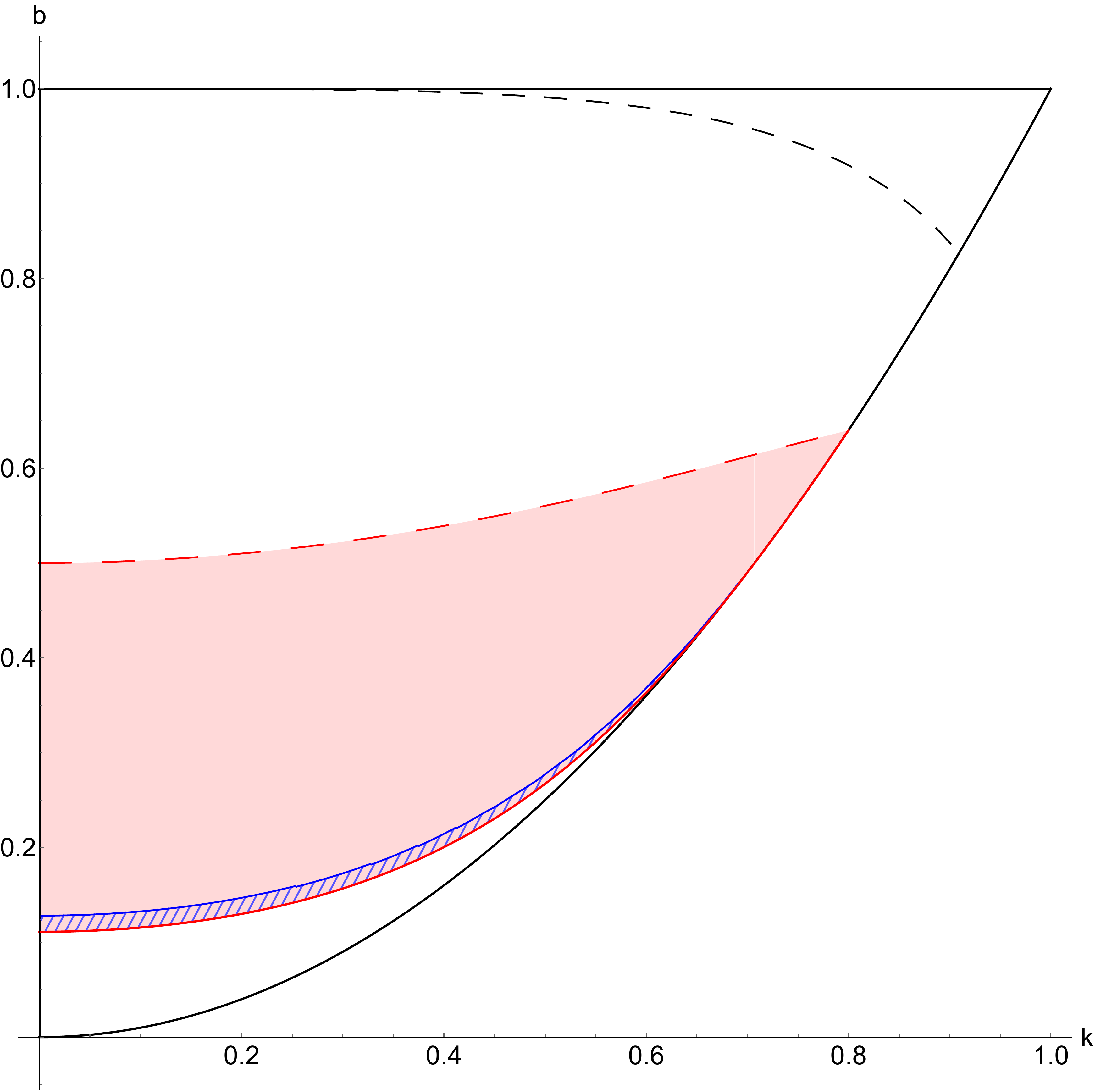}};
      \node (imagCurveAnchor) at (2,2.) {};
      \node[draw] (imagCurve) at (-0.8,2.) {$b=B(k)$, \eqref{eqn:Bk}};
      \draw[decoration={markings,mark=at position 1 with {\arrow[scale=2,
      >=stealth]{>}}}, postaction={decorate}] (imagCurve) -- (imagCurveAnchor) ;
      \node (GkCurveAnchor) at (-1, 0.0) {};
      \node[draw] (GkCurve) at (-0.8, 1.0) {$b = G(k)$, \eqref{eqn:bgtGk}};
      \draw[decoration={markings,mark=at position 1 with {\arrow[scale=2,
      >=stealth]{>}}}, postaction={decorate}] (GkCurve) -- (GkCurveAnchor) ;
      \node (FkCurveAnchor) at (-1.5,-2.05) {};
      \node[draw] (FkCurve) at (1.1, -2.4) {$b = F(k)$, \eqref{eqn:bltFk}};
      \draw[decoration={markings,mark=at position 1 with {\arrow[scale=2,
      >=stealth]{>}}}, postaction={decorate}] (FkCurve) to [out=-180, in=-70,
      looseness=2] (FkCurveAnchor) ;
      \node[align=center, inner sep=0.8mm] (numerical) at (-1.2,-0.9)
            {Numerical search \\ needed here};
    \end{tikzpicture}
    \caption{\label{fig:whenToDoNumericalCheck}}
    \end{subfigure}
    \qquad
    \qquad
    \begin{subfigure}[t]{0.4\textwidth}
    \includegraphics[width=\textwidth]{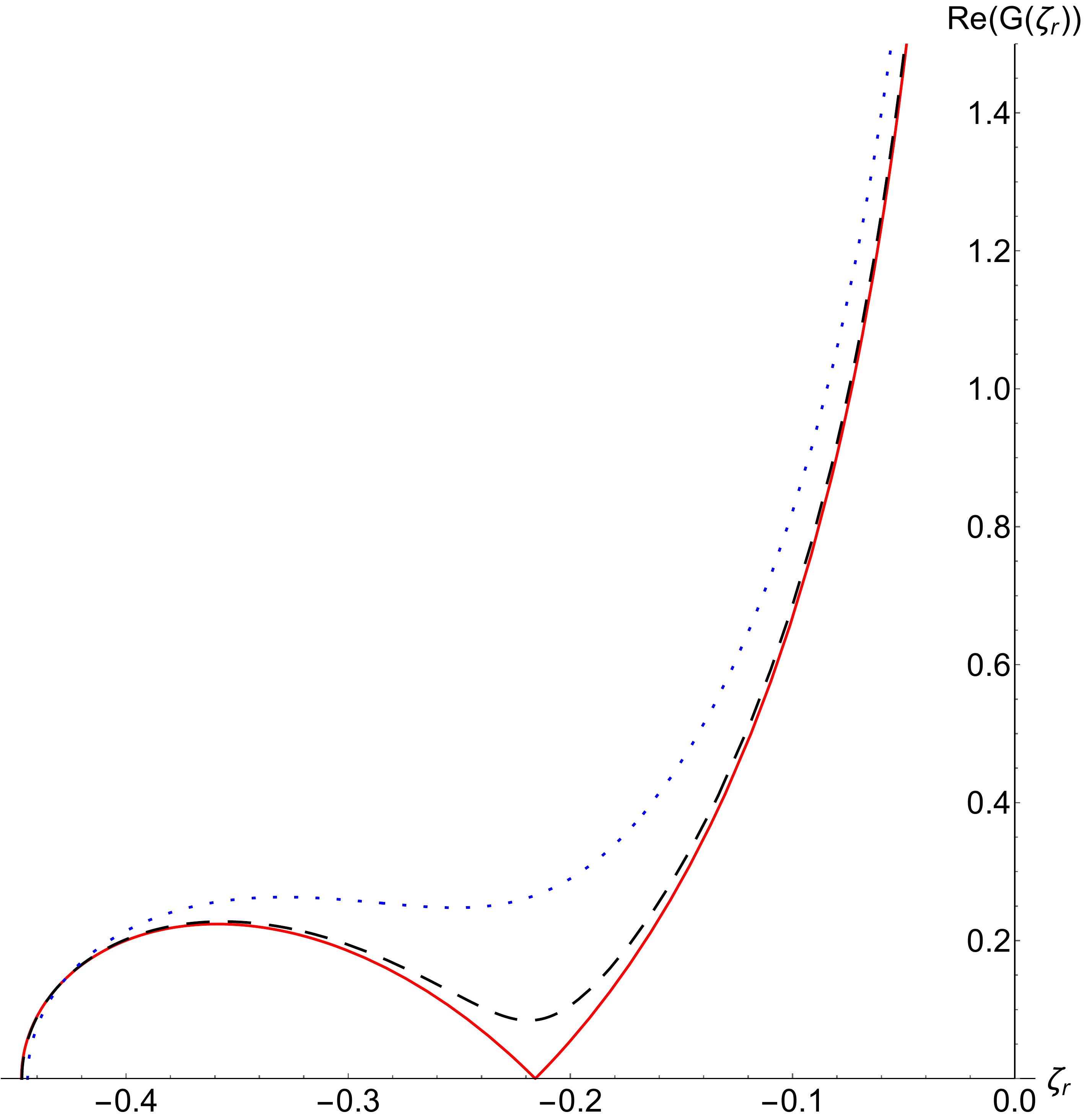}
    \caption{\label{fig:bNearFk}}
    \end{subfigure}
    \caption{(a) The parameter space with curves indicating when a numerical
      check to show that the condition \eqref{eqn:mainStabilityBound} in Theorem
      \ref{thm:mainNTPStabilityResult} is both necessary and sufficient.  For
      more details, see Lemma~\ref{lem:noCrossings}. The dashed blue region just
      above the line $b=F(k)$ indicates where $P_6$ has either 1 or 2 negative
      roots and hence where the numerical check takes place. (b)
      Plots of $\zeta_r$ \vs $\RE(G(\zeta_r))$ near $b=F(k)$ for $k=0.4$.
      The curve $b=F(k)$ is shown in solid red, $b=F(k)+0.001$ in dashed black,
      and $b=F(k) + 0.01$ in dotted blue. See Cases 2 and 3 in the proof of
      Lemma~\ref{lem:noCrossings}. The numerical check in case 2 is to determine
      whether $\RE(G(\zeta_r))=0$ anywhere for $b>F(k)$.
      \label{fig:numericalCheck}}
\end{figure}
\subsection{$\zeta_c\in \R$: an extension of Theorem
  \ref{thm:mainNTPStabilityResult} \label{Appendix:real}}
We first look at cases when $b\leq B(k)$ \eqref{eqn:Bk} so that $\zeta_c
\in \R$.

\begin{lemma}\label{lem:numOfIntersections}
  Let $b\leq B(k)$ so that $\zeta_c\in \R$. Then for $\zeta \in (\C^-\cap
  \sigma_L)\setminus\R$, $\Omega(\zeta)$ has an even number of intersections
  with the imaginary $\Omega$ axis.
\end{lemma}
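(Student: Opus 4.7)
The plan is to exploit the complex-conjugate symmetry of the Lax spectrum (Lemma \ref{lem:muSymmetry}) together with the fact that $\Omega^2(\zeta)$ is a polynomial in $\zeta$ with real coefficients. First, I would pin down the topology of the set under consideration. By Lemma \ref{lem:laxSpectrumTopology}, the portion of $\sigma_L$ in $\mathbb{C}^-$ away from $\mathbb{R}$ consists of a single band joining $\zeta_2$ and $\zeta_3=\overline{\zeta_2}$, and by Lemma \ref{lem:LaxSpectrumIntersection}, under the assumption $b\le B(k)$ this band crosses $\mathbb{R}$ at the unique point $-\zeta_c$. Thus $(\mathbb{C}^-\cap\sigma_L)\setminus\mathbb{R}$ splits into an open upper arc $\mathcal{U}$ from $\zeta_2$ down to (but not including) $-\zeta_c$ and a lower arc $\mathcal{L}$ from (but not including) $-\zeta_c$ down to $\zeta_3$. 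Lemma \ref{lem:muSymmetry} shows that $\mathcal{L}=\{\bar\zeta:\zeta\in\mathcal{U}\}$, so conjugation is an involution swapping the two arcs.

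Next, I would observe that since the polynomial $\Omega^2(\zeta)=-\zeta^4+\omega\zeta^2+c\zeta-\tfrac{1}{16}(4\omega b+3b^2+(1-k^2)^2)$ has real coefficients, one has $\Omega^2(\bar\zeta)=\overline{\Omega^2(\zeta)}$. Because the condition $\Omega(\zeta)\in i\mathbb{R}$ is equivalent to $\Omega^2(\zeta)\in\mathbb{R}^{\le 0}$, which is invariant under conjugation of the argument, the involution $\zeta\mapsto\bar\zeta$ restricts to a bijection
\[
\{\zeta\in\mathcal{U}:\Omega(\zeta)\in i\mathbb{R}\}\;\longleftrightarrow\;\{\zeta\in\mathcal{L}:\Omega(\zeta)\in i\mathbb{R}\}.
\]
The use of $\Omega^2$ rather than $\Omega$ is important here because it sidesteps any ambiguity in the choice of branch of $\Omega$ along the band. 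Since $\mathrm{Re}(\Omega^2)$ and $\mathrm{Im}(\Omega^2)$ are polynomials, their common zero set is finite, so the intersection count on each arc is finite.

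Combining the two steps, the total number of intersections equals twice the number on $\mathcal{U}$, hence is even. The only thing to be careful about is that the deleted midpoint $-\zeta_c$ is its own conjugate, so excluding it from both arcs does not produce an unpaired intersection; similarly, the endpoints $\zeta_2$ and $\zeta_3=\overline{\zeta_2}$, where $\Omega=0\in i\mathbb{R}$, form a conjugate pair and together contribute the even count $2$. The main obstacle is essentially bookkeeping, making sure that the pairing accounts for endpoints and for the excluded real crossing; there is no hard analytic estimate involved.
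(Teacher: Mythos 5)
Your pairing argument establishes only the weakest possible reading of the lemma: that the \emph{total} number of points of $(\C^-\cap\sigma_L)\setminus\R$ at which $\Omega\in i\R$, summed over the two conjugate arcs, is even. That statement is nearly vacuous — note that your argument never really uses $b\le B(k)$ or $\zeta_c\in\R$ except to describe the topology — and it is compatible with exactly the situation the lemma exists to exclude: one interior intersection $\hat\zeta$ on the upper arc together with its conjugate $\bar{\hat\zeta}$ on the lower arc gives an even total of two. The lemma is invoked in Case 4 of the proof of Lemma~\ref{lem:noCrossings} to upgrade ``at most one zero of $\RE(G(\zeta_r))$ along a single sign-branch of the parameterization \eqref{eqn:parameterizedOmegaLT0}'' to ``no zeros,'' and for that one needs the parity statement \emph{along each arc separately} (equivalently along each branch). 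A conjugation symmetry argument cannot deliver this, because the two members of each conjugate pair lie on different arcs. The paper's proof supplies precisely the per-arc parity, and this is where the analytic content lives: it follows the image $\lam=2\Omega(\zeta)$ along the single arc from $\zeta_3$ to $-\zeta_c$, uses the tangent line of $\sigma_{\cL}$ at the origin to show the image leaves $\lam=0$ into the open first quadrant, and uses the expansion $\Omega(-\zeta_c+i\delta_i)=\Omega(-\zeta_c)+\tfrac{i\delta_i}{2\Omega(-\zeta_c)}\lt(4\zeta_c^3-2\omega\zeta_c+c\rt)$ together with the sign computation $4\zeta_c^3-2\omega\zeta_c+c\ge 0$ (this is exactly where $b\le B(k)$ enters) to show the image arrives at the imaginary $\lam$ axis from the first quadrant. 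Since $\RE\lam$ is positive just after leaving one endpoint and just before reaching the other, its zeros along the open arc occur in even number. None of this input appears in your proposal, so the step your lemma is needed for would fail.

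Two smaller points. First, your finiteness remark is off: the common zero set of $\RE(\Omega^2)$ and $\IM(\Omega^2)$ is just the four roots of $\Omega^2$, which is not the set being counted; what must be finite is the intersection of the spectral band with the one-dimensional curve $\{\IM(\Omega^2)=0,\ \RE(\Omega^2)\le 0\}$, and in the paper this finiteness comes out of the case analysis of Lemma~\ref{lem:noCrossings} (zeros of $\RE(G)$ are controlled by the polynomials $P_6$ and $Q$), not from a dimension count. Second, your decision to count the endpoints $\zeta_2,\zeta_3$ (where $\Omega=0$) as two intersections is at odds with the counting actually used in the paper, whose proof speaks of $0$, $1$, or $2$ crossings and pairs with the interior zeros of $\RE(G)$; this does not affect parity under your pairing, but it is another sign that the statement you proved is not the statement being used.
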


\begin{proof}
  We note that for $\zeta \in (\C^- \cap \sigma_L)\setminus \R$, $\Omega(\zeta)$
  has 0, 1, or 2 intersections with the imaginary axis by Lemma~
  \ref{lem:noCrossings}. The tangent line to $\sigma_{\cL}$ at the origin is
  given by \cite[equation (104)]{deconinck2017stability},
  \begin{align}
    \DD{\Omega_i}{\Omega_r} &= \pm \frac{(2c -
      \sqrt{1-b}(k^2-2b))E(k)}{(\sqrt{b-k^2}+\sqrt{b})(1 +
      \sqrt{b(b-k^2)}-b)E(k) + (1-k^2)K(k)},
  \end{align}
  with $+$ corresponding to $\zeta_3$ and $-$
  corresponding to $\zeta_2$. It follows that for $\zeta$ near $\zeta_3$ on
  $\sigma_L$, the stability spectrum enters the 1st quadrant of the $\lam$
  plane. For $\zeta\in \sigma_L\setminus\R$ near $-\zeta_c\in\R$, $\zeta =
  -\zeta_c + i\delta_i + \0{\delta_i^2}$ where $\delta_i\in\R$ is a small
  perturbation parameter \cite[equation (150)]{deconinck2017stability}. A short
  calculation gives
  \begin{align}
    \Omega(-\zeta_c + i\delta_i) &= \Omega(-\zeta_c) +
    \frac{i}{2}\frac{\delta_i}{\Omega(-\zeta_c)}(4\zeta_c^3 - 2\omega\zeta_c +
  c),
  \end{align}
  where $\Omega(-\zeta_c)\in i\R$. Then
  \begin{align}
    \begin{split}
    4 \zeta_c^3 - 2\omega\zeta_c + c &=
    \sqrt{\frac{2E(k) - (b-k^2+1)K(k)}{2K^3(k)}}
    \lt( 4E(k) + K(k)(b+k^2-3)\rt)\\
    &= \sqrt{\frac{2E(k) - (b-k^2+1)K(k)}{2K^3(k)}}\lt( k(k')^2\DD{K(k)}{k} +
    2E(k) - K(k)\rt) \geq 0,
    \end{split}
  \end{align}
  since $b<B(k)$. Since $\sigma_{\cL}$ enters the first quadrant from the origin
  and enters the imaginary axis from the first quadrant, it must have an even
  number of crossings with the imaginary axis.  In particular there must be
  either 0 or 2 crossings.
\end{proof}

\noindent Using Theorem \ref{thm:mainNTPStabilityResult}, Lemmas~\ref{lem:noCrossings} and
\ref{lem:numOfIntersections} imply that the condition
\eqref{eqn:mainStabilityBound} is both a necessary and sufficient condition for
\te{spectral} stability when $2E(k)-(1+b-k^2)K(k)\geq0$ by following the exact same proof as
for Theorem~\ref{thm:cnStabilityResult}.

\begin{theorem}
  The sufficient condition for \te{spectral} stability \eqref{eqn:mainStabilityBound} given in
  Theorem \ref{thm:mainNTPStabilityResult} is also necessary.
\end{theorem}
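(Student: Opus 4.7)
The plan is to mirror the necessity argument given for Theorem \ref{thm:cnStabilityResult}, with Lemma \ref{lem:noCrossingscn} replaced by Lemma \ref{lem:noCrossings} and with Lemma \ref{lem:numOfIntersections} supplying the topological control that was automatic in the cn case. I fix $k$ and $b \leq B(k)$, so that $\zeta_c \in \R$ and the hypotheses of Theorem \ref{thm:mainNTPStabilityResult} are in force; the band of $\sigma_L$ in the left half plane connects $\zeta_2$ to $\zeta_3$ through $-\zeta_c \in \R$ by Lemma \ref{lem:LaxSpectrumIntersection} and Lemma \ref{lem:laxSpectrumTopology}. I will argue by contrapositive: assuming $M(-\zeta_c)>2\pi/P$, I will exhibit a point on this band whose image under $2\Omega$ is an unstable eigenvalue corresponding to a $P$-subharmonic perturbation.

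First I would assemble the facts about $M$ along the arc from $\zeta_2$ to $-\zeta_c$. Lemma \ref{lem:0atroots} gives $M(\zeta_2)=0 \bmod 2\pi$, and the turning-point statement \cite[Lemma 9.2]{deconinck2017stability} together with Lemma \ref{lem:muSymmetry} forces $M$ to be strictly monotone along the arc until the turning point at $-\zeta_c$ is reached. Lemma \ref{lem:NoDoublePointsSpine} then pins $M$ into $(0,2\pi)$ strictly on the interior, so $M$ traverses the interval $(0,M(-\zeta_c))$ bijectively along the interior of the arc. If $M(-\zeta_c)>2\pi/P$, the intermediate value $2\pi/P$ is attained at some interior $\zeta_\ast$.

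Next I would verify that $\zeta_\ast$ produces a genuine instability. The point $\zeta_\ast$ lies in $(\C^-\cap \sigma_L)\setminus(\R\cup i\R \cup\{\zeta_2,\zeta_3\})$, so Lemma \ref{lem:noCrossings} yields $\Omega(\zeta_\ast)\notin i\R$, hence $\RE\bigl(2\Omega(\zeta_\ast)\bigr)\neq 0$. The corresponding Floquet parameter is $T(k)\mu(\zeta_\ast)=M(\zeta_\ast)=2\pi/P$, which is exactly the $m=1$ value in \eqref{eqn:muDefn} for a $P$-subharmonic perturbation. Thus $\lam(\zeta_\ast)=2\Omega(\zeta_\ast)\in\sigma_\mu$ has nonzero real part, contradicting spectral stability with respect to $P$-subharmonic perturbations.

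The role of Lemma \ref{lem:numOfIntersections} in this chain is to legitimize the use of Lemma \ref{lem:noCrossings} by ruling out the remaining Case 4 in that lemma's proof; once Lemma \ref{lem:noCrossings} is available, the argument above is purely combinatorial in $M$. The main obstacle hiding in this clean outline is the numerical verification in Case 2 of Lemma \ref{lem:noCrossings} on the region shown in Figure \ref{fig:numericalCheck}; absent that check, necessity can only be asserted outside that bounded parameter window. Once necessity is in hand, the symmetric argument starting from $\zeta_1$ and $\zeta_4$ (which yields the weaker bound $M(\zeta_c)\leq 2\pi/P$, as noted in the proof of Theorem \ref{thm:mainNTPStabilityResult}) adds no further constraint, and the theorem follows.
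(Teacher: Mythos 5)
Your proposal is correct and follows essentially the same route as the paper: the paper's proof likewise reduces necessity to Lemma~\ref{lem:noCrossings} (with Lemma~\ref{lem:numOfIntersections} eliminating the problematic case in its proof) and then runs the same monotonicity-of-$M$ argument used for Theorem~\ref{thm:cnStabilityResult} along the band through $-\zeta_c$, so that $M$ sweeps $(0,M(-\zeta_c))$ and any value $2\pi/P < M(-\zeta_c)$ is hit by an eigenvalue with nonzero real part. You also correctly flag, as the paper does, that for $\max(k^2,F(k))<b<G(k)$ the conclusion inherits the numerical check from Case 2 of Lemma~\ref{lem:noCrossings}.
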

\begin{proof}
  Using Lemma~\ref{lem:noCrossings} we see that $\Omega(\zeta)\in i\R$ for
  $\zeta \in \sigma_L\cap \C^-$ if and only if $\zeta \in \R
  \cup\{\zeta_1,\zeta_2\}$. This means that the bound
  \eqref{eqn:mainStabilityBound} is a necessary and sufficient condition for
  \te{spectral} stability. When $\max(k^2,F(k))< b < G(k)$, Lemma~ \ref{lem:noCrossings}
  relies upon a numerical check.
\end{proof}

\begin{remark}
    If one is not pleased working with the numerical check, then the results in
    this appendix only change in the following manner. The bound
    \eqref{eqn:mainStabilityBound} still determines which solutions are
    spectrally stable with respect to perturbations of period $PT(k)$. It still
    follows that if $Q<P$ and a solution is stable with respect to perturbations
    of period $PT(k)$, then this solution is also \te{spectrally} stable with respect to
    perturbations of period $QT(k)$. The results in the appendix are only needed
    to rule out \te{spectral} stability with respect to other perturbations, \eg perturbations
    with period $RT(k)$ for $R>P$.
\end{remark}

\subsection{A proof of Theorem \ref{thm:imaginaryNTPStabilityResult}, $\zeta_c
\in i\R$ \label{Appendix:imaginary}}
In this subsection we present the details needed to establish Theorem
\ref{thm:imaginaryNTPStabilityResult}.
\begin{lemma}\label{lem:RHSCrossings}
    Let $c\neq 0$, $\zeta_c \in i\R$, and $\zeta\neq \zeta_1$ be in the open
    first quadrant. Then $\Omega(\zeta)\in i\R$ for at most one value of
    $\zeta\in \sigma_L$.
\end{lemma}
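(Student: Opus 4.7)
The plan is to mirror the proof of Lemma~\ref{lem:noCrossings}, but now in the open first quadrant in the parameter regime $\zeta_c\in i\R$. For $\zeta=\zeta_r+i\zeta_i$ with $\zeta_i\neq 0$, the locus $\IM(\Omega^2)=0$ is again parameterized by $\zeta_i^2=\zeta_r^2-\omega/2-c/(4\zeta_r)$, and the additional requirement $\RE(\Omega^2)\leq 0$ cuts out the subset on which $\Omega(\zeta)\in i\R$. Along this curve the spectral condition $\zeta\in\sigma_L$ reduces to $\RE G(\zeta_r)=0$, where $G(\zeta_r):=I(\zeta_r+i\zeta_i(\zeta_r))$, so the problem becomes one of counting positive zeros of $\RE G$ distinct from $\zeta_r=\sqrt{1-b}/2=\RE\zeta_1$, where $\RE G$ already vanishes since $\Omega(\zeta_1)=0$ and hence $\RE I(\zeta_1)=0$ (note that $\zeta_1\in\sigma_L$ and Lemma~\ref{lem:0atroots} applies).

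The derivative calculation from the proof of Lemma~\ref{lem:noCrossings} carries over verbatim, yielding the identity $\Omega_i\zeta_r\sqrt{Q(\zeta_r)}\,\DD{\RE G}{\zeta_r}=P_6(\zeta_r)$ with $Q$ and $P_6$ as in \eqref{eqn:Qzr} and \eqref{eqn:descartes}. Taking $c>0$ without loss of generality, one has $a_6<0$, $a_3>0$, $a_1\geq 0$ (via the same $\DD{K}{k}$ inequality used there), and $a_0<0$, so Descartes' rule of signs applied at positive arguments yields at most two positive roots of $P_6$, regardless of the sign of $a_4$. Since $P_6(0)<0$ and $P_6\to-\infty$ as $\zeta_r\to\infty$, the number of positive roots is even, hence either $0$ or exactly $2$.

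A case analysis analogous to that of Lemma~\ref{lem:noCrossings} then controls the zero set of $\RE G$. If $P_6$ has no positive roots, $\RE G$ is monotone on the interior of the valid parameterization interval, so its only zero there is at $\RE\zeta_1$ and the claim is trivial. If $P_6$ has two distinct positive roots $\xi_1<\xi_2$, then $\RE G$ has one interior local extremum of each type, and the additional input I would use to close the argument is that the band of spectrum in the first quadrant must terminate at $\zeta_c\in i\R$ (Lemma~\ref{lem:laxSpectrumTopology}, together with the fact that $\zeta_c\notin\R$ precludes a real-axis crossing by Lemma~\ref{lem:LaxSpectrumIntersection}). This fixes the boundary value of $\RE G$ at the far end of the parameterization, and together with the reflection symmetry from Lemma~\ref{lem:muSymmetry} prunes the admissible number of additional zeros of $\RE G$ to at most one.

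The hardest step will be the two-positive-root case, in which $\RE G$ is cubic-like and could a priori produce two zeros beyond the one at $\RE\zeta_1$. I expect that ruling out the second extra zero cannot be done by purely analytic means across the full parameter region and will, as in Lemma~\ref{lem:noCrossings}, require a numerical verification on the bounded sublocus where $a_4>0$ and $P_6$ genuinely has two positive roots. Matching $\RE G$ at the degenerate endpoint $\zeta_c$ on the imaginary axis with its value on the purely imaginary segment of the parameterization will also need careful handling.
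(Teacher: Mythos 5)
There is a genuine gap: you reproduce the right skeleton (the parameterization \eqref{eqn:parameterizedOmegaLT0} of the locus $\IM(\Omega^2)=0$ in the first quadrant, the derivative identity \eqref{eqn:derivativeRelation}, and the correct Descartes count showing $P_6$ has at most two positive zeros), but you miss the single observation that makes the paper's argument close \emph{analytically}, and as a result you wrongly conclude that a numerical check is needed. For $\zeta_r>0$ the constraint $\zeta_i^2\geq 0$ in \eqref{eqn:parameterizedOmegaLT0} is $Q(\zeta_r)\geq 0$ with $Q$ as in \eqref{eqn:Qzr}, and $Q$ has exactly one positive zero $\hat\zeta$ (one sign change in its coefficients for either sign of $\omega$); together with $\RE(\Omega^2)\leq 0$ this confines the parameterization to $\zeta_r\in[\hat\zeta,\sqrt{1-b}/2]$. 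The decisive point is that $\RE G$ vanishes at \emph{both} endpoints: at $\zeta_r=\sqrt{1-b}/2$ because that is the root $\zeta_1$ of $\Omega$, and at $\zeta_r=\hat\zeta$ because there $\zeta_i=0$, so the point lies on $\R\subset\sigma_L$ and $\RE(I)=0$ (Lemma~\ref{lem:laxSpectrumTopology}). With two endpoint zeros, if $\RE G$ had two or more zeros in the open interval, Rolle's theorem would force at least three interior zeros of $\d\RE G/\d\zeta_r$, hence (since $\Omega_i\zeta_r\sqrt{Q}\neq 0$ there) at least three positive zeros of $P_6$, contradicting your own Descartes bound. That is the entire proof: no case analysis on the roots of $P_6$ and no numerics.

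Your substitute boundary input is also off-target. The curve you are parameterizing is the level set $\IM(\Omega^2)=0$, not the band of Lax spectrum, so facts about where the band terminates ($\zeta_c\in i\R$, Lemma~\ref{lem:LaxSpectrumIntersection}) and the reflection symmetry of Lemma~\ref{lem:muSymmetry} (a symmetry across $\IM\zeta=0$, irrelevant inside the open first quadrant) do not supply the value of $\RE G$ at the end of the parameterization; the curve's lower endpoint is $\hat\zeta$ on the positive real axis, not a point near $\zeta_c$ on the imaginary axis. The reason Lemma~\ref{lem:noCrossings} needed a numerical verification is precisely that in the left half plane only one endpoint of the parameterization interval is a zero of $\RE G$ (the function blows up as $\zeta_r\to 0^-$), whereas here both endpoints are zeros; the paper even stresses later, in the proof of Theorem~\ref{thm:imaginaryNTPStabilityResult}, that no numerical check is required in the regime $b>B(k)$.
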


\begin{proof}
    The proof is similar to that of Lemma~\ref{lem:noCrossings} with the
    following changes. Here $Q(\zeta_r)$ always has one zero for $\zeta_r>0$.
    Call this zero $\hat\zeta$. Then the parameterization
    \eqref{eqn:parameterizedOmegaLT0} is valid for
    $\zeta_r\in[\hat\zeta,\sqrt{1-b}/2]$. We find that $P_6(\zeta_r)$ has at
    most two positive zeros by Descartes' sign rule. Since $P_6(\zeta_r)$ has at
    most two positive zeros and we know that $\RE(G( \hat \zeta)) =
    \RE(G(\sqrt{1-b}/2))=0$, it follows that there is at most one other
    $\zeta_r$ value at which $\RE(G)=0$.
\end{proof}

\begin{proof}[\textbf{Proof of Theorem~\ref{thm:imaginaryNTPStabilityResult}}]
  We note that if $2E(k) - (1+b-k^2)K(k)<0$, then $\zeta_c \in i\R$ and it must
  be that $\sigma_L$ intersects $i\R \setminus\{0\}$
  (Lemma~\ref{lem:laxSpectrumTopology}, see Figure
  \ref{fig:LaxSpectrumPics}(iv)). Let $\hat\zeta\in i\R\setminus\{0\}$ be the
  intersection point of $\sigma_L$ and $i\R\setminus\{0\}$. Since
  $\RE(\hat\zeta)=0$ and $\IM(\hat\zeta)\neq0$, \eqref{eqn:ReImOmega} implies
  that $\Omega(\hat\zeta)\notin i\R$. By \eqref{eqn:vfsymmetry}, $M(\zeta)$ is
  increasing on $(\zeta_2,\zeta_1)$ except perhaps at $\zeta_c$ if $\zeta_c\in
  \sigma_L$. In any case, since $M(\zeta_2) = M(\zeta_1)=0\mod{2\pi}$,
  $M(\zeta)$ traces out all of $T(k)\mu \in (0,2\pi)$. By Lemma~
  \ref{lem:noCrossings}, $\RE(\lam)>0$ for $\zeta \in (\zeta_2,\hat\zeta]$. By
  Lemma~\ref{lem:RHSCrossings}, $\RE(\lam)=0$ at most at one point in the band
  connecting $\zeta_2$ to $\zeta_1$. Since we need $\RE(\lam)=0$ for $P-1$
  different $\mu$ values different from $0$ for stability by \eqref{eqn:muDefn},
  it follows that there can be stability at most for $P=2$. Since $P=2$
  corresponds to perturbations of twice the period, we have arrived at the
  desired result.

  Finally, we note that the above proof does not rely on the numerical check in
  Lemma~\ref{lem:noCrossings} since the curve $b=B(k)$ \eqref{eqn:Bk}
  always lies above the curve $b=G(k)$ \eqref{eqn:bgtGk} for $k^2<b<1$. To see
  this, we note that $B(k) > G(k)$ if and only if
  \begin{align}
    3E(k) - 2(k')^2 K - k^2 K(k) >0.
  \end{align}
  But
  \begin{align}
    3 E(k) - 2 (k')^2 K(k) -k^2 K(k)> \frac{\pi k^2}4\frac{ 2(1-k^2)^{-3/8} -
    (1-k^2/4)^{-1/2}}{(1-k^2/4)^{1/2}(1-k^2)^{3/8}} > 0
  \end{align}
  for $0<k<\tilde k \approx 0.941952$, where all estimates are found in
  \cite[Section 19.4]{NIST:DLMF}. It can be verified that both $B(\tilde
  k)<\tilde k^2$ and $G(\tilde k)<\tilde k^2$, so we have $B(k)>G(k)$ everywhere
  in the domain $k^2<b<1$, hence no numerical check is needed for solutions
  satisfying $b>B(k)$.
\end{proof}

\subsection{A proof of Lemma~
  \ref{lem:NoDoublePointsSpine}}\label{Appendix:NoDoublePointsProof}

\begin{proof}[\textbf{Proof of Lemma~\ref{lem:NoDoublePointsSpine}}]
  For the cn solutions and the NTP solutions with $b\leq F(k)$ \eqref{eqn:bltFk}
  or $b\geq G(k)$ \eqref{eqn:bgtGk}, Lemmas~\ref{lem:noCrossingscn}
  and~\ref{lem:noCrossings} imply that every $\zeta \in (\C^-\cap\sigma_L)
  \setminus \R$ gives rise to an unstable eigenvalue $\lam(\zeta)$. By
  \cite{gallay_haragus_OrbitalStability}, the elliptic solutions are
  \te{spectrally} stable with
  respect to coperiodic perturbations. Since coperiodic perturbations correspond
  to $T(k)\mu = 0 \mod{2\pi}$, we conclude that in the cases above $M(\zeta)\neq
  0$ for $\zeta$ on the complex bands of the Lax spectrum in the left half
  plane. It is left to show that the same result holds for the NTP solutions
  with $F(k)< b < G(k)$.

  By continuity, an eigenvalue with $T(k)\mu = 0 \mod{2\pi}$ (hereafter called a
  \textit{periodic eigenvalue}) can only enter a complex band by passing through
  the intersection of the complex band with the real axis. Since a periodic
  eigenvalue has $\RE(\Omega(\zeta)) = 0$ by
  \cite{gallay_haragus_OrbitalStability}, it must be the case that the curve
  \eqref{eqn:parameterizedOmegaLT0} intersects the complex band at a periodic
  eigenvalue. Since the intersection of \eqref{eqn:parameterizedOmegaLT0} and the
  complex band must occur immediately upon the periodic eigenvalue entering the
  band, it must be that the curve \eqref{eqn:parameterizedOmegaLT0} and the
  complex band intersect the real axis at the same location, $\zeta =- \zeta_c$
  \eqref{eqn:zetac}. The curve \eqref{eqn:parameterizedOmegaLT0} intersects the
  real axis when $Q(\zeta_r) = 0$ \eqref{eqn:Qzr}. But $Q(\zeta_r)=0$ only at
  the boundary of the region $F(k)< b < G(k)$, when $b=F(k)$. Therefore, in
  order to establish that no periodic eigenvalues enter the complex band, we
  must establish that the zero of $Q(\zeta_r)$ mentioned
  above is not equal to $- \zeta_c$.

  When $b=F(k),$ $Q(\zeta_r)$ has a double zero at $\zeta_r = \tilde\zeta_1<0$:
  \begin{align}
    Q(\zeta_r) &= 4(\zeta_r - \tilde \zeta_1)^2(\zeta_r - \tilde\zeta_2).
  \end{align}
  Comparing the above expression to \eqref{eqn:Qzr}, we find that $\tilde
  \zeta_1^2 = \omega/6$. But
  \begin{align}
    \zeta_c^2 - \tilde \zeta_1^2 &= 2 \lt( E(k) - \frac13(2-k^2)K(k)\rt)\\
    &\geq E(k) - \frac23 K(k) > \sqrt{1-k^2}K(k) - \frac23 K(k) >0,
  \end{align}
  for $k^2<5/9$ (the inequality used for $E(k)$ comes from \cite[Section
  19.9]{NIST:DLMF}).  Since $ b = F(k) < k^2$ only when $k^2 < 1/2 < 5/9$, we
  find that the intersection of $Q(\zeta_r)$ with the real line is well
  separated from the intersection of the complex band with the real line for all
  allowed $k$. It follows that no periodic points can enter the complex band in
  the left half plane. We finish the proof by noting that since $2\pi >
  M(-\zeta_c)>M(\zeta_c)$, periodic points also can not enter the complex band
  in the right half plane.

\end{proof}

\te{
\subsection{A proof of Theorem \ref{thm:squaredEigCompleteness} \label{Appendix:SQEFCompleteness}}

\begin{proof}[\textbf{Proof of Theorem~\ref{thm:squaredEigCompleteness}}]
  The proof is similar to the proof of \cite[Theorem~2]{bottman2011elliptic}. We
  provide details omitted there.

    For every $\lam \in \C$, \eqref{eqn:spectralProblem} can be written as a
    four-dimensional first-order system of ordinary differential equations. For
    each $\lam \in \C$, one value of $\Omega$ is obtained through $\Omega =
    \lam/2$. Defining
    \begin{align}
      \tilde Q_4(\zeta) := -\zeta^4 + \omega\zeta^2 + c\zeta -
      \frac1{16}\lt(4\omega b + 3b^2 + (1-k^2)^2\rt),
    \end{align}
    and
    \begin{align}
       Q_4(\Omega,\zeta) := \Omega^2 - \tilde Q_4(\zeta),
    \end{align}
    we let
    \begin{align}
      \cB := \{ \lam \in \C : \text{ the discriminant of
        $Q_4$ with respect to $\zeta$ vanishes}\}.
    \end{align}
    For $\lam \in \C\setminus \cB$, the zeros of $Q_4(\Omega,\zeta)$ give four
    values of $\zeta\in\C$. It is not necessary that each of these four values
    of $\zeta$ are in the Lax spectrum since this counting argument is
    independent of the Lax spectrum. The squared-eigenfunction connection
    \eqref{eqn:squaredEF} gives a solution to \eqref{eqn:spectralProblem} for
    each of the four $\zeta \in \C$. Therefore, \eqref{eqn:squaredEF} gives four
    solutions of the fourth-order problem \eqref{eqn:spectralProblem} for each
    $\lam \in \C \setminus \cB$. We first show that the four solutions obtained
    through \eqref{eqn:squaredEF} are linearly independent for $\lam \in
    \C\setminus \cB$, then later we will look at $\lam \in \cB$.

    Using the fact that
    \begin{align}
      B_x &= 2(-i\zeta B -\phi A),
    \end{align}
    the eigenfunctions \eqref{eqn:gamma(x)} may be written as
    \begin{align}
      \begin{split}
      \chi(x,t) &= e^{\Omega t} \begin{pmatrix} -B \\
        A-\Omega\end{pmatrix} \gamma_0 \exp\lt( -\int \lt( \frac{B_x}{2B} +
      \frac{\phi \Omega}{B}\rt)~\d x\rt)\\
      &= e^{\Omega t}\begin{pmatrix}-B \\ A-\Omega\end{pmatrix}
      \frac{\gamma_0}{B^{1/2}} \exp\lt( -\int \frac{\phi\Omega}{B}~\d x\rt).
    \end{split}
    \end{align}
    When $\lam \in \C\setminus(\cB \cup \{0\})$, the above gives four
    eigenfunctions, one for each $\zeta$. The four eigenfunctions have
    singularities at the zeroes of $B$. Since the zeros of $B$ depend on
    $\zeta$, the four eigenfunctions have different singularities in the complex
    $x$ plane for the four different values of $\zeta$. When $\Omega = 0$, there
    exist two bounded eigenfunctions \cite[Proposition
    3.2]{gallay_haragus_OrbitalStability}. Only one of these is obtained through
    \eqref{eqn:gamma(x)}.

    We now consider the six values of $\lam \in \cB$.
    The discriminant can only vanish in one of the following cases:
    \begin{enumerate}
      \item $Q_4 = (\zeta - \hat \zeta_1)(\zeta - \hat
        \zeta_2)(\zeta - \hat \zeta_3)^2 = 0,$
      \item $Q_4 = (\zeta - \hat \zeta_1)^2 (\zeta - \hat \zeta_2)^2 = 0,$
      \item $Q_4 = (\zeta - \hat \zeta_1) (\zeta - \hat \zeta_2)^3 = 0$, or
      \item $Q_4 = (\zeta - \hat \zeta_1)^4 = 0.$
    \end{enumerate}
    The zeros of $Q_4$ come from level sets of $\tilde Q_4(\zeta)$. Case 4 can
    only occur when the graph of $\tilde Q_4(\zeta)$ has one maximum.
    However, since we know from \eqref{eqn:zetaRoots} that all four roots of
    $\tilde Q_4(\zeta)$ cannot be equal, case 4 is not possible. Case 3 can also
    be ruled out since the four roots \eqref{eqn:zetaRoots} of $\tilde
    Q_4(\zeta)$ are real. Case 2 can only occur when two roots of
    \eqref{eqn:zetaRoots} collide, which can only occur for the cn or dn
    solutions. The stability of these cases has been determined
    \cite{gustafson2017stability} so they are not a concern here. Finally, case
    1 is possible. In case 1, only three values of $\zeta$ are determined from
    $\Omega$. In such a case, three linearly independent solutions of
    \eqref{eqn:spectralProblem} are found. The fourth is obtained using
    reduction of order and introduces algebraic growth so it is not an
    eigenfunction. Therefore in this case, all eigenfunctions are found using
    the squared-eigenfunction connection.

  \end{proof}
}

\end{appendices}

{\footnotesize
\bibliographystyle{siam}
\bibliography{mybib}}

\end{document}